\documentclass[11pt]{article} 
\usepackage[letterpaper,hmargin=1in,vmargin=1in]{geometry}

\def\ShowAuthNotes{1}
\usepackage[utf8]{inputenc}

\usepackage{float,url,hyperref,epic,eepic}
\usepackage{xcolor, enumerate}
\usepackage{graphicx}
\usepackage{amsmath, amsfonts, amssymb, amsthm, amscd}
\usepackage{calc}
\usepackage{algorithmic, algorithm}
\usepackage{listings}

\newcommand{\Pd}{\mathsf{P}}
\newcommand{\cA}{\mathcal{A}}\newcommand{\cB}{\mathcal{B}}
\newcommand{\cD}{\mathcal{D}}

\newcommand{\cH}{\mathcal{H}}

\newcommand{\cL}{\mathcal{L}}
\newcommand{\cM}{\mathcal{M}}

\newcommand{\cS}{\mathcal{S}}
\newcommand{\cU}{\mathcal{U}}
\newcommand{\cW}{\mathcal{W}}
\newcommand{\cY}{\mathcal{Y}}

\newcommand{\accept}{\text{accept}}
\newcommand{\reject}{\text{reject}}
\newcommand{\teps}{\ensuremath{\tilde{\varepsilon}}}

\DeclareMathOperator{\Size}{\mathrm{Size}}

\newcommand{\gH}{\ensuremath{g_{\text{H}}^{C,\alpha}}}
\newcommand{\gHs}{\ensuremath{g_{\text{H}}}}
\newcommand{\pH}{\ensuremath{p_{\text{H}}}}
\newcommand{\gUY}{\ensuremath{g_{\text{UY}}^{C,V,\alpha}}}
\newcommand{\gUYs}{\ensuremath{g_{\text{UY}}}}
\newcommand{\gYL}{\ensuremath{g_{\text{YL}}^{C,V,\alpha}}}
\newcommand{\gYLs}{\ensuremath{g_{\text{YL}}}}
\newcommand{\gYs}{\ensuremath{g_{\text{Y}}}}
\newcommand{\pYL}{\ensuremath{p_{\text{YL}}}}
\newcommand{\gUH}{\ensuremath{g_{\text{UH}}^{C,\alpha}}}
\newcommand{\gUHs}{\ensuremath{g_{\text{UH}}}}
\newcommand{\pUH}{\ensuremath{p_{\text{UH}}}}

\newcommand{\NTIME}{\mathsf{NTIME}}
\newcommand{\EXP}{\mathsf{EXP}}

\newcommand{\PiVerifyHist}{\ensuremath{\Pi^{\text{VerifyHist}}}}
\newcommand{\PiLB}{\ensuremath{\Pi^{\text{LB}}}}

\makeatletter
\newtheorem*{rep@theorem}{\rep@title}
\newcommand{\newreptheorem}[2]{%
\newenvironment{rep#1}[1]{%
 \def\rep@title{#2 \ref{##1}}%
 \begin{rep@theorem}}%
 {\end{rep@theorem}}}
\makeatother

\newcounter{hours}
\newcounter{minutes}
\newcommand{\printtime}{ %
        \setcounter{hours}{\time/60} %
        \setcounter{minutes}{\time-\value{hours}*60} %
        \ifthenelse{\value{hours}<10}{0}{}\thehours:%
        \ifthenelse{\value{minutes}<10}{0}{}\theminutes%
        } 

\theoremstyle{plain}
        \newtheorem{theorem}{Theorem}[section]
        \newreptheorem{theorem}{Theorem}
        \newtheorem{lemma}[theorem]{Lemma}
        \newreptheorem{lemma}{Lemma}
        
        \newtheorem{claim}[theorem]{Claim}

\theoremstyle{definition}
        \newtheorem{definition}[theorem]{Definition}

\theoremstyle{remark}

\DeclareMathOperator*{\Exp}{E}

\DeclareMathOperator*{\Wd}{\text{W1}}
\DeclareMathOperator*{\Wdr}{\stackrel{\longrightarrow}{\Wd}}
\DeclareMathOperator*{\Wdl}{\stackrel{\longleftarrow}{\Wd}}

\newcommand{\inu}{\ensuremath{\leftarrow}}
\DeclareMathOperator{\poly}{poly}
\newcommand{\eps}{\ensuremath{\varepsilon}}

%% Complexity Classes

\newcommand{\co}{\mathsf{co}}
\newcommand{\NP}{\mathsf{NP}}
\newcommand{\distNP}{\mathsf{distNP}}
\newcommand{\NPpoly}{\mathsf{NP}\mathrm{/poly}}
\newcommand{\BPP}{\mathsf{BPP}}

\newcommand{\PSPACE}{\mathsf{PSPACE}}
\newcommand{\IP}{\mathsf{IP}}
\newcommand{\AMpoly}{\mathsf{AM}\mathrm{/poly}}
\newcommand{\AM}{\mathsf{AM}}

%% Languages
\DeclareMathOperator{\SAT}{SAT}

%% Authornotes
\definecolor{DSgray}{cmyk}{0,0,0,0.7}

\definecolor{DSred}{cmyk}{0,0.7,0,0.7}

\ifnum\ShowAuthNotes=1
\newcommand{\Authornote}[2]{\noindent{\small\textcolor{DSgray}{\sf{
\textcolor{red}{[#1: #2]\marginpar{\textcolor{red}{\fbox{\Large !}}}}}}}}
\newcommand{\Authormarginnote}[2]{\marginpar{\parbox{2.2cm}{\raggedright\tiny \textcolor{red}{#1: #2}}}}
\else
\newcommand{\Authornote}[2]{}
\newcommand{\Authormarginnote}[2]{}
\fi

\begin{document}
\title{A New View on Worst-Case to Average-Case Reductions for NP Problems}

\author{
Thomas Holenstein\thanks{ETH Zurich, Department of Computer Science, 8092 Zurich, Switzerland. E-mail: {\tt thomas.holenstein@inf.ethz.ch}} \and
Robin K\"unzler\thanks{ETH Zurich, Department of Computer Science, 8092 Zurich, Switzerland. E-mail: {\tt robink@inf.ethz.ch}}}
\maketitle
\begin{abstract}
We study the result by Bogdanov and Trevisan (FOCS,
2003), who show that under reasonable assumptions, 
there is no non-adaptive % worst-case to average-case
reduction that bases the average-case hardness of an
$\NP$-problem on the worst-case complexity of an $\NP$-complete
problem.
We replace the hiding and the heavy samples protocol in [BT03] by employing the histogram verification protocol of Haitner, Mahmoody and Xiao (CCC, 2010), which proves to be very useful in this context. Once the histogram is verified, our hiding protocol is directly public-coin, whereas the intuition behind the original protocol inherently relies on private coins. 
\end{abstract}

\newpage
\tableofcontents

\newpage

\section{Introduction}
\textit{One-way functions} are functions that are easy to compute on any instance, and hard to invert on average. Assuming their existence allows the construction of a wide variety of secure cryptographic schemes. Unfortunately, it seems we are far from proving that one-way functions indeed exist, as this would imply $\BPP\neq\NP$. Thus, the assumption that $\NP\nsubseteq\BPP$, which states that there exists a worst-case hard problem in $\NP$, is weaker. The following question is natural: 
\begin{quote}
	\textbf{Question~1:} Does $\NP \nsubseteq \BPP$ 
	  imply that one-way functions (or other cryptographic 
		primitives) exist?
\end{quote}
A positive answer to this question implies that the security of the aforementioned cryptographic schemes can be based solely on the worst-case assumption $\NP \nsubseteq \BPP$.

Given a one-way function $f$ and an image $y$, the problem of finding a preimage $x\in f^{-1}(y)$ is an $\NP$-problem: provided a candidate solution $x$, one can efficiently verify it by checking if $f(x)=y$. In this sense, a one-way function provides an $\NP$ problem that is hard to solve on average, and Question~1 asks whether it can be based on worst-case hardness. 
Thus, the question is closely related to the study of \textit{average-case complexity}, and in particular to the set $\distNP$ of distributional problems $(L, \cD)$, where $L \in \NP$, and $\cD$ is an ensemble of efficiently samplable distributions over problem instances. We say that a $\distNP$ problem $(L, \cD)$ is hard if there is no efficient algorithm that solves the problem (with high probability) on instances sampled from $\cD$. In this setting, analogously to Question~1, we ask:
\begin{quote}
	\textbf{Question~2:} Does $\NP \nsubseteq \BPP$ 
		imply that there exists a hard problem in $\distNP$?
\end{quote}

A natural approach to answer Question~2 affirmatively is to give a so-called \textit{worst-case to average-case reduction} from some $\NP$-complete $L$ to $(L',\cD) \in \distNP$: such a reduction $R^O$ is a polynomial time algorithm with black-box access to an oracle $O$ that solves $(L', \cD)$ on average, such that $\Pr_R[R^O(x)=L(x)]\geq 2/3$. 
We say a reduction is \textit{non-adaptive} if the algorithm $R$ fixes all its queries to $O$ in the beginning (see Section~\ref{sec:PrelimWCtACRed} for a formal definition). Bogdanov and Trevisan \cite{BT06} (building on work by Feigenbaum and Fortnow \cite{FF93}) show that it is unlikely that a non-adaptive worst-to-average-case reduction exists:
\begin{quote}
	\textbf{Main result of \cite{BT06} (informal):}
	If there exists a non-adaptive worst-case to average-case
	reduction from an $\NP$-complete problem to a problem
	in $\distNP$, then $\NP\subseteq \co\NPpoly$.
\end{quote}
The consequence $\NP \subseteq \co\NPpoly$ implies a collapse of the polynomial hierarchy to the third level \cite{Yap83}, which is believed to be unlikely. 

The work of Impagliazzo and Levin \cite{IL90} and Ben-David et al.~\cite{BCGL92} shows that an algorithm that solves a problem in $\distNP$ can be turned (via a non-adaptive reduction) into an algorithm that solves the search version of the same problem. Thus, as inverting a one-way function well on average corresponds to solving a $\distNP$ search problem well on average, the result of \cite{BT06} also implies that Question~1 cannot be answered positively by employing non-adaptive reductions, unless the polynomial hierarchy collapses.

\subsection{Contributions of this Paper}
The proof of the main result in \cite{BT06} proceeds as follows. Assuming that there exists a non-adaptive worst-case to average-case reduction $R$ from an $\NP$-complete language $L$ to $(L',\cD)\in \distNP$, it is shown that $L$ and its complement both have a constant-round interactive proof with advice (i.e.~$L$ and $\overline{L}$ are in $\AMpoly$ according to Definition~\ref{def:InteractiveProtocol}). As $\AMpoly = \NPpoly$, this gives $\co\NP \subseteq \NPpoly$. The final $\AMpoly$ protocol consists of three sub-protocols: the heavy samples protocol, the hiding protocol, and the simulation protocol. 
Using the protocol to verify the histogram of a probability distribution by Haitner et al.~\cite{HMX10}, we replace the heavy samples protocol and the hiding protocol. 
Our protocols have several advantages. The heavy samples protocol becomes quite simple, as one only needs to read a probability from the verified histogram. Furthermore, once the histogram is verified, our hiding protocol is directly public-coin, whereas the intuition behind the original hiding protocol crucially uses that the verifier can hide its randomness from the prover. Our protocol is based on a new and different intuition and achieves the same goal. Clearly, one can obtain a public-coin version of the original hiding protocol by applying the Goldwasser-Sipser transformation \cite{GS86}, but this might not provide a different intuition. 
Finally, our protocols show that the histogram verification protocol of \cite{HMX10} is a very useful primitive to approximate probabilities using $\AM$-protocols.

\subsection{Related Work}
Recall that our Question~2 above asked if average-case hardness can be based on the worst-case hardness of an $\NP$-complete problem. The question if cryptographic primitives can be based on $\NP$-hardness was stated as Question~1.

\vspace{0.3cm}
\noindent\textit{Average-case complexity.}
We use the definition of $\distNP$ and average-case hardness from \cite{BT06}. The hardness definition is essentially equivalent to Impagliazzo's notion of heuristic polynomial-time algorithms \cite{Imp95}. 
We refer to the surveys of Impagliazzo \cite{Imp95}, Goldreich \cite{Gol97}, and Bogdanov and Trevisan \cite{BT06avg} on average-case complexity. 

\vspace{0.3cm}
\noindent\textit{Negative results on Question~2.}
Feigenbaum and Fortnow \cite{FF93} study a special case of worst-case to average-case reductions, called random self-reductions. Such a reduction is non-adaptive, and reduces $L$ to itself, such that the queries are distributed uniformly at random (but not necessarily independently). They showed that the existence of a random self-reduction for an $\NP$-complete problem is unlikely, as it implies $\co\NP \subseteq \NPpoly$ and the polynomial hierarchy collapses to the third level. This result generalizes to the case of non-adaptive  reductions from $L\in \NP$ to $L' \in \distNP$ where the queries are distributed according to a distribution $\Pd$ that does not depend on the input $x$ to the reduction, but only on the length of $x$. 

The study of random self-reductions was motivated by their use to design interactive proof systems and (program-) checkers\footnote{
	Checkers allow to ensure the correctness of a given program on an input-by-input basis. Formally, a checker is an efficient algorithm $C$ that, given oracle access to a program $P$ which is supposed to decide a language $L$, has the following properties for any instance $x$. Correctness: If $P$ is always correct, then $C^P(x) = L(x)$ with high probability. Soundness: $C^P(x) \in \{L(x),\bot\}$ with high probability. 
}. 
Checkers are introduced by Blum and Blum and Kannan \cite{Blu88, BK95}. Rubinfeld \cite{Rub90} shows that problems that have a random self-reduction and are downward self-reducible (i.e.~they can be reduced to solving the same problem on smaller instances) have a program checker.
Random self-reductions can be used to prove the worst-case to average-case equivalence of certain $\PSPACE$-complete and $\EXP$-complete problems \cite{STV01}. A long-standing open question is whether $\SAT$ is checkable. In this context, Mahmoody and Xiao \cite{MX10} show that if one-way functions can be based on $\NP$-hardness via a randomized, possibly adaptive reduction, then $\SAT$ is checkable. 

In the context of program checking, Blum et al.~\cite{BLR93} introduce the notion of self-correctors. A self-corrector is simply a worst-case to average-case reduction from $L$ to $(L',\cD)$, where $L=L'$. Clearly, a random self-reduction is also a self-corrector.

As discussed earlier, based on \cite{FF93}, Bogdanov and Trevisan \cite{BT06} show that the average-case hardness of a problem in $\distNP$ cannot be based on the worst-case hardness of an $\NP$-complete problem via non-adaptive reductions (unless the polynomial hierarchy collapses). In particular, this implies that $\SAT$ does not have a non-adaptive self-corrector (unless the polynomial hierarchy collapses). It is an important open question if the same or a similar result can be proved for adaptive reductions. 

Watson \cite{Wat10} shows that there exists an oracle $O$ such that there is no worst-case to average-case reduction for $\NP$ relative to $O$. Impagliazzo \cite{Imp11} then gives the following more general result: any proof that gives a positive answer to Question~2 must use non-relativizing techniques. More precisely, it is shown that there exists an oracle $O$ such that $\NP^O\nsubseteq\BPP^O$, and there is no hard problem in $\distNP^O$. Note that this does not rule out the existence of a worst-case to average-case reduction, as such reductions do not necessarily relativize. In particular, the result of Bogdanov and Trevisan \cite{BT06} also applies to reductions that are non-adaptive and do \textit{not} relativize: there is no such reduction, unless the polynomial hierarchy collapses.

%bogdanov trevisan rules out reductions s.t.: is black-box, does not relativize, is non-adaptive. for example, such a r-s-r could exist. 
%The reduction gets acces to a avgSAT oracle, and decides SAT whp. 
%Now do I get a reduction that works if get access to a avgSAT^O oracle, and decides SAT^O whp? Its not clear how i would do that in general...
\vspace{0.3cm}
\noindent\textit{Negative results on Question~1.}
This question goes back to the work of Diffie and Hellman \cite{DH76}. Even and Yacobi \cite{EY80} give a cryptosystem that is $\NP$-hard to break. However, their notion of security requires that the adversary can break the system in the worst-case (i.e.~for every key). Their cryptosystem can in fact be broken on most keys, as shown by Lempel \cite{Lem79}. It is now understood that breaking a cryptosystem should be hard on average, which is, for example, reflected in the definition of one-way functions. 

Brassard \cite{Bra79} shows that public-key encryption cannot be based on $\NP$-hardness in the following sense: under certain assumptions on the scheme, if breaking the encryption can be reduced to deciding $L$, then $L \in \NP \cap \co\NP$. In particular, if $L$ is $\NP$-hard this implies that $\NP=\co\NP$. Goldreich and Goldwasser \cite{GG98} show the same result under relaxed assumptions. 

To give a positive answer to Question~$1$, one can aim for a reduction from an $\NP$-complete problem to inverting a one-way function well on average (see for example \cite{AGGM06} for a formal definition). As discussed earlier, the work of Impagliazzo and Levin \cite{IL90} and Ben-David et al.~\cite{BCGL92} allows to translate the results of \cite{FF93} and \cite{BT06} to this setting. That is, there is no non-adaptive reduction from an $\NP$-complete problem $L$ to inverting a one-way function, unless the polynomial hierarchy collapses to the third level. Akavia et al.~\cite{AGGM06} directly use the additional structure of the one-way function to prove that the same assumption allows the stronger conclusion $\co\NP \subseteq \AM$, which implies a collapse of the polynomial hierarchy to the second level. 

Haitner et al.~\cite{HMX10} show that if constant-round statistically hiding commitment can be based on an $\NP$-complete problem via $O(1)$-adaptive reductions (i.e.~the reduction makes a constant number of query rounds), then $\co\NP \subseteq \AM$, and the polynomial hierarchy collapses to the second level. In fact, they obtain the same conclusion for any cryptographic primitive that can be broken by a constant-depth collision finding oracle (such as variants of collision resistant hash functions and oblivious transfer). They also obtain non-trivial, but weaker consequences for $\poly(n)$-adaptive reductions. 

Bogdanov and Lee \cite{BL13} explore the plausibility of basing homomorphic encryption on $\NP$-hardness. They show that if there is a (randomized, adaptive) reduction from some $L$ to breaking a homomorphic bit encryption scheme (that supports the evaluation of any sufficiently ``sensitive'' collection of functions), then $L\in \AM \cap \co\AM$. In particular, if $L$ is $\NP$-complete this implies a collapse of the polynomial hierarchy to the second level. 

\vspace{0.3cm}
\noindent\textit{Positive results.}
We only know few problems in $\distNP$ that have worst-case to average-case reductions where the worst-case problem is believed to be hard. Most such problems are based on lattices, and the most important two are the short integer solution problem (SIS), and the learning with errors problem (LWE). 

The SIS problem goes back to the breakthrough work of Ajtai \cite{Ajt96}. 
He gives a reduction from an approximate worst-case version of the shortest vector problem to an average-case version of the same problem, and his results were subsequently improved \cite{Mic04, MR04}. Many cryptographic primitives, such as one-way functions, collision-resistant hash functions, identification schemes, and digital signatures have been based on the SIS problem, and we refer to \cite{BLPRS13} for an overview.  
He gives a reduction from an approximate worst-case version of the shortest vector problem to an average-case version of the same problem, and his results were subsequently improved \cite{Mic04, MR04}. Many cryptographic primitives, such as one-way functions, collision-resistant hash functions, identification schemes, and digital signatures have been based on the SIS problem, and we refer to \cite{BLPRS13} for an overview.  
He gives a reduction from an approximate worst-case version of the shortest vector problem to an average-case version of the same problem, and his results were subsequently improved \cite{Mic04, MR04}. Many cryptographic primitives, such as one-way functions, collision-resistant hash functions, identification schemes, and digital signatures have been based on the SIS problem, and we refer to \cite{BLPRS13} for an overview.  

Regev \cite{Reg05} gives a worst- to average-case reduction for the LWE problem in the quantum setting. That is, an algorithm for solving LWE implies the existence of a quantum algorithm to solve the lattice problem. The work of Peikert \cite{Pei09} and Lyubashevsky and Micciancio \cite{LM09} makes progress towards getting a reduction that yields a classical worst-case algorithm. The first classical hardness reduction for LWE (with polynomial modulus) is then given by Brakerski et al.~\cite{BLPRS13}. A large number of cryptographic schemes are based on LWE, and we refer to Regev's survey \cite{Reg10}, and to \cite{BLPRS13} for an overview. 

Unfortunately, for all lattice-based worst-case to average-case reductions, the worst-case problem one reduces to is contained in $\NP\cap\co\NP$, and thus unlikely to be $\NP$-hard.
We note that several of these reductions (such as the ones of \cite{Ajt96, Mic04, MR04}) are adaptive. 

Gutfreund et al.~\cite{GST07} make progress towards a positive answer to Question~2: they give a worst-case to average-case reduction for $\NP$, but sampling an input from the  distribution they give requires quasi-polynomial time. Furthermore, for any fixed $\BPP$ algorithm that tries to decide $\SAT$, they give a distribution that is hard for that specific algorithm. Note that this latter statement does not give a polynomial time samplable distribution that is hard for \textit{any} algorithm. Unlike in \cite{FF93, BT06}, where the reductions under consideration get black-box access to the average-case oracle, the reduction given by \cite{GST07} is not black-box, i.e.~it requires access to the code of an efficient average-case algorithm. Such reductions (even non-adaptive ones) are not ruled out by the results of \cite{FF93, BT06}. 
Gutfreund and Ta-Shma~\cite{GT07} show that even though the techniques of \cite{GST07} do not yield an average-case hard problem in $\distNP$, they bypass the negative results of \cite{BT06}. Furthermore, under a certain derandomization assumption for $\BPP$, they give a worst-case to average-case reduction from $\NP$ to an average-case hard problem in $\NTIME(n^{O(\log n)})$.

\section{Preliminaries}
\label{sec:SamplingPreliminaries}
\subsection{Notation}
We denote sets using calligraphic letters $\cA, \cB, \ldots$, and we write capital letters $A, B, \ldots$ to denote random variables. For a set $\cS$, we use $x\leftarrow \cS$ to denote that $x$ is chosen uniformly from $\cS$. We denote probability distributions
on bitstrings by $\Pd$, and write $x\leftarrow \Pd$ if $x$ is chosen from $\Pd$. Also, we let $\Pd(x) := \Pr_{y\leftarrow \Pd}[x=y]$. 
For $n\in \mathbb{N}$ we let $(n):=\{0, 1, \ldots, n\}$ and $[n]:=\{1,2,\ldots, n\}$.

\subsection{Concentration Bounds}
\label{sec:ConcBounds}
We use several concentration bounds and first state the well-known Chernoff bound.
\begin{lemma}[Chernoff bound]
	\label{lem:ChernoffBound}
	Let $X_1, \ldots, X_k$ be independent random variables where for all $i$ we have $X_i \in \{0,1\}$ and $\Pr[X_i=1]=p$ for some $p \in (0,1)$. Define $\widetilde{X}:=
	\frac{1}{k}\sum_{i\in [k]} X_i$. 	
	Then for any $\eps >0$ it holds that
	\begin{align*}
		&\Pr_{X_1, \ldots, X_k}\left[\widetilde{X} \geq p+\eps\right]
			<\exp{\left( 
					-\frac{\eps^2k}{2}
				\right)}, \quad
		\Pr_{X_1, \ldots, X_k}\left[\widetilde{X} \leq p-\eps\right]
			<\exp{\left( 
					-\frac{\eps^2k}{2}
				\right)}.
	\end{align*}
\end{lemma}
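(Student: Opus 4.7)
The plan is to apply the classical Chernoff/Cram\'er exponential-moment method: exponentiate, apply Markov's inequality, factor the moment generating function by independence, and optimize over a free parameter. I will carry out the upper tail $\Pr[\widetilde X \geq p + \varepsilon]$ in detail; the lower tail follows by the symmetric argument (applying the same derivation to the variables $1 - X_i$, which are Bernoulli with parameter $1-p$, or equivalently by choosing a negative free parameter).

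Concretely, for any $t > 0$, I would rewrite the event $\widetilde X \geq p + \varepsilon$ as $\sum_{i}(X_i - p) \geq \varepsilon k$ and then as $e^{t\sum_i (X_i - p)} \geq e^{t \varepsilon k}$. Markov's inequality together with independence of the $X_i$ then gives
$$\Pr\bigl[\widetilde X \geq p + \varepsilon\bigr] \;\leq\; e^{-t \varepsilon k} \prod_{i=1}^{k} \mathbb{E}\bigl[e^{t(X_i - p)}\bigr].$$
The key analytic step is the single-variable bound $\mathbb{E}[e^{t(X_i - p)}] \leq e^{t^2/2}$. Since $X_i - p$ has mean zero and lies in $[-1,1]$, the convexity of $x \mapsto e^{tx}$ yields $e^{t(X_i - p)} \leq \tfrac{1-(X_i-p)}{2}\,e^{-t} + \tfrac{1+(X_i-p)}{2}\,e^{t}$; taking expectations and using $\mathbb{E}[X_i - p] = 0$ collapses the right-hand side to $\cosh(t)$, and the term-by-term comparison $(2n)! \geq 2^n n!$ of the Taylor series gives $\cosh(t) \leq e^{t^2/2}$.

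Multiplying the $k$ independent factors then produces $\Pr[\widetilde X \geq p + \varepsilon] \leq \exp(-t \varepsilon k + k t^2/2)$, and the exponent is minimized at $t = \varepsilon$, where it equals $-\varepsilon^2 k / 2$, giving exactly the claimed bound. Repeating (or symmetrizing) for the lower tail yields the second inequality. I do not expect a genuine obstacle: the argument is textbook, and the only place requiring mild care is the single-variable MGF bound, where I deliberately use the loose constant $1/2$ rather than the tighter Hoeffding constant $1/8$, since the former is already exactly what the statement of the lemma asks for.
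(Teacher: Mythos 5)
Your proof is correct, and the paper itself gives no proof of Lemma~\ref{lem:ChernoffBound} --- it is stated as a well-known fact --- so there is nothing to diverge from: the exponential-moment method with the convexity/$\cosh$ bound $\Exp[e^{t(X_i-p)}]\leq\cosh(t)\leq e^{t^2/2}$ and the optimization $t=\eps$ is exactly the standard (Hoeffding-style) derivation of this form of the bound. The only microscopic point worth noting is that the lemma asserts a strict inequality while your chain of estimates is written with $\leq$; strictness does follow from your own argument, since $\cosh(t)<e^{t^2/2}$ for $t>0$ (the series comparison $(2n)!\geq 2^n n!$ is strict from $n=2$ on), so no repair is needed.
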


Hoeffding's bound \cite{Hoe63} states that for $k$ independent random variables $X_1, \ldots, X_k$ that take values in some appropriate range, with high probability their sum is close to its expectation. 
\begin{lemma}[Hoeffding's inequality]
	\label{lem:HoeffdingBound}
		Let $X_1, \ldots, X_k$ be independent random variables
	with $X_i \in [a,b]$, define $\widetilde{X}:=
	\frac{1}{k}\sum_{i\in [k]} X_i$ and let 
	$p=\Exp_{X_1, \ldots, X_k}[\widetilde{X}]$. 
	Then for any $\eps > 0$ we have
	\begin{align*}
		&\Pr_{X_1, \ldots, X_k}\left[
		\widetilde{X} \geq p
			+ \eps			
		\right]
		\leq \exp\left(-\frac{2\eps^2k}	
																{(b-a)^2}\right), \\
		&\Pr_{X_1, \ldots, X_k}\left[
		\widetilde{X} \leq
			p- \eps			
		\right]
		\leq \exp\left(-\frac{2\eps^2k}	
																{(b-a)^2}\right).
		\end{align*}
\end{lemma}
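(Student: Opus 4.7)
The plan is to follow the standard Chernoff-method proof, i.e., apply Markov's inequality to the exponential moment and then use independence to factor the resulting expectation. Concretely, I would first fix any $t>0$ and write
\begin{align*}
\Pr\bigl[\widetilde{X}-p \geq \eps\bigr]
  = \Pr\bigl[e^{tk(\widetilde{X}-p)} \geq e^{tk\eps}\bigr]
  \leq e^{-tk\eps}\,\Exp\bigl[e^{t\sum_{i}(X_i-\Exp[X_i])}\bigr]
  = e^{-tk\eps}\prod_{i=1}^k \Exp\bigl[e^{t(X_i-\Exp[X_i])}\bigr],
\end{align*}
where the inequality is Markov's and the last equality uses independence. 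This reduces the problem to bounding each single-variable moment generating function.

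The main technical step, and the one I expect to be the actual obstacle, is the classical \emph{Hoeffding lemma}: for a zero-mean random variable $Y$ taking values in an interval of length $b-a$, one has $\Exp[e^{tY}] \leq \exp\bigl(t^2(b-a)^2/8\bigr)$. My approach would be to exploit the convexity of $y \mapsto e^{ty}$ on $[a',b']$ (with $a'=a-\Exp[X_i]$, $b'=b-\Exp[X_i]$, so $b'-a'=b-a$) to bound $e^{tY}$ by the chord through the endpoints, take expectations using $\Exp[Y]=0$, and then analyze the resulting function $\phi(t) = \log\bigl(\tfrac{b'}{b'-a'}e^{ta'}-\tfrac{a'}{b'-a'}e^{tb'}\bigr)$. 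A direct computation gives $\phi(0)=\phi'(0)=0$, and $\phi''(t)$ is the variance of a Bernoulli-like random variable supported on $\{a',b'\}$, which is maximized at $(b-a)^2/4$. A second-order Taylor expansion then yields $\phi(t) \leq t^2(b-a)^2/8$, as required.

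Combining these bounds gives
\begin{align*}
\Pr\bigl[\widetilde{X}-p\geq \eps\bigr]
  \leq \exp\!\left(-tk\eps + \frac{kt^2(b-a)^2}{8}\right).
\end{align*}
I would then optimize in $t$ by setting $t = 4\eps/(b-a)^2$, which cancels to exactly $\exp\!\left(-2\eps^2 k/(b-a)^2\right)$, matching the desired upper-tail bound.

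Finally, the lower-tail inequality follows immediately from the upper-tail one applied to the variables $X_i' := -X_i \in [-b,-a]$, since the range length is unchanged and $\Exp[\widetilde{X'}] = -p$. Thus a single argument yields both inequalities, and the only non-routine ingredient is the convexity-based bound on $\Exp[e^{tY}]$; the rest is bookkeeping with Markov's inequality, independence, and an optimization over $t$.
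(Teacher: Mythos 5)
Your proof is correct: it is the standard Chernoff-method derivation of Hoeffding's inequality (Markov on the exponential moment, independence to factor the MGF, Hoeffding's lemma $\Exp[e^{tY}]\leq\exp(t^2(b-a)^2/8)$ via convexity and a second-order Taylor bound on $\phi$, and the optimization $t=4\eps/(b-a)^2$, with the lower tail obtained by negation). The paper does not prove this lemma at all --- it simply cites Hoeffding's 1963 paper --- so there is no in-paper argument to compare against; your write-up supplies exactly the classical proof that the citation points to.
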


\subsection{Interactive Proofs}
\label{sec:InteractiveProofs}

In an interactive proof, an all-powerful prover tries to convince a computationally bounded verifier that her claim is true. The notion of an interactive protocol formalizes the interaction between the prover and the verifier, and is defined as follows. 

\begin{definition}[Interactive Protocol]
\label{def:InteractiveProtocol}
Let $n \in \mathbb{N}$,
$V:\{0,1\}^{\ast} \times \{0,1\}^{\ast} \rightarrow \{0,1\}^{\ast} \cup \{\accept, \reject\}$, 
$P:\{0,1\}^{\ast} \rightarrow \{0,1\}^{\ast}$, and $k, \ell, m: \mathbb{N}\rightarrow \mathbb{N}$. A 
$k$-round interactive
protocol $(V,P)$ with message length $m$ and $\ell$ random coins between $V$ and $P$ on input $x\in \{0,1\}^n$ is defined as follows:
\begin{enumerate}
	\item Uniformly choose random coins $r \in \{0,1\}^{\ell(n)}$ 
				for $V$.
	\item Let $k:=k(n)$ and repeat the following for $i=0,1,\ldots,k-1$:
				\begin{enumerate}[a)]	
					\item $m_i := V(x,i,r,a_0, \ldots, a_{i-1}), m_i \in \{0,1\}^{m(n)}$
					\item $a_i := P(x,i,m_0, \ldots, m_i), a_i \in \{0,1\}^{m(n)}$
				\end{enumerate}
				Finally, we have $V(x,k,r,m_0,a_0, \ldots, m_{k-1},
				a_{k-1}) \in \{\accept, \reject\}$.
\end{enumerate}
We denote by $(V(r),P)(x) \in \{\accept, \reject\}$ the output of $V$ on random coins $r$ after an interaction with $P$.
We say that $(x,r,m_0,a_0, \ldots, m_{j},a_{j})$ is consistent for $V$ if for all $i \in (k-1)$
we have $V(x,i,r,a_0, \ldots, a_{i-1})=m_i$. Finally, 
if $(x,r,m_0,a_0, \ldots, m_{k-1},	a_{k-1})$ is not consistent
for $V$, then $V(x,k,r,m_0,a_0, \ldots, m_{k-1}, 
a_{k-1}) = \reject$. 
\end{definition}

We now define the classes $\IP$ and $\AM$ of interactive proofs. The definition of $\IP$ was initially given by Goldwasser, Micali, and Rackoff~\cite{GMR85}, and the definition of $\AM$ goes back to Babai~\cite{Bab85}. 

\begin{definition}[$\IP$, $\AM$, and $\AMpoly$]	
	\label{def:InteractiveProofs}
	The set
	\begin{align*}
		\IP\left(
			\begin{array}{lll}
					\textsf{rounds}& =    & k(n)	 \\ 
					\textsf{time}  & =    & t(n) 	 \\
					\textsf{msg size}  & =    & m(n) 	 \\
					\textsf{coins} & =    & \ell(n)\\
					\textsf{compl} & \geq & c(n)   \\ 
					\textsf{sound} & \leq & s(n)   \\ 
			\end{array}
		\right)
	\end{align*}
	contains the languages $L$ that admit a $k$-round
	interactive protocol $(V,P)$ with message length $m$ and 
	$\ell$ random coins, and the following properties:
	\begin{itemize}
		\item[]
	\noindent \textbf{Efficiency:} 
	$V$ can be computed by an algorithm such that
	for any $x \in \{0,1\}^*$ and $P^*$ the total running time
	of $V$ in $(V,P^*)(x)$ is at most $t(|x|)$.
	
	\noindent \textbf{Completeness: }
	$$
		x \in L \implies \Pr_{r \leftarrow \{0,1\}^{\ell(|x|)}}
					\left[ (V(r),P)(x) = \accept
					\right] 
					\geq c(|x|).
	$$
	
	\noindent \textbf{Soundness: }
	For any $P^*$ we have
	$$
		x \notin L \implies \Pr_{r \leftarrow \{0,1\}^{\ell(|x|)}}
					\left[ (V(r),P^*)(x) = \accept
					\right] 
					\leq s(|x|).
	$$
	\end{itemize}
	The set $\AM$ is defined analogously, with the additional
	restriction that $(V,P)$ is \textit{public-coin}, i.e.~for
	all $i$, $m_i$ is an independent uniform random string. 
	We sometimes omit the \textsf{msg size} and
	\textsf{coins} parameters from the notation, in which case
	they are defined to be at most \textsf{time}. If we omit the \textsf{time} parameter, it is defined to be $\poly(n)$.  We then let
	\begin{align*}
		\IP:=
		\IP\left(
			\begin{array}{lll}
					\textsf{rounds}& =    & \poly(n)	 \\ 
					\textsf{compl} & \geq & 2/3   \\ 
					\textsf{sound} & \leq & 1/3   \\ 
			\end{array}
		\right),		\quad
		\AM:=
		\AM\left(
			\begin{array}{lll}
					\textsf{rounds}& =    & 1	 \\ 
					\textsf{compl} & \geq & 2/3   \\ 
					\textsf{sound} & \leq & 1/3   \\ 
			\end{array}
		\right)
	\end{align*}
	The set $\AMpoly$ is defined like $\AM$, but the verifier
	is additionally allowed to use $\poly(n)$ bits of 
	non-uniform advice.\footnote{
	For a definition of turing machines with advice we refer, for example, to Arora and Barak's book~\cite{AB09}, Chapter~6.3.
	}
\end{definition}

Instead of writing (for example) $L \in \AM(\textsf{rounds}=k,\textsf{time}=t,\textsf{compl}\geq c, \textsf{sound}\leq s)$, we sometimes say that $L$ has a $k$-round public-coin interactive proof with completeness $c$ and soundness $s$, where the verifier runs in time $t$. 

Babai and Moran~\cite{BM88} showed that in the definition of $\AM$ above, setting $\textsf{rounds}=k$ for any constant $k\geq 1$ yields the same class. The same is true for $\AMpoly$, and it thus suffices to give a $k$-round protocol with advice for some constant $k$ to place a language in $\AMpoly$. 

\paragraph{Assuming deterministic provers.} 
For proving the soundness condition of an interactive proof, without loss of generalty we may assume that the prover is determinsitic: we consider the deterministic prover that always sends the answer which maximizes the verifier's acceptance probability. No probabilistic prover can achieve better acceptance probability. 

\paragraph{Interactive proofs for promise problems.}
\textit{Promise problems} generalize the notion of languages, and are defined as follows: a promise problem $\Pi = (\Pi_Y, \Pi_N)$ is a pair of sets $\Pi_Y, \Pi_N \subseteq \{0,1\}^*$ such that $\Pi_Y \cap \Pi_N = \emptyset$. Given a problem $\Pi$, we are interested in algorithms (or protocols) that accept instances in $\Pi_Y$ and reject instances in $\Pi_N$. In particular, we don't care about the algorithm's behavior on instances that are not in $\Pi_Y \cup \Pi_N$. 

Promise versions of the classes $\IP, \AM$, and $\AMpoly$ are defined in the obvious way by restricting the completeness condition to $x\in \Pi_Y$ and the soundness condition to $x \in \Pi_N$. 

\subsection{Histograms and the First Wasserstein Distance}
\label{sec:HistAndWasserstein}
We give the definitions of histograms and Wasserstein distance as given in \cite{HMX10}. The histogram of a probability distribution $\Pd$ is a function $h:[0,1] \rightarrow [0,1]$ such that $h(p)=\Pr_{x\leftarrow \Pd}[\Pd(x)=p]$. The following definition describes a discretized version of this concept.
\begin{definition}[$(\eps, t)$-histogram]
	\label{def:histogram}
	Let $\Pd$ be a probability distribution on $\{0,1\}^n$, fix
	$t\in \mathbb{N}$, and let $\eps > 0$. 
	For $i \in (t)$ we define the $i$'th interval $\cA_i$ and the
	$i$'th bucket $\cB_i$ as
	\begin{align*}
			\cA_i := \left(2^{-(i+1)\eps}, 
								2^{-i\eps}\right], \qquad
			\cB_i := \left\{x: \Pd(x) \in \cA_i\right\}.
	\end{align*}
	We then let $h:=(h_0, \ldots, h_t)$ where 
	$h_i := \Pr_{x\leftarrow \Pd}[x \in \cB_i] = 
					\sum_{x\in \cB_i}\Pd(x)$. The tuple
	$h$ is called the $(\eps,t)$-histogram of $\Pd$.
\end{definition}
If for all $x$ we have either $\Pd(x)=0$ or 
$\Pd(x)\geq 2^{-n}$, and we consider the $(\eps,t)$-histogram of $\Pd$ for $t=\lceil n/\eps\rceil$, then $\bigcup_{i\in (t)}\cB_i = \{0,1\}^n$ and $\sum_{i\in (t)}h_i = 1$. If smaller probabilities occur (e.g.~$\Pd(x)=2^{-2n}$ for some $x$), this sum is smaller than $1$. 

The following observation follows directly from the above definition:
\begin{claim}
	\label{claim:sizeBi}
	For all $i\in (t)$ we have $h_i 2^{i\eps} 
					\leq |\cB_i| \leq h_i 2^{(i+1)\eps}$.
\end{claim}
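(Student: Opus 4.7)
The plan is to unpack the definitions and use the fact that membership in $\cB_i$ pins down $\Pd(x)$ to lie in the interval $\cA_i = (2^{-(i+1)\eps}, 2^{-i\eps}]$. Concretely, I would start from $h_i = \sum_{x \in \cB_i} \Pd(x)$ and bound each summand from above and below.

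For the lower bound on $|\cB_i|$, note that for every $x \in \cB_i$ we have $\Pd(x) \leq 2^{-i\eps}$, so
\begin{align*}
h_i \;=\; \sum_{x \in \cB_i} \Pd(x) \;\leq\; |\cB_i| \cdot 2^{-i\eps},
\end{align*}
which rearranges to $|\cB_i| \geq h_i \, 2^{i\eps}$. Symmetrically, for every $x \in \cB_i$ we have $\Pd(x) > 2^{-(i+1)\eps}$, hence
\begin{align*}
h_i \;=\; \sum_{x \in \cB_i} \Pd(x) \;>\; |\cB_i| \cdot 2^{-(i+1)\eps},
\end{align*}
which gives $|\cB_i| < h_i \, 2^{(i+1)\eps}$, and in particular $|\cB_i| \leq h_i \, 2^{(i+1)\eps}$.

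There is no real obstacle here: the claim is a direct two-line consequence of bounding $\Pd(x)$ by the endpoints of $\cA_i$ for each $x \in \cB_i$ and summing. The only subtlety worth noting is that the interval $\cA_i$ is half-open, so one side of the bound is in fact strict, but that is only a strengthening of what is claimed.
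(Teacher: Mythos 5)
Your proof is correct and is essentially identical to the paper's: both bound each summand $\Pd(x)$ for $x\in\cB_i$ by the endpoints of $\cA_i$ and sum. (Only minor caveat: your strict inequality $h_i > |\cB_i|2^{-(i+1)\eps}$ fails in the degenerate case $\cB_i=\emptyset$, but the non-strict bound the claim actually asserts still holds there.)
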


\begin{proof}
	Recall that $h_i = \sum_{x\in \cB_i}\Pd(x)$ and
	by definition of $\cB_i$ we have
	$$ 
	|\cB_i|2^{-(i+1)\eps}
		=
			\sum_{x\in \cB_i}2^{-(i+1)\eps} 
		\leq
			\sum_{x\in \cB_i}\Pd(x) 
		\leq 
			\sum_{x\in \cB_i}2^{-i\eps} 
			= |\cB_i|2^{-i\eps}.\qedhere 
	$$
\end{proof}

We next introduce the Wasserstein distance between histograms. 
Intuitively, it measures how much work it takes to turn one histogram into another one, where the work is defined as the mass that is moved times the distance over which it is moved. We will only apply the Wasserstein distance to histograms $h$ where $\sum_{i\in (t)} h_i =1$, and we call such tuples \textit{distribution vectors}.  

\begin{definition}[1st Wasserstein distance over arrays]
	Given two distribution vectors
	$x$ and $y$ over $(t)$ we 
	let $a_i = \sum_{j\in (i)}x_j$ and $b_i = 
	\sum_{j\in (i)}y_j$. We let
	\begin{align*}
		\Wdr(x,y) := \frac{1}{t}\sum_{i\in (t): a_i > b_i}
								(a_i- b_i),\qquad 
		\Wdl(x,y) := \frac{1}{t}\sum_{i\in (t): b_i > a_i}
								(b_i- a_i),
	\end{align*}
	and $\Wd(x,y)  := \Wdr(x,y) + \Wdl(x,y)$.
	$\Wd(x,y)$ is called the 1st Wasserstein distance between 
	$x$ and $y$. $\Wdr(x,y)$ and $\Wdl(x,y)$ are called
	the right and left Wasserstein distance, respectively.
\end{definition}
In more general settings, this distance is also called Kantorovich distance, or Earth Mover's distance. For a more detailed discussion of this concept and the associated intuition we refer to the nice exposition in~\cite{HMX10}.

\subsection{The Parallel Lower Bound and Histogram Verification Protocols}
\label{sec:VerifyHistPLB}

To formalize the guarantees of the two protocols, we use the notion of promise problems as introduced in Section~\ref{sec:InteractiveProofs}.

\paragraph{The lower bound protocol.} 
We describe the promise problem that is solved by the parallel lower bound protocol as stated in \cite{BT06} (Corollary 7), which is based on the lower bound protocol of \cite{GS86}. The goal is to prove approximate lower bounds on the size of a set $\cS$ which is specified using a circuit $C$ as $\cS := C^{-1}(1) = \{x: C(x) = 1\}$. More generally, the following lemma states that there is a protocol that allows to prove lower bounds in parallel for several sets $\cS_i = C^{-1}(y_i)=\{x:C(x)=y_i\}$ for some given bit strings $y_i$. In the following, for a circuit $C$ we denote its size by $\Size(C)$. 

\begin{lemma}[Parallel Lower Bound Protocol, \cite{BT06}]
	\label{lem:PLBprotocol}
	For circuits $C:\{0,1\}^n \rightarrow \{0,1\}^m$,
	$\eps \in (0,1)$ we define the promise problem $\PiLB$ 
	as
		\begin{align*}
			\PiLB_Y &:= \left\{ 
					(C, \eps, y_1, s_1, \ldots, y_k, s_k) : 
					\forall i\in [k]: |C^{-1}(y_i)| \geq s_i
				\right\} \\
			\PiLB_N &:= \left\{
					(C, \eps, y_1, s_1, \ldots, y_k, s_k) :
					\exists i\in [k]: |C^{-1}(y_i)| \leq (1-\eps)s_i
			\right\}
		\end{align*}
	There exists a constant-round public-coin interactive proof
	for $\PiLB$ with completeness $1-\eps$ and soundness 
	$\eps$, where	the verifier runs in time 
	$\poly(\frac{\Size(C)k}{\eps})$. 
\end{lemma}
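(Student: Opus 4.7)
The plan is to build the parallel lower bound protocol out of the standard Goldwasser--Sipser set lower bound protocol by (i) using pairwise-independent hashing to get an interactive test for ``$|C^{-1}(y)|\ge s$'', (ii) running $k$ such tests in parallel (one per pair $(y_i,s_i)$) inside a single public-coin exchange, and (iii) boosting the resulting constant completeness--soundness gap to $1-\eps$ versus $\eps$ by parallel repetition.

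For a single pair $(y,s)$, I would proceed as in \cite{GS86}. Pick $\ell$ so that $2^\ell$ is a carefully chosen multiple of $s$ (concretely, close to $2s$, shifted by an $\eps$-dependent factor), let $\cH_\ell$ be a pairwise-independent family of hash functions $h:\{0,1\}^n\to\{0,1\}^\ell$, and consider the following one-round test: the verifier sends a uniformly random $h\in\cH_\ell$ and $v\in\{0,1\}^\ell$, and the prover replies with $x\in\{0,1\}^n$; the verifier accepts iff $C(x)=y$ and $h(x)=v$. Writing $\cS=C^{-1}(y)$ and using inclusion--exclusion together with pairwise independence gives
\begin{equation*}
\frac{|\cS|}{2^\ell}\Bigl(1-\frac{|\cS|}{2^{\ell+1}}\Bigr)\;\le\;\Pr_{h,v}\bigl[\exists x\in\cS:h(x)=v\bigr]\;\le\;\frac{|\cS|}{2^\ell}.
\end{equation*}
With the right choice of $\ell$ (depending on $s$ and $\eps$), this separates the ``yes'' case $|\cS|\ge s$ from the ``no'' case $|\cS|\le(1-\eps)s$ by an additive constant that is $\Omega(\eps)$.

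To handle all $k$ pairs simultaneously, the verifier picks $(h_i,v_i)$ independently for each $i\in[k]$ in a single message; the prover is asked to return a single $x_i$ per index; the verifier accepts only if every subprotocol accepts. Completeness follows from independence of the $k$ tests and a union bound: if all $|C^{-1}(y_i)|\ge s_i$, each test individually succeeds with the higher probability, so the prover can send matching witnesses. Soundness follows because if some index $i^\ast$ has $|C^{-1}(y_{i^\ast})|\le(1-\eps)s_{i^\ast}$, then regardless of the prover's strategy on the other indices, on index $i^\ast$ the prover's acceptance probability is bounded above by the smaller quantity. To turn the $\Omega(\eps)$ gap into the required $1-\eps$ vs.~$\eps$ gap, I repeat the whole exchange $t=\Theta(\log(1/\eps)/\eps^2)$ times in parallel with independent randomness, and have the verifier accept iff the empirical acceptance rate exceeds the midpoint of the two probabilities; Lemma~\ref{lem:ChernoffBound} (Chernoff/Hoeffding) then gives completeness $\ge 1-\eps$ and soundness $\le\eps$. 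The number of rounds stays constant, the protocol is public-coin by construction, and the verifier's work per round (sampling hashes, evaluating $h_i(x_i)$, and running $C$ once per index) is $\poly(\Size(C),k,1/\eps)$, giving overall time $\poly(\Size(C)k/\eps)$.

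The main technical obstacle is pinning down $\ell$ as a function of $(s,\eps)$ so that the above interval on $\Pr[\exists x\in\cS:h(x)=v]$ actually separates $s$ from $(1-\eps)s$ by a margin that survives amplification; the subtlety is that one cannot take $2^\ell$ arbitrarily close to $|\cS|$ without the quadratic error term $|\cS|/2^{\ell+1}$ dominating. A secondary issue is that $s_i$ need not be a power of two, so one either rounds to the nearest power (losing a factor $2$ in the separation, which is absorbed into $\eps$ by rescaling) or uses a ``slicing'' trick that samples a random subset of a power-of-two set of targets; either way the analysis reduces to the pairwise-independent hashing estimate above, so the conclusion of the lemma follows.
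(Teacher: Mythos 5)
Your overall plan (pairwise-independent hashing \`a la Goldwasser--Sipser, public coins, parallel composition, amplification) is the right family of ideas, and it is the one the paper itself sketches (the paper does not prove this lemma; it defers to \cite{BT06} and only outlines the argument via the Hash Mixing Lemma). However, your composition step has a genuine gap. In your single-instance test the prover returns \emph{one} preimage, so the acceptance probability is roughly $|\cS|/2^{\ell}$ in \emph{both} the yes and the no case; these two quantities differ only by a small additive amount (in fact $\Theta(\eps^2)$, not $\Omega(\eps)$ as you claim: to keep the inclusion--exclusion correction $|\cS|/2^{\ell+1}$ below $\eps$ you must take $s/2^{\ell}\le O(\eps)$, and then the separation $\eps\cdot s/2^{\ell}$ is $O(\eps^2)$ -- a quantitative slip, but survivable). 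The real problem is that you then take the AND over all $k$ indices \emph{before} amplifying. The yes-case acceptance probability of the AND is $\prod_i p_i\approx \eps^{k}$, while in a no-instance the single bad index contributes at most $\approx\eps(1-\eps)$ but the other $k-1$ indices can have $|C^{-1}(y_i)|\gg s_i$ and hence $p_i$ close to $1$, making the no-case AND-probability as large as $\approx\eps$. So the no-instance can be accepted \emph{more} often than the yes-instance, and thresholding the empirical rate of the repeated AND cannot distinguish them. Your completeness argument (``union bound'' over events that each hold only with probability $\approx\eps$) reflects the same confusion.

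The fix is to reverse the order: first drive each individual test to completeness $1-\eps/(2k)$ and soundness $\eps/(2k)$, and only then AND over the $k$ indices with a union bound. You can do this either by repeating your single-preimage test $\poly(k/\eps)$ times per index and thresholding \emph{per index}, or -- as in the paper's sketch and in \cite{BT06} -- by choosing the hash range so that the expected number of elements of $C^{-1}(y_i)$ hashing to $0^{\ell}$ is a large polynomial $p$ and asking the prover to exhibit $p$ distinct preimages; Lemma~\ref{lemma:hashMixingLemma} then gives a $(1\pm\gamma)$ multiplicative concentration with failure probability $O(1/(\gamma^2 p))$, so each index already has near-perfect completeness and soundness and the $k$-fold AND goes through directly. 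Either route yields the stated bounds; as written, your proof does not.
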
 

We briefly sketch how such lower bounds can be proved, but refer to \cite{BT06} for a detailed exposition and a proof of the above lemma. Consider the case $k=1$, suppose the input $(C,\eps,s)$ is given, and we would like to give a protocol such that the verifier accepts with high probability if $|C^{-1}(1)|\geq s$ and rejects with high probability if $|C^{-1}(1)|\leq (1-\eps) s$. 
The protocol can be based on the hash mixing lemma:
\begin{lemma}[Hash Mixing Lemma, \cite{Nis92}]
		\label{lemma:hashMixingLemma}
		Let $\cB \subseteq \{0,1\}^n$, $x \in \{0,1\}^n$.
		If $\cH(n,m)$ is a family of $2$-wise independent hash functions mapping $n$ bits to $m$ bits, then the following holds. For all $\gamma > 0$ we have
				\begin{align*}
				   \Pr_{h\inu \cH(n,m)}
					 &\left[ |\{y\in \cB: h(y)=0^m\}|\notin
					 			 (1\pm\gamma)\frac{|\cB|}{2^m}
					 \right]\leq 
					 \begin{cases} 
					 		 \frac{2^m}{\gamma^2|\cB|}
							    &\mbox{if } |\cB|>0 \\
							0   &\mbox{if } |\cB|=0.
					 \end{cases}
				\end{align*}
\end{lemma}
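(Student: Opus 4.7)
The plan is to prove the Hash Mixing Lemma by a standard second-moment (Chebyshev) argument on the random variable that counts elements of $\cB$ mapped to $0^m$ by a random $h \leftarrow \cH(n,m)$.

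First I would handle the trivial case $|\cB|=0$: here $|\{y \in \cB : h(y) = 0^m\}| = 0$ with probability $1$, and the target interval $(1\pm\gamma)\cdot 0/2^m$ is the singleton $\{0\}$, so the deviation event is empty and the probability is $0$. So from now on assume $|\cB| > 0$.

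Next, for each $y \in \cB$ define the indicator $Y_y := \mathbb{1}[h(y) = 0^m]$ and set $Y := \sum_{y \in \cB} Y_y$. Because $\cH(n,m)$ maps to $\{0,1\}^m$ and is $2$-wise independent, each $h(y)$ is uniform on $\{0,1\}^m$, so $\Exp[Y_y] = 2^{-m}$, giving $\Exp[Y] = |\cB|/2^m$. Moreover, $\Var(Y_y) = 2^{-m}(1-2^{-m}) \leq 2^{-m}$, and pairwise independence of $h$ on distinct inputs makes the $Y_y$'s pairwise independent, which makes all covariances vanish. Therefore
\[
  \Var(Y) \;=\; \sum_{y \in \cB} \Var(Y_y) \;\leq\; |\cB|\cdot 2^{-m}.
\]

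Finally I would apply Chebyshev's inequality to the event that $Y$ deviates from its expectation by more than a $\gamma$-fraction:
\[
  \Pr\!\left[\,|Y - \Exp[Y]| \geq \gamma\,\Exp[Y]\,\right]
  \;\leq\; \frac{\Var(Y)}{(\gamma\Exp[Y])^2}
  \;\leq\; \frac{|\cB|/2^m}{\gamma^{2}\,|\cB|^{2}/2^{2m}}
  \;=\; \frac{2^m}{\gamma^{2}\,|\cB|},
\]
which is exactly the claimed bound. The argument is entirely routine; there is no real obstacle beyond noticing that $2$-wise independence is the minimal hypothesis needed to kill the cross terms in $\Var(Y)$, and that the $x \in \{0,1\}^n$ mentioned in the setup of the lemma plays no role in the statement (it appears to be a leftover from the source and can be ignored in this proof).
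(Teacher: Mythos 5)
Your proof is correct: the case $|\cB|=0$ is handled properly, pairwise independence gives $\Exp[Y]=|\cB|/2^m$ and $\Var(Y)\leq |\cB|\cdot 2^{-m}$, and Chebyshev yields exactly the claimed bound (and you are right that the $x\in\{0,1\}^n$ in the hypothesis is vestigial). There is nothing to compare against, since the paper states this lemma as an imported result from \cite{Nis92} without proof; your second-moment argument is the standard one.
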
 
The idea is to let the verifier choose a pairwise independent hash function with an appropriate range, such that for the set $\cB:=C^{-1}(1)$ in case $|\cB| \geq s$, with high probability the set $\cM:=\{x: x\in \cB \land f(x)=0\}$ has size at least some fixed polynomial $p(n)$. One chooses the parameters such that in case $|\cB| \leq (1-\eps) s$, we have $|\cM| < p(n)$ with high probability. Then, the prover is supposed to send $p(n)$ many elements $x_1, \ldots, x_p$ to the verifier that satisfy $C(x_i)=1$ and $f(x_i)=0$. Finally, the verifier checks that the prover sent $p(n)$ many elements with these properties. It is not hard to see completeness and soundness, and Lemma~\ref{lem:PLBprotocol} can be proved for the parallel repetition of this protocol. 

\paragraph{Verifying histograms.}
We consider circuits $C: \{0,1\}^n \rightarrow \{0,1\}^m$, and the distribution $\Pd^C$ defined by $\Pd^C(y) = \Pr_{r\leftarrow \{0,1\}^n}[C(r) =y]$. The VerifyHist protocol of \cite{HMX10} (Lemma 4.4) allows to verify that some given histogram $h$ is close to the histogram of $\Pd^C$ in terms of the Wasserstein distance.
\begin{lemma}[VerifyHist Protocol, \cite{HMX10}]	
	\label{lem:VerifyHist}
	For a circuit $C:\{0,1\}^n \rightarrow \{0,1\}^m$,
	$\eps \in (0,1)$, and $t = \lceil n/\eps \rceil$,	
	we denote by $h^C \in [0,1]^{t+1}$ the 
	$(\eps,t)$-histogram of $\Pd^C$. 
	We define the promise problem $\PiVerifyHist$ as
		\begin{align*}
			\PiVerifyHist_Y &:= \left\{ 
					(C,\varepsilon,h) : 
					h=h^C
				\right\} \\
			\PiVerifyHist_N &:= \left\{
					(C,\varepsilon,h) :
					\Wd(h^C,h) > 20/t
			\right\}
		\end{align*}
	There exists a constant-round public-coin interactive proof
	for $\PiVerifyHist$ with completeness $1-2^{-n}$ and
	soundness $2^{-n}$,  where the verifier runs in time 
	$\poly(\frac{\Size(C)}{\eps})$. 
\end{lemma}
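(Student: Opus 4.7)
The plan is to reduce verification of the histogram to parallel applications of the lower bound protocol of Lemma~\ref{lem:PLBprotocol}, together with random sampling from $\Pd^C$. The Wasserstein slack of $20/t$ in the promise is precisely what will let one-sided (lower bound) checks suffice for an intrinsically two-sided problem.

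At a high level, I would have the verifier first draw $k = \poly(n,t)$ samples $y_j := C(r_j)$ for uniform $r_j \in \{0,1\}^n$, so that the $(y_j)$ are i.i.d.\ from $\Pd^C$ and the empirical distribution of their true bucket indices concentrates around $h^C$ by Lemma~\ref{lem:HoeffdingBound}. The prover then sends a claimed bucket index $i_j \in (t)$ for each sample, and both parties invoke the PLB protocol from Lemma~\ref{lem:PLBprotocol} on the circuit $C$ with pairs $(y_j, s_j)$, where $s_j := \lceil 2^{n-(i_j+1)\eps}\rceil$ is derived from the lower edge of the claimed interval $\cA_{i_j}$; this certifies $|C^{-1}(y_j)|\ge s_j$, i.e.\ that $y_j$ lies in bucket $i_j$ \emph{or a lower-indexed} (higher-probability) bucket. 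The verifier accepts iff (i) PLB accepts and (ii) the empirical histogram $\hat h$ induced by the $(i_j)$ is within Wasserstein distance, say, $5/t$ of the target $h$. Completeness is immediate: honest bucket indices make PLB accept and give $\hat h \approx h^C = h$ with probability $1-2^{-n}$ by Hoeffding.

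The main obstacle is soundness, and two technical points need care. First, because PLB only lower-bounds preimage sizes, a cheating prover has the asymmetric freedom to \emph{over-claim} indices (report a lower-probability bucket than the truth). In Wasserstein terms this can only shift mass rightward, i.e.\ it leaves $\Wdl(h^C,\hat h)$ small while allowing $\Wdr(h^C,\hat h)$ to grow. The soundness argument thus naturally splits along the decomposition $\Wd = \Wdl + \Wdr$: if $\Wdl(h^C,h)$ is large then no shift-right strategy can bring $\hat h$ close to $h$ and check (ii) fails, while if $\Wdr(h^C,h)$ is large we need an additional mechanism to forbid over-claiming. The cleanest way I would implement that mechanism is to have the prover also commit to aggregate bucket populations $|\cB_i|$ and invoke PLB a second time on a modified circuit (e.g.\ the circuit that on input $(r,r')$ tests $C(r)=C(r')$, so that lower bounds on $\sum_i h_i^2$-style quantities translate into upper bounds on $|\cB_i|$ via Claim~\ref{claim:sizeBi}); alternatively, a direct PLB-based upper bound check on each $|C^{-1}(y_j)|$ can be extracted by applying the lower bound protocol to a complementary count. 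Second, one must combine the PLB error, the sampling error, and the $1/t$-granularity of the Wasserstein definition so that the $20/t$ promise gap genuinely separates the yes- and no-cases; choosing $k$ so that the sampling error contributes at most $\approx 1/t$ and running PLB with its own error set to $2^{-n}$, then applying a union bound over the $k$ pairs, would give total completeness/soundness error $2^{-n}$ as required. The whole protocol is constant round and public-coin because both PLB and the sampling/empirical-histogram step are, and efficiency bounds follow from $t=\lceil n/\eps\rceil$ and the $\poly(\Size(C)k/\eps)$ bound in Lemma~\ref{lem:PLBprotocol}.
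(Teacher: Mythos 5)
Your first test is exactly the \emph{preimage test} of the cited construction (the paper imports this lemma from \cite{HMX10} and only sketches the protocol, so that sketch is the reference point): sample from $\Pd^C$, have the prover label each sample with a bucket index, certify the labels from below with the parallel lower bound protocol, and compare the resulting empirical histogram to $h$. You also correctly isolate the one remaining degree of freedom: since only lower bounds are certified, the prover may only shift mass toward higher indices (lower probabilities), so $\Wdl$ is controlled and the entire difficulty is bounding $\Wdr$. Up to this point the proposal matches \cite{HMX10}.

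The gap is in the mechanism you propose for the $\Wdr$ direction; neither variant works. (1) The collision circuit $D(r,r')=[C(r)=C(r')]$ has $|D^{-1}(1)|=2^{2n}\sum_y \Pd^C(y)^2$, and the claimed histogram implies a claimed collision probability roughly $\sum_i h_i 2^{-i\eps}$. But a prover who under-claims probabilities makes the claimed collision probability \emph{smaller} than the true one, so any lower bound the verifier derives from $h$ and asks the prover to certify is genuinely satisfied; a lower-bound protocol on $D$ therefore catches only the over-claiming direction, which the preimage test already handles. (2) The ``complementary count'' idea --- certifying $|\{r: C(r)\neq y_j\}| \geq 2^n - s$ to deduce $|C^{-1}(y_j)|\leq s$ --- fails because the precision of Lemma~\ref{lem:PLBprotocol} is multiplicative: it only separates $\geq 2^n-s$ from $\leq (1-\eps)(2^n-s)$, i.e., it tolerates an additive error of order $\eps 2^n$, which dwarfs $s$ whenever $\Pd^C(y_j)\ll \eps$; this is precisely why genuine upper-bound protocols (Aiello--H{\aa}stad) resort to private coins. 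What \cite{HMX10} actually do instead is an \emph{image test}: from $h$ the verifier computes, via Claim~\ref{claim:sizeBi}, implied sizes $w_i^h\approx\sum_{j<i}h_j2^{j\eps}$ of the level sets $\cW_i=\{y:\Pd^C(y)\geq 2^{-i\eps}\}$, and the prover proves $|\cW_i|\geq w_i^h$ for every $i$ (a cascaded Goldwasser--Sipser argument, since membership of $y$ in $\cW_i$ is itself certified by a preimage lower bound for $y$). Moving mass $h_j$ from bucket $j$ to a bucket $j'>j$ inflates the implied count of those elements from $h_j2^{j\eps}$ to $h_j2^{j'\eps}$, so $w_i^h$ exceeds $|\cW_i|$ for $i>j'$ and the lower bound protocol rejects. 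This is the missing idea: the second test must count \emph{images} rather than preimages, so that under-claiming a probability becomes over-claiming a set size and is again refutable by a public-coin lower bound.
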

We remark that $\Wd(h^C,h)$ is well-defined: for all $y$ we have $\Pd^C(y) =0$ or $\Pd^C(y)\geq 2^{-n}$, and thus by choice of $t$, $h^C$ is a distribution vector. 

We give a brief and intuitive description of the VerifyHist protocol, but refer to \cite{HMX10} for a formal treatment and a proof of the above lemma. 
For a circuit $C$ and a claimed histogram $h$ the protocol proceeds as follows. 

The first part of the protocol is called \textit{preimage test}: the verifier samples elements $y_1, \ldots, y_k$ (for some appropriate $k$) from the distribution $\Pd^C$ and sends them to the prover. The honest prover sends back the probabilities $\Pd^C(y_i)$, and proves a lower bound on them using the parallel lower bound protocol of Lemma~\ref{lem:PLBprotocol}. Finally, the verifier considers the histogram $h'$ induced by the values $\Pd^C(y_i)$ and accepts if and only if $\Wd(h,h')$ is small. 

In the second part, a so-called \textit{image test} is performed: let $\cW_i := \{y: \Pd^C(y) \geq 2^{-i\eps}\}$, and let $w^h_i$ be the estimates of $\cW_i$ given by the claimed histogram $h$. Using the parallel lower bound protocol, the prover proves that indeed $|\cW_i| \geq w^h_i$ for all $i$. 

Intuitively, the preimage test prevents the prover from claiming that many probabilities are larger than they actually are, and it can be shown that the image test rejects in case the probabilities are larger than claimed. Haitner et al.~\cite{HMX10} prove that indeed, if both tests accept, then $h$ is close to $h^C$ in the first Wasserstein distance. 

\subsection{Worst-Case to Average-Case Reductions}
\label{sec:PrelimWCtACRed}
We give a definition of non-adaptive worst-case to average-case reductions. Informally, such a reduction is a polynomial time algorithm that, given an oracle which solves some given problem on average, solves some other problem in the worst case. The reduction is called \textit{non-adaptive} if it generates all its oracle queries before calling the oracle. The definition we give is from \cite{BT06}.

%A set of distributions $\Pd = \{\Pd_n\}_{n\in \mathbb{N}}$ where $\Pd_n \in \{0,1\}^n$ is called \textit{efficiently samplable} if there exists a probabilitstic polynomial time algorithm that on input $1^n$ generates a sample of $\Pd_n$. The set $\distNP$ contains all pairs $(L,\Pd)$ where $L\in \NP$ and $\Pd$ is efficiently samplable.

A distributional problem is a pair $(L, \cD)$, where $L$ is a language and $\cD$ is a set $\cD = \{\Pd_n\}_{n\in \mathbb{N}}$, and for each $n$, $\Pd_n$ is a distribution over $\{0,1\}^n$. 

%We say that a problem $(L, \Pd)$ is \textit{easy on average} if for every polynomial $p(n)$ there exists a probabilistic polynomial time algorithm $A$ such that $$\Pr_{x\leftarrow \Pd_n, A}[A(x)= L(x)] \geq 1-1/p(n).$$ 

\begin{definition}
	A non-adaptive $\delta$-worst-to-average reduction from $L$ 
	to a distributional problem $(L',\cD)$ is a family of 
	polynomial size	circuits $\{R_n\}_{n\in \mathbb{N}}$ such
	that for any $n$ the following holds:
	\begin{itemize}
		\item $R_n$ takes as input some $x\in\{0,1\}^n$ 
					and randomness $r$, and outputs 
					$(y_1, \ldots, y_k)$ (called queries), and a
					circuit $C$. 
		\item For any $x\in \{0,1\}^n$, and any oracle $O$ for
					which 
						$\Pr_{x\leftarrow \Pd_n, O}[O(x)
							\neq L'(x)]\leq \delta(n)$
					we have 
						$$ \Pr_{r, (y_1, \ldots, y_k, C):=R_n(x,r)}
						[C(O(y_1), \ldots, O(y_k))=L(x)]\geq 2/3.$$
	\end{itemize}
\end{definition}
We may assume that the queries $y_1, \ldots, y_k$ are identically (but not necessarily independently) distributed. If this is not the case for the original reduction $R$, we can easily obtain a reduction $R'$ that satisfies this property: $R'$ obtains the queries of $R$ and outputs a random permutation of them (the circuit $C$ is also modified accordingly).

Furthermore, the constant $2/3$ can be replaced by $1/2 + 1/n^c$ for some constant $c$: by the usual repetition argument, executing the reduction a polynomial number of times and outputting the majority answer still yields an exponentially small error probability.

\section{Technical Overview}
\label{sec:TechnicalOverviewWTA}

For a formal definition of non-adaptive worst-case to average-case reductions, we refer to Section~\ref{sec:PrelimWCtACRed} in the preliminaries. In the introduction we stated an informal version of the result of \cite{BT06}. We now state their main theorem formally. Let $\cU$ be the set $\{\Pd_n\}_{n\in \mathbb{N}}$ where $\Pd_n$ is the uniform distribution on $\{0,1\}^n$. 

\begin{theorem}[Main Theorem of \cite{BT06}]
	For any $L$ and $L'$ and every constant $c$ the following
	holds.
	If $L$ is $\NP$-hard, $L' \in \NP$, and there exists a 
	non-adaptive $1/n^c$-worst-to-average reduction from 
	$L$ to $(L',\cU)$, then $\co\NP \subseteq \NPpoly$. 
\end{theorem}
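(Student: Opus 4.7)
The plan is to show $\overline L \in \AMpoly$: since $L$ is $\NP$-hard, $\overline L$ is $\co\NP$-hard, and combined with Babai--Moran's $\AMpoly = \NPpoly$ this yields $\co\NP \subseteq \NPpoly$. The $\AMpoly$ protocol on input $x$ will, using $\poly(n)$ bits of advice and fresh randomness, sample $r^*$, run $R_n(x,r^*)$ to obtain queries $y_1,\ldots,y_k$ and a combining circuit $C$, extract claimed oracle answers $b_1,\ldots,b_k$ from the prover via three subprotocols, and accept iff $C(b_1,\ldots,b_k) = 0$. Soundness reduces to forcing the $b_i$ to be consistent with \emph{some} admissible oracle $O$ --- one disagreeing with $L'$ on at most a $1/n^c$ fraction of $\{0,1\}^n$. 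Once that is enforced, the reduction's guarantee yields $C(b_1,\ldots,b_k) = L(x)$ with probability $\geq 2/3$ over $r^*$.

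The novel core replaces BT's heavy-samples and hiding subprotocols by one histogram-verification pass. By the identical-distribution assumption each $y_i$ is distributed as $\Pd_x := \Pd^{C_x}$, where $C_x$ is the polynomial-size circuit sending $R_n$'s randomness to a single query. The verifier asks for a claimed $(\eps,t)$-histogram $h$ of $\Pd_x$ with $t = \lceil n/\eps \rceil$, and invokes the VerifyHist protocol of Lemma~\ref{lem:VerifyHist} on $(C_x,\eps,h)$, certifying $\Wd(h^{C_x},h) \leq 20/t$. For each produced query $y_i$ the prover names the bucket $\cB_j$ containing it and supplies a preimage $r$ with $C_x(r) = y_i$; Claim~\ref{claim:sizeBi} then gives the verifier $\Pd_x(y_i)$ up to a $2^\eps$ factor. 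This is the new heavy-samples protocol, and it is trivially public-coin. A threshold on $j$ splits queries into heavy and light.

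The hiding step runs bucket-by-bucket. For each heavy bucket $\cB_j$ the prover commits (by circuit description) to a subset $G_j \subseteq \cB_j$ that it declares to equal $L'^{-1}(1) \cap \cB_j$, provides $\NP$-witnesses for $L'$-membership of elements of $G_j$, and uses the parallel lower bound protocol of Lemma~\ref{lem:PLBprotocol} to certify $|G_j|$ from below; the histogram upper bound $|\cB_j| \leq h_j 2^{(j+1)\eps}$ from Claim~\ref{claim:sizeBi} then pins $|\cB_j \setminus G_j|$ from above, effectively forcing $G_j$ to the truth. A final global PLB invocation on the $\Bad$ set $\{y : O(y) \neq L'(y)\}$ caps the total declared disagreement by the admissibility budget $2^n/n^c$. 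For each specific query $y_i$ the prover then produces $b_i$ consistent with its bucket commitment, accompanied by the corresponding $\NP$-witness when $b_i = 1$. Light queries are handled by a simulation argument in the style of \cite{BT06}: near-uniformity of $\Pd_x$ on light buckets implies that a prover lying on more than a $2/n^c$ fraction of light queries would implicitly define an oracle with uniform-error exceeding $1/n^c$, contradicting admissibility. The principal obstacle is quantitative --- composing the Wasserstein slack $20/t$, the geometric bucket slack $\eps$, and PLB's multiplicative slack, while keeping the aggregate implicit lying mass under $\Pd_x$ well below the reduction's tolerance $1/n^c$ --- which forces $\eps = 1/\poly(n)$ and $t = \poly(n)$. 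Once the budgets balance, every subprotocol is public-coin, so the composed protocol is public-coin directly, without a Goldwasser--Sipser transformation, vindicating the paper's claim that VerifyHist lets one bypass the inherently private-coin intuition of BT's original hiding protocol.
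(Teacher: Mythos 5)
Your architecture (verify the histogram of $\Pd^{R,x}$ once with VerifyHist, split into heavy and light via a threshold, certify probabilities with the parallel lower bound protocol, keep everything public-coin, and hand the result to a BT-style simulation) is the paper's architecture, but the soundness mechanisms you put inside it do not work, and the one mechanism that actually carries the argument is missing. Having the prover ``name the bucket $\cB_j$ containing $y_i$ and supply a preimage'' certifies nothing: a preimage only shows $\Pd^{R,x}(y_i)\geq 2^{-n}$, and a verified histogram constrains aggregate bucket masses, not the bucket membership of individual strings. The paper must run the parallel lower bound protocol on $|C^{-1}(y_i)|$ for every sampled $y_i$ to stop the prover from over-claiming probabilities, and then uses a global mass check (the claimed light probabilities must sum to $\approx 1-\pH$) to stop under-claiming; neither appears in your classification step. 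Likewise, your per-bucket commitment $G_j$ with a PLB lower bound and the bound $|\cB_j|\leq h_j2^{(j+1)\eps}$ does not ``force $G_j$ to the truth'': the prover chooses what $|G_j|$ to claim, cannot over-claim (witnesses are required), but can under-claim arbitrarily, and bucket-by-bucket this is undetectable. The only anchor against under-reporting yes-instances is the advice $\gUYs=\Pr_{y\leftarrow\{0,1\}^m}[y\in L']$ combined with the Feigenbaum--Fortnow counting check on uniform samples; this is the heart of both BT's proof and the paper's, and it is absent from your proposal (you invoke ``$\poly(n)$ bits of advice'' without ever saying what they are or where they are used). Finally, a ``global PLB invocation on the $\Bad$ set'' cannot cap anything from above: the lower bound protocol only certifies lower bounds, and $\Bad=\{y:O(y)\neq L'(y)\}$ is not decidable by a small circuit, so it cannot even be given to that protocol.

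The simulation step has the same problem in a different guise. You run the reduction on a single $r^*$ and let the prover answer those queries; since the prover sees $r^*$, its answers need not be consistent with any oracle fixed independently of the reduction's randomness, which is what the reduction's correctness guarantee quantifies over. This is why Feigenbaum--Fortnow samples many independent runs $r_1,\dots,r_\ell$, checks the fraction of claimed yes-instances across all of them against the approximated $\gYLs$, and only then outputs the decision of a randomly chosen run. Your closing claim that a lying prover ``would implicitly define an oracle with uniform-error exceeding $1/n^c$, contradicting admissibility'' is circular --- the prover is adversarial and bound by no admissibility assumption; the protocol must \emph{detect} the lies, and the detection is exactly the counting check you omit. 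For comparison, the paper's heavy samples protocol reads $\gHs$ and $\gUHs$ directly off the verified histogram (with a randomly chosen threshold $\alpha$ so that little mass sits near the cutoff), its hiding protocol approximates $\gYLs$ from uniform samples using per-sample lower bound proofs together with the checks (a)--(e) anchored at $\gUYs$, and each step carries a quantitative soundness analysis of the kind in Sections 5 and 6; none of these can be replaced by the shortcuts proposed here.
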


As discussed earlier, the conclusion implies a collapse of the polynomial hierarchy to the third level. 

The theorem is stated for the set of uniform distributions $\cU$. Using the results of Ben-David et al.~\cite{BCGL92} and Impagliazzo and Levin~\cite{IL90}, the theorem can be shown to hold for any polynomial time samplable set of distributions $\cD$. This is nicely explained in \cite{BT06} (Section~5). 

We first give an overview of the original proof, and then describe how our new protocols fit in. 

\subsection{The proof of Bogdanov and Trevisan}
Suppose $R$ reduces the $\NP$-complete language $L$ to $(L', \cU)\in \distNP$. 
The goal is to give a (constant-round) $\AMpoly$ protocol for $L$ and its complement. As $\NPpoly=\AMpoly$, this will give the result. The idea is to simulate an execution of the reduction $R$ on input $x$ with the help of the prover. The verifier then uses the output of $R$ as its guess for $L(x)$. $R$ takes as input the instance $x$, randomness $r \in \{0,1\}^n$, and produces (non-adaptively) queries $y_1, \ldots, y_k \in \{0,1\}^m$ for the average-case oracle. The reduction is guaranteed to guess $L(x)$ correctly with high probability, provided the oracle answers are correct with high probability. 
As mentioned in Section~\ref{sec:PrelimWCtACRed}, we may assume that the queries $y_1, \ldots, y_k$ are identically (but not necessarily independently) distributed. We denote the resulting distribution of individual queries by $\Pd^{R,x}$, i.e.~$\Pd^{R,x}(y) = \Pr_{r}[R(x,r) = y]$ (where $R(x,r)$ simply outputs the first query of the reduction on randomness $r$). 

\paragraph{Handling uniform queries: the Feigenbaum-Fortnow protocol.}
The proof of \cite{BT06} relies on the following protocol by Feigenbaum and Fortnow \cite{FF93}. The protocol assumes that the queries are uniformly distributed, i.e.~$\Pd^{R,x}(y)=2^{-m}$ for all $y$. The advice for the $\AMpoly$ protocol is $\gUYs = \Pr_{y \inu \{0,1\}^m}[y \in L']$, i.e.~the probability of a uniform sample being a yes-instance. The protocol proceeds as follows. First, the verifier chooses random strings $r_1, \ldots, r_{\ell}$ and sends them to the prover. The honest prover defines $(y_{i1}, \ldots, y_{ik}):=R(x,r_i)$ for all $i$, and indicates to the verifier which $y_{ij}$ are in $L'$ (we call them yes-instances), and provides the corresponding $\NP$-witnesses. The verifier checks the witnesses, expects to see approximately a $\gUYs$ fraction of yes-answers, and rejects if this is not the case. The verifier then chooses a random $i$ and outputs $R(x,r_i, y_{i1}, \ldots, y_{ik})$ as its guess for $L(x)$. 

To see completeness, one uses a concentration bound to show that the fraction of yes-answers sent by the prover is approximately correct with high probability (one must be careful at this point, because the outputs of the reduction for a fixed $r_i$ are not independent). Finally, the reduction decides $L(x)$ correctly with high probability. 

To argue that the protocol is sound, we note that the prover cannot increase the number of yes-answers at all, as it must provide correct witnesses. Furthermore, the prover cannot decrease the number of yes-answers too much, as the verifier wants to see approximately a $\gUYs$ fraction. This gives that most answers provided by the prover are correct, and thus with high probability the reduction gets good oracle answers, in which case it outputs $0$ with high probability. 

We note that the Feigenbaum-Fortnow simulation protocol is public-coin.

\paragraph{The case of smooth distributions: the Hiding Protocol.}	
Bogdanov and Trevisan \cite{BT06} generalize the above protocol so that it works for distributions that are $\alpha$-smooth, i.e.~where $\Pd^{R,x}(y) \leq \alpha 2^{-m}$ for all $y$ and some threshold parameter $\alpha = \poly(n)$ (we say all samples are $\alpha$-light). If the verifier knew the probability $\gYs:=\Pr_{y\leftarrow \Pd^{R,x}}[y \in L']$, it is easy to see that the Feigenbaum-Fortnow protocol (using $\gYs$ instead of $\gUYs$ as above) can be used to simulate the reduction. Unfortunately, $\gYs$ cannot be handed to the verifier as advice, as it may depend on the instance $x$. Thus, \cite{BT06} give a protocol, named the \textit{Hiding Protocol}, that allows the verifier to obtain an approximation of $\gYs$, given $\gUYs$ as advice. 

The idea of the protocol is as follows: the verifier hides a $1/\alpha$-fraction of samples from $\Pd^{R,x}$ among uniform random samples (i.e.~it permutes all samples randomly). The honest prover again indicates the yes-instances and provides witnesses for them. The verifier checks the witnesses and that the fraction of yes-answers among the uniform samples is approximately $\gUYs$. If this is true, it uses the fraction of yes-answers among the samples from $\Pd^{R,x}$ as an approximation of $\gYs$. 

Completeness follows easily. The intuition to see soundness is that since the distribution is $\alpha$-smooth, and as the verifier hides only a $1/\alpha$ fraction of $\Pd^{R,x}$ samples among the uniform ones, the prover cannot distinguish them. 

We note that the intuition behind this protocol crucially relies on the fact that the verifier can keep some of its random coins private: the prover is not allowed to know where the distribution samples are hidden.

\paragraph{General distributions and the Heavy Samples Protocol.}
Finally, \cite{BT06} remove the restriction that $\Pd^{R,x}$ is $\alpha$-smooth as follows. We say $y$ is $\alpha$-heavy if $\Pd^{R,x}(y) \geq \alpha 2^{-m}$, and let 
$\gHs := \Pr_{y\inu \Pd^{R,x}}[\Pd^{R,x}(y) \geq \alpha 2^{-m}]$ be the probability of a distribution sample being heavy, and $\gYLs := \Pr_{y\inu \Pd^{R,x}}[y\in L' \land \Pd^{R,x}(y) < \alpha 2^{-m}]$ the probability of a distribution sample being a yes-instance and light.

We first note that if the verifier knows (an approximation of) both $\gHs$ and $\gYLs$, it can use the Feigenbaum-Fortnow approach to simulate the reduction: the verifier simply uses $\gYLs$ instead of $\gUYs$ in the protocol, and ignores the heavy samples. It can do this by having the prover indicate the $\alpha$-heavy instances, and checking that their fraction is close to $\gHs$. Using the lower bound protocol of Goldwasser and Sipser \cite{GS86} (see Section~\ref{sec:VerifyHistPLB}), the prover must prove that these samples are indeed heavy. Finally, for the heavy samples the verifier can simply set the oracle answers to $0$: this changes the oracle answers on at most a polynomially small (i.e.~a $1/\alpha$) fraction of the inputs, as by definition at most a $1/\alpha$ fraction of the $y$'s can be $\alpha$-heavy. Completeness is not hard to see, and soundness follows because a cheating prover cannot claim light samples to be heavy (by the soundness of the lower bound protocol), and thus, by the verifier's check, cannot lie much about which samples are heavy. 

If the verifier knows (an approximation of) $\gHs$, then it can use the hiding protocol to approximate $\gYLs$: the verifier simply ignores the heavy samples. This is again done by having the prover additionally tell which samples are $\alpha$-heavy (and prove this fact using the lower bound protocol). The verifier additionally checks that the fraction of heavy samples among the distribution samples is approximately $\gHs$, and finally uses the fraction of light distribution samples as approximation for $\gYLs$. 

It only remains to approximate $\gHs$. This is done using the \textit{Heavy Samples Protocol} as follows: the verifier samples $y_1, \ldots, y_k$ from $\Pd^{R,x}$ by choosing random $r_1, \ldots, r_k$ and letting $y_i := R(x,r_i)$. It sends the $y_i$ to the prover. The honest prover indicates which of them are heavy, and proves to the verifier using the lower bound protocol of \cite{GS86} that the heavy samples are indeed heavy and using the upper bound protocol of Aiello and H{\aa}stad \cite{AH91} that the light samples are indeed light. 
The verifier then uses the fraction of heavy samples as its approximation for $\gHs$. It is intuitive that this protocol is complete and sound. 
The upper bound protocol requires that the verifier knows a uniform random element (which is unknown to the prover) in the set on which the upper bound is proved. In our case, the verifier indeed knows the value $r_i$, which satisfies this condition. 

We note that this protocol relies on private-coins, as the verifier must keep the $r_i$ secret for the upper bound proofs. 

\subsection{Our Proof}
We give two new protocols to approximate the probabilities $\gHs$ and $\gYLs$, as defined in the previous section. These protocols can be used to replace the Hiding Protocol and the Heavy Samples Protocol of \cite{BT06}, respectively. Together with the Feigenbaum-Fortnow based simulation protocol of \cite{BT06}, this then yields a different proof of $\co\NP \subseteq \AMpoly$ under the given assumptions. 

\paragraph{Verifying histograms.}
We are going to employ the VerifyHist protocol by Haitner et al.~\cite{HMX10} to verify the histogram of a probability distribution. Recall that the $(\eps,t)$-histogram $h = (h_0, \ldots, h_t)$ of a distribution $\Pd$ is defined by letting $h_i := \Pr_{y\leftarrow \Pd}[y \in \cB_i]$, where $\cB_i := \left\{x: \Pd(x) \in (2^{-(i+1)\eps}, 2^{-i\eps}]\right\}$ (See Definition~\ref{def:histogram}). We will use the VerifyHist protocol for the distribution $\Pd^{R,x}$, as defined by the reduction $R(x,\cdot)$ under consideration, i.e.~$\Pd^{R,x}(y)=\Pr_{r}[R(x,r)=y]$. Intuitively, this protocol allows to prove that some given histogram $h$ is close to the true histogram of $\Pd^{R,x}$ in terms of the $1$st Wasserstein distance (also known as Earth Mover's distance). This distance between $h$ and $h'$ measures the minimal amount of work that is needed to push the configuration of earth given by $h$ to get the configuration given by $h'$: moving earth over a large distance is more expensive than moving it over a short distance. For formal definitions of histograms and the $1$st Wasserstein distance we refer to Section~\ref{sec:HistAndWasserstein}. 
\begin{lemma}[VerifyHist protocol of~\cite{HMX10}, informal]
	There is a constant-round public-coin protocol VerifyHist
	where the prover and the verifier get as input the circuit
	$R(x,\cdot)$ and a histogram $h$, and we have:
	
	\vspace{0.2cm}
	\noindent\textbf{Completeness:}
		If $h$ is the histogram of $\Pd^{R,x}$, then the verifier
		accepts with high probability.
		
	\vspace{0.2cm}	
	\noindent\textbf{Soundness:} If $h$ is far from the
		histogram of $\Pd^{R,x}$ in the $1$st Wasserstein 
		distance, then the verifier rejects with high	probability. 
\end{lemma}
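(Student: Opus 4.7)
The plan is to deduce the informal statement directly from the formal Lemma~\ref{lem:VerifyHist}: set $\eps \in (0,1)$ and $t = \lceil n/\eps \rceil$, read ``close to the histogram of $\Pd^{R,x}$'' as $h = h^{R(x,\cdot)}$ (giving completeness $1 - 2^{-n}$), and read ``far in $1$st Wasserstein distance'' as $\Wd(h^{R(x,\cdot)}, h) > 20/t$ (giving soundness $1 - 2^{-n}$). So the informal lemma is a parameter-free paraphrase of Lemma~\ref{lem:VerifyHist}, and the real work is to prove the formal version by following the two-stage construction of~\cite{HMX10}.

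The protocol has two parts, both built on top of the parallel lower bound protocol of Lemma~\ref{lem:PLBprotocol}. In the \emph{preimage test} the verifier draws $k = \poly(t)$ samples $y_1, \ldots, y_k$ from $\Pd^{R,x}$ by choosing random $r_i \in \{0,1\}^n$ and setting $y_i := R(x,r_i)$; the prover responds with a claimed bucket index $j_i$ for each $y_i$, together with the claim that $|R(x,\cdot)^{-1}(y_i)| \geq 2^{n-(j_i+1)\eps}$, and all $k$ bounds are certified by a single invocation of Lemma~\ref{lem:PLBprotocol}. The verifier then forms the empirical histogram $h'$ from the $j_i$'s and accepts only if $\Wdl(h, h')$ is small. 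In the \emph{image test}, the verifier asks the prover to certify, for each $i \in (t)$, that $|\cW_i| \geq w^h_i$, where $\cW_i := \{y : \Pd^{R,x}(y) \geq 2^{-i\eps}\}$ is presented as the image of a small circuit derived from $R(x,\cdot)$ and $w^h_i$ is a rescaling of the partial sums of $h$ (via Claim~\ref{claim:sizeBi}); once again Lemma~\ref{lem:PLBprotocol} handles all $t+1$ bounds in one shot.

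Completeness is routine: when $h = h^{R(x,\cdot)}$ the honest prover reports truthful bucket indices and truthful set sizes, so both calls to Lemma~\ref{lem:PLBprotocol} accept with probability at least $1 - 2^{-\Omega(n)}$, and a Chernoff bound (Lemma~\ref{lem:ChernoffBound}) applied to the $y_i$'s guarantees that $\Wdl(h, h')$ is $o(1/t)$ with overwhelming probability.

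Soundness is the crux and the step I expect to be the main obstacle. The preimage test prevents the prover from shifting the histogram to the right: by Lemma~\ref{lem:PLBprotocol} every claimed $j_i$ must satisfy $|R(x,\cdot)^{-1}(y_i)| \gtrsim 2^{n-(j_i+1)\eps}$, so $y_i$'s claimed bucket cannot be much larger than its true one, and translating this through Claim~\ref{claim:sizeBi} bounds $\Wdl(h, h^{R(x,\cdot)})$. Dually, the image test prevents shifting to the left: if $h$ places too much mass in small-index buckets then $w^h_i$ exceeds the true $|\cW_i|$ for some $i$, contradicting the soundness of Lemma~\ref{lem:PLBprotocol} applied to the image test, which bounds $\Wdr(h, h^{R(x,\cdot)})$. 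Combining $\Wdl + \Wdr = \Wd$ and carefully choosing $k$, the PLB error parameter, and the discretization slack so that every source of loss (sampling error, PLB soundness, granularity $1/t$) fits within the budget $20/t$ is where the technical care of~\cite{HMX10} concentrates.
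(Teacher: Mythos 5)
Your proposal is correct and matches the paper's treatment: the paper justifies this informal lemma simply by pointing to the formal Lemma~\ref{lem:VerifyHist} (with completeness $1-2^{-n}$ and soundness threshold $\Wd(h^C,h) > 20/t$), which is itself imported from~\cite{HMX10} and accompanied only by the same preimage-test/image-test sketch you give. Your additional detail on how the two tests respectively bound $\Wdl$ and $\Wdr$ is consistent with the paper's stated intuition, and the remaining quantitative bookkeeping is, as you note, deferred to~\cite{HMX10} rather than carried out in this paper.
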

The formal statement can be found in Section~\ref{sec:VerifyHistPLB}.

\paragraph{The new Heavy Samples Protocol.}
The idea to approximate the probability $\gHs$ is very simple. The honest prover sends the histogram of $\Pd^{R,x}$, and the verifier uses the VerifyHist protocol to verify it. Finally, the verifier simply reads the probability $\gHs$ from the histogram. 

There is a technical issue that comes with this approach. For example, it may be that all $y$'s with nonzero probability have the property that $\Pd^{R,x}(y)$ is very close, but just below $\alpha 2^{-m}$. In this case, a cheating prover can send a histogram claiming that these $y$'s have probability slightly above this threshold. This histogram has small Wasserstein distance from the true histogram, as the probability mass is moved only over a short distance. Clearly, the verifier's guess for $\gHs$ is very far from the true value in this case. 

We note that the same issue appears in the proof of \cite{BT06}, and we deal with it in exactly the same way as they do: we choose the threshold $\alpha$ randomly, such that with high probability $\Pr_{y\leftarrow \Pd^{R,x}}[\Pd^{R,x}(y) \text{ is close to }\alpha 2^{-m}]$ is small (see Section~\ref{sec:RandomThreshold} for the formal statement). 

\paragraph{A public-coin Hiding Protocol for smooth distributions.}
We would like the verifier to only send uniform random samples to the prover (as opposed to the original hiding protocol, where a few samples from the distribution are hidden among uniform samples). We first describe the main idea in the special and simpler case where $\Pd^{R,x}$ is $\alpha$-smooth. In this case, we can give the following protocol, which uses $\gUYs$ as advice: 

The verifier sends uniform random samples $y_1, \ldots, y_k$. The prover indicates for each sample whether it is a yes-instance, and provides witnesses. Furthermore, the prover tells $\Pd^{R,x}(y_i)$ to the verifier, and proves a lower bound on this probability. The verifier checks the witnesses and if the fraction of yes-instances is approximately $\gUYs$, and considers the histogram $h$ induced by the probabilities $\Pd^{R,x}(y_i)$, and in particular checks if the probability mass of $h$ is $1$. Finally, the verifier considers the histogram $h_Y$ induced by only considering the yes-instances, and uses the total mass in $h_Y$ as its approximation of $\gYLs$.

To see completeness, the crucial point is that the smoothness assumption implies that the verifier can get a good approximation of the true histogram.

Soundness follows because the prover cannot claim the probabilities to be too large (as otherwise the lower bound protocol rejects), and it cannot claim many probabilities to be too small, as otherwise the mass of $h$ gets significantly smaller than $1$. As it cannot lie much about yes-instances, this implies a good approximation of $\gYLs$. 

\paragraph{Dealing with general distributions.}
The above idea can be applied even to general distributions, assuming that the verifier knows the probability 
$\gUHs:= \Pr_{y\inu \{0,1\}^m}[\Pd^{R,x}(y) 
															 \geq \alpha 2^{-m}]$ 
of a uniform random sample being heavy. The prover still provides the same information. The verifier only considers the part of the induced histogram $h$ below the $\alpha2^{-m}$ threshold, and checks that the mass of $h$ below the threshold is close to $1-\gUHs$. 

As in the heavy samples protocol, we again encounter the technical issue that many $y$'s could have probability close to the threshold, in which case the prover can cheat. But, as discussed earlier, this situation occurs with small probability over the choice of $\alpha$. 

\paragraph{Approximating the probability of a uniform sample being heavy.}
Thus, it remains to give a protocol to approximate $\gUHs$. We do this in exactly the same way as the Heavy Samples protocol approximates $\gHs$. That is, given the histogram that was verified using VerifyHist, the verifier simply reads the approximation of $\gUHs$ from the histogram. The proof that this works is rather technical, as we must show that small Wasserstein distance between the true and the claimed histogram implies a small difference of the probability $\gUHs$ and its approximation read from the claimed histogram. We note that we include the protocol for approximating $\gUHs$ directly into our Heavy Samples protocol. 

\section{The New Protocols}
\label{sec:AvgCaseNewProtocols}
We give protocols to replace the Heavy Samples Protocol and the Hiding Protocol of \cite{BT06}. A technical overview including proof intuitions can be found in Section~\ref{sec:TechnicalOverviewWTA}. 
In this section, we give the two protocols and state the guarantees they give. The protocol analyses can be found in Sections~\ref{sec:AnalysisHeavySamples} and \ref{sec:AnalysisHiding}. 

\subsection{Choosing a Random Threshold}
\label{sec:RandomThreshold}
We let $\cA_{\alpha_0,\delta}$ be the uniform distribution on $\{\alpha_0(1+3\delta)^i: 0\leq i \leq 1/\delta\}$. This distribution will be used to choose a 
threshold parameter $\alpha$. The following claim is from \cite{BT06}. 

\begin{claim}[Choosing a random threshold]
	\label{claim:threshold}
	For every $\alpha_0 >0$ and $0<\delta<1/3$, and every distribution
	$\Pd$ on $\{0,1\}^m$ we have
	$$ \Exp_{\alpha \leftarrow \cA_{\alpha_0,\delta}}\left[
			\Pr_{y \leftarrow \Pd}[\Pd(y) \in (1 \pm \delta) \alpha 2^{-m}]
		 \right] \leq \delta.$$
\end{claim}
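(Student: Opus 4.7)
\vspace{0.2cm}
\noindent\textit{Proof plan.}
The plan is to exchange the outer expectation over $\alpha$ with the inner probability over $y$, and then bound, uniformly in each fixed $y$, the fraction of thresholds in the support of $\cA_{\alpha_0,\delta}$ that can be ``bad'' for $y$. Linearity of expectation gives
\begin{align*}
\Exp_{\alpha \leftarrow \cA_{\alpha_0,\delta}}\Pr_{y\leftarrow \Pd}\!\left[\Pd(y)\in(1\pm\delta)\alpha 2^{-m}\right]
= \sum_y \Pd(y)\cdot \Pr_{\alpha}\!\left[\Pd(y)\in(1\pm\delta)\alpha 2^{-m}\right],
\end{align*}
so since $\sum_y\Pd(y)\leq 1$ it suffices to show that the inner probability is at most $\delta$ for every fixed $y$.

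The crux is a geometric statement about the thresholds. Writing $\alpha_i:=\alpha_0(1+3\delta)^i$ for $i\in\{0,1,\ldots,\lfloor 1/\delta\rfloor\}$, I consider the open windows $I_i:=\bigl((1-\delta)\alpha_i 2^{-m},\,(1+\delta)\alpha_i 2^{-m}\bigr)$, and I would prove that for $0<\delta<1/3$ the $I_i$ are pairwise disjoint. By monotonicity of $\alpha_i$ in $i$ it is enough to check consecutive pairs, i.e.\ $(1+\delta)\alpha_i\leq(1-\delta)\alpha_{i+1}$, which is equivalent to $(1-\delta)(1+3\delta)\geq 1+\delta$. Expanding gives $1+2\delta-3\delta^2\geq 1+\delta$, i.e.\ $\delta(1-3\delta)\geq 0$, which is exactly the hypothesis on $\delta$.

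Given disjointness, the fixed real $p:=\Pd(y)$ can belong to at most one $I_i$, so the fraction of thresholds in the support of $\cA_{\alpha_0,\delta}$ that are bad for $y$ is at most $1/|\cA_{\alpha_0,\delta}|$. Since the support has $\lfloor 1/\delta\rfloor+1\geq 1/\delta$ elements, this is at most $\delta$, and combined with the displayed identity above this finishes the proof. The only nontrivial step is the disjointness inequality: the constant $3$ in the spacing factor $(1+3\delta)$ is tuned precisely so that the multiplicative gap between consecutive $\alpha_i$ strictly exceeds $(1+\delta)/(1-\delta)$ whenever $\delta<1/3$. I expect this algebraic check to be the only real obstacle; everything else reduces to linearity of expectation and a counting bound.
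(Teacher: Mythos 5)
Your proof is correct and is essentially the standard argument from [BT06], which the paper cites for this claim without reproducing a proof: swap the expectation and probability, observe that the windows $(1\pm\delta)\alpha_i 2^{-m}$ for the $\lfloor 1/\delta\rfloor+1$ thresholds are pairwise disjoint because $(1-\delta)(1+3\delta)>1+\delta$ for $0<\delta<1/3$, and conclude that each fixed $y$ is bad for at most one threshold. All steps check out, including the counting bound $1/(\lfloor 1/\delta\rfloor+1)<\delta$.
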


We get that with high probability over the choice of $\alpha$ there is only little mass close to the threshold: 
\begin{claim}
	\label{claim:smallMassAroundThreshold}
	For every distribution
	$\Pd$ on $\{0,1\}^m$ and $\eps \in (0,1)$,
	with probability at least $1-20\sqrt{\eps}$
	over the choice of $\alpha$ from $\cA_{\alpha_0, 4\eps}$,
	we have
	$ 
	  \Pr_{y \leftarrow \Pd}[\Pd(y) \in (1 \pm 4\sqrt{\eps}) 
				\alpha 2^{-m}] \leq \frac{1}{5}\sqrt{\eps}$.
\end{claim}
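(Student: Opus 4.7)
The plan is to combine Claim~\ref{claim:threshold} with Markov's inequality. The three constants in the statement are precisely calibrated for Markov: $(20\sqrt{\eps})\cdot(\sqrt{\eps}/5) = 4\eps$, which is exactly the classical expectation/threshold/tail-probability identity. This strongly suggests that the intended proof instantiates Claim~\ref{claim:threshold} at $\delta = 4\eps$, giving the distribution $\cA_{\alpha_0,4\eps}$ that appears in the statement.

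Concretely, I would first apply Claim~\ref{claim:threshold} to the given distribution $\Pd$ with $\delta = 4\eps$ to obtain
\[
\Exp_{\alpha \leftarrow \cA_{\alpha_0,4\eps}}\Bigl[\Pr_{y \leftarrow \Pd}\bigl[\Pd(y)\in (1\pm 4\eps)\alpha 2^{-m}\bigr]\Bigr] \;\le\; 4\eps.
\]
Since the random variable inside the expectation is nonnegative, Markov's inequality at threshold $\sqrt{\eps}/5$ yields
\[
\Pr_\alpha\Bigl[\Pr_y\bigl[\Pd(y)\in (1\pm 4\eps)\alpha 2^{-m}\bigr] > \tfrac{\sqrt{\eps}}{5}\Bigr] \;\le\; \frac{4\eps}{\sqrt{\eps}/5} \;=\; 20\sqrt{\eps},
\]
and taking complements reproduces the probability/threshold pair in the conclusion.

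The main obstacle is the mismatch between the $(1 \pm 4\eps)$-window that arises naturally from Markov and the wider $(1 \pm 4\sqrt{\eps})$-window in the statement. To close this gap I would re-open the counting argument behind Claim~\ref{claim:threshold} and redo it with the window width ($4\sqrt{\eps}$) decoupled from the grid step ($4\eps$): double-counting gives $\sum_\alpha \Pr_y[\Pd(y)\in(1\pm 4\sqrt{\eps})\alpha 2^{-m}] = \sum_y \Pd(y)\,N_y$, where $N_y$ is the number of grid points of $\cA_{\alpha_0,4\eps}$ whose $(1\pm 4\sqrt{\eps})$-window contains $\Pd(y)$. The hard part is that a crude worst-case bound $N_y = O(1/\sqrt{\eps})$ (obtained from window-length $\Theta(\sqrt{\eps})$ divided by grid spacing $\Theta(\eps)$ in log-scale) only yields an expectation of $\Theta(\sqrt{\eps})$, which Markov cannot convert into a $20\sqrt{\eps}$ tail at threshold $\sqrt{\eps}/5$. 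Exploiting additional structure — either the geometric regularity of $\cA_{\alpha_0,4\eps}$ combined with the $\ell_1$-normalization $\sum_y \Pd(y) = 1$, or a sharper per-$y$ bookkeeping that avoids the naive $1/\sqrt{\eps}$-factor — is what is needed to recover the $\Theta(\eps)$ expectation so that Markov still produces the desired $20\sqrt{\eps}$ tail.
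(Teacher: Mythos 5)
Your first two paragraphs reproduce the paper's entire proof: the paper literally says ``this follows from Claim~\ref{claim:threshold} by applying Markov's inequality,'' meaning exactly the instantiation $\delta=4\eps$ followed by Markov at threshold $\tfrac{1}{5}\sqrt{\eps}$, via the identity $20\sqrt{\eps}\cdot\tfrac{1}{5}\sqrt{\eps}=4\eps$ that you spotted. The window mismatch you flag in your third paragraph is a genuine defect of the \emph{statement}, not of your argument, and the repair you are searching for does not exist: the decoupled version with window $(1\pm 4\sqrt{\eps})$ over the grid $\cA_{\alpha_0,4\eps}$ is false for small $\eps$ (for $\eps\geq 1/400$ it is vacuous). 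The grid spans a log-range of only $\frac{1}{4\eps}\ln(1+12\eps)\approx 3$, while each window $(1\pm4\sqrt{\eps})\alpha 2^{-m}$ has log-width $\approx 8\sqrt{\eps}$. So one can build $\Pd$ by placing $\Theta(1/\sqrt{\eps})$ clusters of mass slightly above $\tfrac{1}{5}\sqrt{\eps}$ each, at log-spacing below $8\sqrt{\eps}$, covering the grid's entire range; this costs total mass only about $\frac{3}{8\sqrt{\eps}}\cdot\frac{1}{5}\sqrt{\eps}=\frac{3}{40}$, and the rest of the mass goes far away. For such a $\Pd$, \emph{every} grid point $\alpha$ has $\Pr_{y\leftarrow\Pd}[\Pd(y)\in(1\pm4\sqrt{\eps})\alpha 2^{-m}]>\tfrac{1}{5}\sqrt{\eps}$, so the conclusion fails with probability $1$. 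This also explains why your double-counting bound stalls at an expectation of order $\sqrt{\eps}$: that is the true worst-case order for the wide window, not an artifact of crude bookkeeping, so no additional structure can push it back down to $\Theta(\eps)$.

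The resolution is that $4\sqrt{\eps}$ in the statement should read $4\eps$; that narrower version is what the paper actually uses everywhere. Claim~\ref{claim:smallMassAroundThresholdTeps} instantiates the window as $(1\pm 100\sqrt{\teps})$ with $100\sqrt{\teps}=4\eps$, and the completeness and soundness proofs of the hiding protocol fix $\alpha$ so that $\Pr_{y\leftarrow\Pd^C}[\Pd^C(y)\in(1\pm4\eps)\alpha 2^{-m}]\leq\sqrt{\eps}$. So you should stop after your Markov step: it proves the corrected claim, by the same one-line argument as the paper's.
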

\begin{proof}
	This follows from Claim~\ref{claim:threshold} by applying
	Markov's inequality.
\end{proof}

\subsection{Preliminaries}
Since our protocols can be used to replace part of the proof of  \cite{BT06}, we mostly stick to their notation. 
We give a formal definition of interactive proofs, histograms, and the Wasserstein distance in Section~\ref{sec:SamplingPreliminaries}. As in \cite{BT06}, we let $\cA_{\alpha_0,\delta}$ be the uniform distribution on $\{\alpha_0(1+3\delta)^i: 0\leq i \leq 1/\delta\}$. This distribution will be used to choose a threshold parameter $\alpha$, such that only little probability mass is close to the threshold (see Section~\ref{sec:RandomThreshold}). We consider circuits $C: \{0,1\}^n \rightarrow \{0,1\}^m$, and the distribution $\Pd^C$ defined by $\Pd^C(y) = \Pr_{r\leftarrow \{0,1\}^n}[C(r) =y]$. We use the VerifyHist protocol of \cite{HMX10} which on input $(C,\eps,h)$ verifies that $h$ is close to the $(t,\eps)$-histogram of $\Pd^C$ (see Section~\ref{sec:VerifyHistPLB}). We also use the parallel lower bound protocol as stated in \cite{BT06}, which on input $(C,\eps,y_1, s_1, \ldots, y_k,s_k)$ ensures that $\forall i: |C^{-1}(y_i)| \geq (1-\eps)s_i$ (see Section~\ref{sec:VerifyHistPLB}). 

We define the following probabilities. 
For a given threshold parameter $\alpha > 1$, a circuit $C:\{0,1\}^n\rightarrow \{0,1\}^m$ and a nondeterministic circuit\footnote
{A \textit{nondeterministic circuit} $V$ is of the form 
	$V: \{0,1\}^m\times \{0,1\}^{\ell} \rightarrow \{0,1\}$, and 
	we say $y \in \{0,1\}^m$ is accepted by $V$ 
	(or also $y \in V$) if there exists $w \in \{0,1\}^{\ell}$
	such that $V(y,w)=1$.
} 
$V:\{0,1\}^m \times \{0,1\}^{\ell} \rightarrow \{0,1\}$ we let
\begin{align*}
	\begin{array}{ll}\vspace{.1cm}
					\gH = \Pr_{y \leftarrow \Pd^C}
						[\Pd^C(y) \geq \alpha 2^{-m}], &
					\gUH = \Pr_{y \leftarrow \{0,1\}^m}
							[\Pd^C(y) \geq \alpha 2^{-m}],	\\ 
					\gUY = \Pr_{y \leftarrow \{0,1\}^m}[y \in V], & 
		\gYL = \Pr_{y \leftarrow \Pd^C}[\Pd^C(y) < \alpha 
							2^{-m} \land y\in V].\\
	\end{array}
\end{align*}

In the technical overview as given above (Section~\ref{sec:TechnicalOverviewWTA}), we considered the circuit $R(x,r)$ defined by the reduction $R$, which for a fixed $x$ and randomness $r$ outputs the first reduction query $y$. The protocols we give in the following are supposed to get as input the circuit $C(r):=R(x,r)$ for a fixed $x$. Furthermore, for the new hiding protocol the input circuit $V$ is supposed to provide an $\NP$ verifier for the language $L'$ as described in the technical overview. 

\subsection{The new Heavy Samples Protocol}

Given a circuit $C$ and $\alpha > 0$, the goal of the heavy 
samples protocol is to estimate the probability of heavy elements. The first probability, which we denote by $\gH$ is the probability that an element chosen from $\Pd^C$ is $\alpha$-heavy, i.e.~satisfies $\Pd^C(y) \geq \alpha 2^{-m}$. 
The second probability, denoted $\gUH$ is the probability
that a uniform random element satisfies this property. 

\newcommand{\PiHeavy}{\ensuremath{\Pi^{\text{PubHeavy}, \alpha}}}
We give a protocol for the family of promise problems $\{\PiHeavy\}$, which is defined as follows.
\begin{align*}
	\PiHeavy_Y &:= \left\{ 
			(C, \pH, \pUH, \varepsilon) 
			: \pH = \gH \land \pUH = \gUH
		\right\} \\
	\PiHeavy_N &:= \left\{
		  (C, \pH, \pUH, \varepsilon)
			: \pH \notin [\gH\pm \frac{4}{5}\sqrt{\varepsilon}] \lor 
				\pUH \notin [\gUH\pm 10\sqrt{\varepsilon}]
	\right\}
\end{align*}
We assume that the input $(C, \pH, \pUH, \varepsilon)$ 
is such that $C: \{0,1\}^n \rightarrow \{0,1\}^m$ is a circuit,
$\pH,\pUH \in [0,1]$, and $\eps \in (0,1)$. 
The proof of the following theorem can be found in Section~\ref{sec:AnalysisHeavySamples}, and the protocol is stated below.

\begin{theorem}
	For every integer $\alpha_0$, with probability at least 
	$1-20\sqrt{\varepsilon}$ over the choice of $\alpha$ from 
	$\cA_{\alpha_0,4\varepsilon}$, the heavy samples protocol is
	a constant-round interactive proof for $\PiHeavy$ with
	completeness $1-2^{-n}$	and	soundness $1-2^{-n}$, where the
	verifier runs in time	$\poly(\frac{\Size(C)}{\eps})$. 
\end{theorem}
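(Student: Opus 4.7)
The plan is to instantiate the obvious protocol sketched in the technical overview: the prover sends a claimed $(\eps,t)$-histogram $h = (h_0,\ldots,h_t)$ of $\Pd^C$ with $t = \lceil m/\eps\rceil$, the verifier invokes the VerifyHist protocol of Lemma~\ref{lem:VerifyHist} on input $(C,\eps,h)$, and upon its acceptance it reads off from $h$ natural estimates $\tilde g_H$ and $\tilde g_{UH}$ of $\gH$ and $\gUH$. Concretely, $\tilde g_H$ is the sum of $h_i$ over buckets strictly above the threshold $\alpha 2^{-m}$, and $\tilde g_{UH}$ is roughly $\sum_{i} h_i\, 2^{i\eps}/2^m$ over the same range (with the $2^{i\eps}/2^m$ factor coming from Claim~\ref{claim:sizeBi}, which converts bucket mass into bucket size). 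The verifier accepts iff $|\pH - \tilde g_H|$ and $|\pUH - \tilde g_{UH}|$ are within appropriate slack. The round count, public-coin structure, and running time bound follow immediately from Lemma~\ref{lem:VerifyHist}.

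Completeness is immediate: on a $\PiHeavy_Y$ input, the honest prover sends the true histogram $h^C$; VerifyHist accepts except with probability $2^{-n}$, and the estimates read from $h^C$ are exactly $\gH$ and $\gUH$ by construction, matching the input values $\pH,\pUH$.

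For soundness I would condition on the good $\alpha$ event of Claim~\ref{claim:smallMassAroundThreshold}, which holds with probability at least $1-20\sqrt\eps$ and ensures that the $\Pd^C$-mass within a factor $(1\pm 4\sqrt\eps)$ of $\alpha 2^{-m}$ is at most $\sqrt\eps/5$. If the verifier accepts, the VerifyHist soundness gives $\Wd(h,h^C)\leq 20/t$ (up to failure $2^{-n}$), and the final check forces $\pH,\pUH$ to be close to the readouts; it thus suffices to bound $|\tilde g_H - \gH|$ and $|\tilde g_{UH}-\gUH|$, since a failure of either bound would together with the $\PiHeavy_N$ promise yield a contradiction. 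For $\tilde g_H$ the key observation is that the $(1\pm 4\sqrt\eps)$ window around the threshold corresponds on the log-scale to about $8\sqrt\eps/\ln 2$ units, i.e.\ a ``buffer'' of $\Theta(1/\sqrt\eps)$ consecutive buckets; transporting mass entirely across this buffer contributes at least $\Theta(1/\sqrt\eps)/t$ per unit to $\Wd(h,h^C)$, so the budget $20/t$ caps the crossing mass at $O(\sqrt\eps)$, and adding the $\sqrt\eps/5$ of genuine in-buffer mass yields the stated $\frac{4}{5}\sqrt\eps$ slack once constants are tracked.

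The main obstacle I expect is the analogous bound for $\tilde g_{UH}$, because bucket mass is reweighted by $2^{i\eps}/2^m$ and tiny discrepancies could in principle be amplified. Three sources of error have to be combined carefully: (i) the geometric slack inside Claim~\ref{claim:sizeBi}, which introduces a multiplicative $(2^\eps-1) = O(\eps)$ per bucket and hence an additive error of order $\eps \cdot \gUH$; (ii) the at most $(1+O(\sqrt\eps))$ multiplicative change in weighting factor for mass transported across the $O(\sqrt\eps)$-wide log-buffer; and (iii) the $O(\sqrt\eps)$ mass that genuinely crosses the threshold, now reweighted by the bucket-size factor. The saving grace is that for heavy buckets one has $i\eps \leq m - \log_2\alpha$, so each weight is bounded by $1/\alpha$ and $\gUH$ itself stays of order $1/\alpha$; reinvoking the good $\alpha$ event to handle the in-buffer mass and summing these three contributions should yield the target $10\sqrt\eps$ bound, and the looser constant (compared to the $\frac{4}{5}\sqrt\eps$ for $\gH$) is precisely what the weight amplification costs.
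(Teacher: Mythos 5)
Your protocol is the one the paper itself uses (prover sends a histogram, VerifyHist is run, the two probabilities are read off, and one conditions on the good-$\alpha$ event of Claim~\ref{claim:smallMassAroundThreshold}), but two steps fail as written. First, the discretization: with bucket width $\eps$ (i.e.\ $t=\lceil m/\eps\rceil$) the window $(1\pm 4\sqrt{\eps})\alpha 2^{-m}$ spans only about $8\sqrt{\eps}/(\eps\ln 2)=\Theta(1/\sqrt{\eps})$ buckets, so a deviation $\delta$ of the partial sums sustained across half that window costs only about $\delta\cdot 5.8/(\sqrt{\eps}\,t)$ in Wasserstein distance; the budget $20/t$ therefore only caps the crossing mass at roughly $3.5\sqrt{\eps}$, which already exceeds the $\tfrac{4}{5}\sqrt{\eps}$ tolerance hard-coded into the no-instances, before adding the genuine in-window mass and the verifier's acceptance slack. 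The paper avoids this by discretizing at the much finer scale $\teps=(4/100)^2\eps^2$, so that a probability window of width only $\Theta(\eps)$ already spans $\Theta(1/\eps)$ buckets, the crossing mass is forced down to $O(\eps)$, and the dominant error is the genuine near-threshold mass $\teps^{1/4}=\sqrt{\eps}/5$, giving exactly $4\teps^{1/4}=\tfrac45\sqrt{\eps}$. (Relatedly, the paper's verifier explicitly checks that the \emph{claimed} histogram has mass at most $\teps^{1/4}$ near the threshold — check (a) — which your verifier omits and which is invoked in the soundness argument.)

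Second, your error accounting for $\gUH$ misses the dominant adversarial contribution. Your items (i)--(iii) concern per-bucket rounding and mass moved across the narrow threshold buffer, but an accepted $h$ may also transport mass between two buckets that are both on the heavy side yet far apart; there the weight $2^{j\teps}$ changes by a factor $2^{(i'-i)\teps}$ that can be as large as $2^m/\alpha$, so no $(1+O(\sqrt{\eps}))$ bound on the reweighting applies, and the naive bound ``total variation times $1/\alpha$'' is unavailable because $\Wd(h^C,h)$ controls only partial sums, not $\sum_j|h_j-h^C_j|$. The actual argument must tie the weighted displacement to the Wasserstein cost: each elementary move of weight $w$ from bucket $i$ to bucket $i'$ costs $w(i'-i)/t$, while it changes the weighted sum by $w(2^{i'\teps}-2^{i\teps})\leq w\,2^{j^*\teps}(i'-i)(1-2^{-\teps})$ by Bernoulli's inequality, so the total change is at most $40\teps\,2^{j^*\teps}\leq 40\teps\,2^m$. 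Making this precise requires decomposing the difference vector into such elementary transports after normalizing its partial sums (the paper's Lemma~\ref{lem:pUHdifferslittle} together with Claims~\ref{claim:dprime}, \ref{claim:ddoubleprime} and \ref{claim:ddoubleprimeSumOfVectors}); this is the technical core of the soundness proof and is absent from your sketch.
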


\noindent\textbf{The heavy samples protocol.} On input $(C,\pH,\pUH,\eps)$:
\begin{itemize}			
	\item[] \textbf{Prover:} 
					Let $t := \left\lfloor \frac{n}{\teps}\right\rfloor$
					and
					$\teps := (\frac{4}{100})^2 \varepsilon^2$, and
					send an $(\teps,t)$-histogram $h\in [0,1]^{t+1}$ 
					to the verifier.
					
					If the prover is honest, it sends the
					$(\teps,t)$-histogram of $\Pd^C$, denoted by $h^C$.
	\item[] \textbf{Prover and Verifier:} 
					Run the VerifyHist protocol 
					(Lemma~\ref{lem:VerifyHist}) on input 
					$(C,\teps, h)$. The verifier rejects in case
					that protocol rejects.
	\item[] \textbf{Verifier:} 
					Let $j^* := 
					\max\{j: 2^{-(j+1)\teps} > \alpha 2^{-m}\}$. 
					Accept if and only if all of the following conditions hold:
					\begin{align*}
						\text{(a)}\qquad & 
							\sum_{j \in 
										\{j^*\pm \lceil 25/\sqrt{\teps}\rceil\}}
										h_j \leq \teps^{1/4}
						\qquad\qquad\quad
						\text{(b)}\qquad  
							\sum_{j \leq j^*} 
									h_j
							\in [\pH \pm \teps^{1/4}] \\
						\text{(c)}\qquad &
							\frac{1}{2^m}\sum_{j \leq j^*}
									h_j \cdot 2^{j\teps}
							\in [\pUH \pm 4\teps^{1/4}]
					\end{align*} 
\end{itemize}			

\subsection{The new Hiding Protocol}

Given a circuit $C$, a nondeterministic circuit $V$, and $\alpha > 0$, the goal of the hiding protocol is as follows. 
Given advice $\gUY$ and approximations of the probabilities $\gH$ and $\gUH$, the protocol approximates the probability $\gYL$ that an element is a yes-instance and $\alpha$-light. 

\newcommand{\PiPubHide}{\ensuremath{\Pi^{\text{Hide}, \alpha}}}
We give a protocol for the family of promise problems 
$\{\PiPubHide\}$, which is defined as follows.
\begin{align*}
	\PiPubHide_Y &:= \left\{ 
			(C, V, \pH, \pUH, \pYL, \varepsilon) 
			: \pH = \gH \land \pUH = \gUH \land \pYL = \gYL
		\right\} \\
	\PiPubHide_N &:= \Bigl\{
		  (C, V, \pH, \pUH, \pYL, \varepsilon) 
			: \pH \in [\gH \pm \frac{4}{5}\sqrt{\varepsilon}] \land 
				\pUH \in [\gUH \pm 10\sqrt{\varepsilon}]  \\
				&\qquad\qquad\qquad\qquad\qquad\qquad\land
				\pYL \notin [\gYL\pm 117\sqrt{\varepsilon}\alpha]
	\Bigr\}
\end{align*}
We assume that the input 
$(C, 
V, 
\pH, 
\pUH, 
\pYL, 
\varepsilon) $ 
is such that $C: \{0,1\}^n \rightarrow \{0,1\}^m$ is a circuit,
$V: \{0,1\}^m \times \{0,1\}^{\ell} \rightarrow \{0,1\}$ is a
nondeterministic circuit, $\pH,\pUH, \pYL \in [0,1]$, and 
$\eps \in (0,1)$. 
The proof of the following theorem can be found in Section~\ref{sec:AnalysisHiding}, and the protocol is given below.

\begin{theorem}
	For every integer $\alpha_0$, with probability at least 
	$1-20\sqrt{\varepsilon}$ over the choice of $\alpha$ from 
	$\cA_{\alpha_0,4\varepsilon}$,	
	the hiding protocol with advice $\gUY$ is a
	constant-round interactive 
	proof for $\PiPubHide$ with completeness $1-5\varepsilon$
	and soundness $6\varepsilon$, where the verifier runs in 
	time $\poly(\frac{\Size(C)+\Size(V)}{\eps})$. 
\end{theorem}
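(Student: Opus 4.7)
The plan follows the blueprint of the Heavy-Samples analysis combined with the sampling techniques of the Feigenbaum-Fortnow protocol. First I condition on the good event for the threshold: Claim~\ref{claim:smallMassAroundThreshold} (applied to $\Pd^C$) gives that with probability at least $1-20\sqrt{\eps}$ over $\alpha\leftarrow\cA_{\alpha_0,4\eps}$, the $\Pd^C$-mass of the ``boundary'' set $\{y:\Pd^C(y)\in(1\pm4\sqrt\eps)\alpha 2^{-m}\}$ is at most $\sqrt\eps/5$. Dividing this mass by the per-element lower bound $\alpha 2^{-m}(1-4\sqrt\eps)$ yields a bound of $O(\sqrt\eps/\alpha)$ on the size of the boundary set, hence on its uniform mass. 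All subsequent arguments are carried out under this good event, so that misclassifying the $O(1/\sqrt\teps)$ buckets straddling the threshold costs negligibly in every estimate.

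For completeness, suppose the input is in $\PiPubHide_Y$, so $\pH=\gH$, $\pUH=\gUH$, and $\pYL=\gYL$. The honest prover, for each uniform sample $y_i$ sent by the verifier, transmits the true $\Pd^C(y_i)$, an $\NP$-witness for every yes-instance, and runs the parallel lower-bound protocol (Lemma~\ref{lem:PLBprotocol}) on the claimed probabilities; by completeness of PLB all these succeed with probability $1-\eps$. Hoeffding's inequality (Lemma~\ref{lem:HoeffdingBound}) applied separately to the indicators for (i) yes-instances, (ii) heavy samples, and (iii) light yes-instances weighted by the bucket representative $2^m\cdot 2^{-j\teps}$, shows that the verifier's sum-checks against $\gUY$, $\pUH$, and $\pYL$ are each met within their prescribed slack when $k=\poly(1/\eps)$. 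A union bound over the $\poly(1/\eps)$ checks yields completeness $1-5\eps$.

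For soundness, suppose the input is in $\PiPubHide_N$, so $\pH$ and $\pUH$ are within their stated tolerances of the truth but $\pYL$ differs from $\gYL$ by more than $117\sqrt\eps\,\alpha$. PLB soundness ensures that with probability $1-\eps$ the prover's claim satisfies $p_i\leq\Pd^C(y_i)$ for every $i$, so the prover can only \emph{under-report} probabilities. Under-reporting shifts samples from heavy to light buckets, which inflates the verifier's sample-based estimate of the uniform-light mass and hence violates the check against $\pUH=\gUH\pm 10\sqrt\eps$; this bounds the aggregate weighted under-reporting. Analogously, the prover cannot produce false yes-witnesses and can only under-report yes-instances, which the $\gUY$-based fraction check bounds in count. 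Combining both bounds with the boundary-mass event, the sample-based estimator for $\gYL$ is within $O(\sqrt\eps)\cdot\alpha$ of the truth; a careful accounting of the constants (from the $10\sqrt\eps$-slack on $\pUH$, the $\sqrt\eps/5$ boundary mass, the $\teps^{1/4}$ slacks in the bucket sums, and the $2^{\teps}$ discretization ratio) yields the bound $117\sqrt\eps\,\alpha$, so the verifier rejects with probability $\geq 1-6\eps$.

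The main obstacle is the soundness error accounting: the $117\sqrt\eps\,\alpha$ budget must absorb the Hoeffding sampling error, the PLB slack, the bucket-discretization ratio $2^\teps$ per bucket, the $\sqrt\eps$-mass of the boundary region, and the $10\sqrt\eps$-slack in $\pUH$. The factor $\alpha$ is inherent because a single uniform sample of a light element represents up to $\alpha 2^{-m}\cdot 2^m=\alpha$ units of $\Pd^C$-mass per normalized sample, so each misclassification moves the $\gYL$-estimate by $\Theta(\alpha/k)$; aggregating over the controlled number of misclassifications produces the $\alpha$ factor. The parameter choice $\teps=(4/100)^2\eps^2$, inherited from the Heavy-Samples protocol, is designed precisely so that all of these error terms fit.
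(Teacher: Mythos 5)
Your completeness argument matches the paper's: condition on the good threshold event from Claim~\ref{claim:smallMassAroundThreshold}, then apply Chernoff/Hoeffding to the honest prover's counts and weighted sums (this is Lemma~\ref{lem:HighProbEstimates}) and union-bound with the lower bound protocol's error. The soundness argument, however, has a genuine gap. You correctly note that the parallel lower bound protocol forces the prover to only \emph{under-report} probabilities (equivalently $u(i)\geq u'(i)-1$), and that the count check against $\pUH$ bounds how many samples can be moved across the heaviness threshold. But you then claim that this same $\pUH$ check ``bounds the aggregate weighted under-reporting.'' It does not: check (c) only counts how many indices land in $\cH$ versus $\cL$; it says nothing about how far down \emph{within} the light region the prover pushes the labels. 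A cheating prover could keep every sample on the correct side of the threshold (so (a), (b), (c) all pass) while inflating $u(i)$ by $\lceil 1/\eps\rceil$ on the light samples, halving every weight $2^m2^{-u(i)\eps}$ and hence halving the verifier's estimate in check (e) --- an error of order $\gYL$, far beyond $117\sqrt{\eps}\,\alpha$ when $\gYL$ is a constant. The mechanism that actually blocks this is the verifier's check (d), $\frac{1}{k}\sum_{i\in\cL}2^m2^{-u(i)\eps}\in[1-\pH\pm 5\sqrt{\eps}]$, combined with the promise $\pH\in[\gH\pm\frac{4}{5}\sqrt{\eps}]$: since the honest weighted sum concentrates near $1-\gH$, any aggregate downward shift (the quantity $\mathsf{Loss}_{\cL\cap\cL'}(u',u)$ in the paper) is bounded by $O(\sqrt{\eps}\,\alpha)$, and since this is a sum of nonnegative terms its restriction to the yes-instances is bounded by the same amount. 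Your write-up never invokes check (d) or $\pH$ anywhere in the soundness argument, so this central step is missing.

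A secondary issue: you import $\teps=(4/100)^2\eps^2$, the $\teps^{1/4}$ bucket-sum slacks, and the $2^{\teps}$ discretization ratio from the heavy samples protocol. The hiding protocol does not use $\teps$; its buckets have width $\eps$ (with $t=\lceil n/\eps\rceil$), the discretization ratio is $2^{\eps}$, and the error terms are of order $\sqrt{\eps}$ and $\sqrt{\eps}\,\alpha$. This does not by itself break the argument, but it indicates the two analyses have been conflated, and the constant accounting you appeal to would not go through as written.
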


\noindent\textbf{The hiding protocol.} On input $(C, V, \pH, \pUH, \pYL, \varepsilon)$ and advice $\gUY$: 
\newcommand{\cJh}{\mathcal{J}_{\text{heavy}}}
\newcommand{\cAl}{\mathcal{A}_{\text{light}}}
\newcommand{\cBh}{\mathcal{B}_{\text{heavy}}} 
\begin{itemize}			
	\item[] \textbf{Verifier:} 
					Let $t:=\left\lceil \frac{n}{\varepsilon}\right\rceil$
					and let	$\cB_i$ for $i\in (t)$ be defined as in 
					Definition~\ref{def:histogram}.
					Let
					$k:=\ln(\frac{2}{\varepsilon})\alpha^2
							\frac{9}{2\varepsilon^2}$.
					Choose $y_1, \ldots, y_{k} \leftarrow \{0,1\}^m$, and 
					send $y_1, \ldots, y_{k}$ to the prover.
	\item[] \textbf{Prover:} 
					Send a labeling $u$, a set $\cY\subseteq [k]$ and 
					witnesses $(w_i)_{i\in \cY}$ to the verifier. 
					
					If the prover is honest, it sends 
					$\cY:=\{i: y_i \in V\}$, 
					and witnesses $(w_i)_{i\in \cY}$ such that 
					$V(y_i,w_i)=1$, and\footnote
					{For the special symbol $\infty$, we use the
						conventions $2^{-\infty \varepsilon} = 0$, 
						$\forall i \in \mathbb{N}: i < \infty$, 
						and 
						$\forall i \in \mathbb{N}: \infty + i = \infty$.
					}  $u$ such that for $i\in [k]$ we have
						$
									u(i) = 
									\begin{cases}
										j     & \text{if } \exists j: 
																				y_i \in \cB_j \\
										\infty  & \text{otherwise.} 
									\end{cases}
						$
	\item[] \textbf{Verifier:} 
					Let $\cL := \left\{i: 
												2^{-(u(i)+1) \varepsilon} <
												\alpha 2^{-m}\right\}$, 
							$\cH := [k]\setminus \cL$, and
					reject if one of the following 
					conditions does not hold:
					
					\begin{align*}
						\text{(a)} &\quad 
							\frac{|\cY|}{k} \in [\gUY\pm \varepsilon],
						\hspace{0.9cm}
						\text{(b)} \quad
							\forall i\in \cY: V(y_i, w_i) = 1, \\
						\text{(c)} &\quad
							\frac{|\cH|}{k}
									\in
									[\pUH \pm 3\sqrt{\varepsilon}],
						\qquad
						\text{(d)} \quad
							\frac{1}{k} \sum_{i\in \cL}
									2^m\cdot 2^{-u(i)\varepsilon}\in [
									1-\pH \pm 5\sqrt{\varepsilon}],\\
						\text{(e)} &\quad
							\frac{1}{k} \sum_{i \in \cL \cap \cY}
									2^m\cdot 2^{-u(i)\varepsilon}\in [
									\pYL \pm 5\sqrt{\varepsilon}].
					\end{align*}
	\item[] \textbf{Prover and Verifier:} 
					Run the parallel lower bound protocol 
					(see Lemma~\ref{lem:PLBprotocol})	on input
					$(C, \eps/2, y_1, s_1, \ldots, y_k, s_k)$, using
					the values	$s_i = 2^m \cdot 
					2^{-(u(i)+1)\varepsilon}$. 
\end{itemize}

\section{Analysis of the New Heavy Samples Protocol}
\label{sec:AnalysisHeavySamples}

Throughout the proof, we will use $\teps := (\frac{4}{100})^2 \varepsilon^2$. Note that then 
\begin{align*}
	\PiHeavy_N = \left\{
		  (C, \pH, \pUH, \varepsilon)
			: \pH \notin [\gH\pm 4\teps^{1/4}] \lor 
				\pUH \notin [\gUH\pm 50\teps^{1/4}]
	\right\},
\end{align*}
Also, Claim~\ref{claim:smallMassAroundThreshold} states the following when substituting $\teps$ for $\eps$:
\begin{claim}
	\label{claim:smallMassAroundThresholdTeps}
	For every distribution
	$\Pd$ on $\{0,1\}^m$ and $\eps \in (0,1)$,
	with probability at least $1-100\teps^{1/4}$
	over the choice of $\alpha$ from $\cA_{\alpha_0, 4\eps}$,
	we have
	$
	  \Pr_{y \leftarrow \Pd}[\Pd(y) \in (1 \pm 100\sqrt{\teps}) 
				\alpha 2^{-m}] \leq \teps^{1/4} $.
\end{claim}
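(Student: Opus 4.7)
The plan is to execute the two-line argument the authors indicate: apply Claim~\ref{claim:threshold} to obtain an expectation bound on the $\Pd$-mass lying close to the random threshold $\alpha 2^{-m}$, then invoke Markov's inequality to convert that expectation bound into a tail bound over the choice of $\alpha$.

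Concretely, I instantiate Claim~\ref{claim:threshold} with $\delta := 4\eps$ (valid whenever $4\eps < 1/3$; in the complementary regime the conclusion is vacuous since $20\sqrt{\eps} \geq 1$ already). Writing the inner probability as
$$f(\alpha) \;:=\; \Pr_{y \leftarrow \Pd}\bigl[\Pd(y) \in (1 \pm 4\eps)\,\alpha\, 2^{-m}\bigr],$$
the claim yields directly $\Exp_{\alpha \leftarrow \cA_{\alpha_0,4\eps}}[f(\alpha)] \leq 4\eps$. Since $f(\alpha) \in [0,1]$ is nonnegative, Markov's inequality at threshold $t := \tfrac{1}{5}\sqrt{\eps}$ then gives
$$\Pr_{\alpha}\!\bigl[f(\alpha) > \tfrac{1}{5}\sqrt{\eps}\bigr] \;\leq\; \frac{\Exp_\alpha[f(\alpha)]}{(1/5)\sqrt{\eps}} \;\leq\; \frac{4\eps}{(1/5)\sqrt{\eps}} \;=\; 20\sqrt{\eps},$$
and passing to the complement produces the stated probability-$(1-20\sqrt{\eps})$ guarantee that $f(\alpha) \leq \tfrac{1}{5}\sqrt{\eps}$.

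There is no genuinely hard step here; the whole argument is a constant-balancing exercise. With the Markov threshold pinned at $t = \tfrac{1}{5}\sqrt{\eps}$ by the target bound and with the $\cA_{\alpha_0, 4\eps}$ distribution prescribed by the hypothesis, the only parameter to choose is the $\delta$ in Claim~\ref{claim:threshold}, and the requirement $\delta/t = 20\sqrt{\eps}$ forces $\delta = 4\eps$---which also agrees with the $4\eps$ index on $\cA_{\alpha_0,\cdot}$. Both ingredients (Claim~\ref{claim:threshold} and Markov) do all of the work, so the proof is as short as the authors' one-line indication suggests.
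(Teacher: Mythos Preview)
Your proof is correct and is essentially identical to the paper's approach: the paper obtains this claim as a restatement of Claim~\ref{claim:smallMassAroundThreshold} (whose proof is precisely ``Claim~\ref{claim:threshold} plus Markov''), and you have simply carried out that same two-step argument directly with $\delta=4\eps$. Your handling of the edge case $4\eps\geq 1/3$ is also fine.
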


\subsection{Proof of Completeness: Overview}
\label{sec:HeavySamplesCompletenessOverview}
We use the following lemma, which states that if there is only
little mass around the threshold $\alpha 2^{-m}$, then the 
verifier's checks are indeed satisfied for the honest prover
who sends $h^C$.
\begin{lemma}
	\label{lem:propertiesOfHC}
	Suppose 
	\begin{align}
	  \Pr_{y \leftarrow \Pd^C}[\Pd^C(y) \in 
		(1 \pm 100\sqrt{\teps}) \alpha 2^{-m}] \leq \teps^{1/4}.
		\label{eqnp:a1}
	\end{align}
	Then we have 
	\begin{align*}
		\text{(i)} \qquad &
		2^{-(j^*-\lceil 25/\sqrt{\teps} \rceil)\teps} \leq 2^{28\sqrt{\teps}} \alpha 2^{-m}\\
		&2^{-(j^*+\lceil 25/\sqrt{\teps} \rceil)\teps} \geq 2^{-28\sqrt{\teps}} \alpha 2^{-m}\\
		\text{(ii)} \qquad &
		\sum_{j \in 
										\{j^*\pm \lceil 25/\sqrt{\teps}\rceil\}}
										h^C_j \leq \teps^{1/4}\\
		\text{(iii)} \qquad &
		\sum_{j\leq j^*} 
									h^C_j
							\in [\gH \pm \teps^{1/4}] \\
		\text{(iv)} \qquad &
		\frac{1}{2^m}\sum_{j \leq j^*}
									h^C_j \cdot 2^{j\teps}
							\in [\gUH \pm 4\teps^{1/4}]
	\end{align*}
\end{lemma}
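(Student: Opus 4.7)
The plan is to work directly from the definition of $j^*$, namely that $2^{-(j^*+1)\teps}>\alpha 2^{-m}$ while $2^{-(j^*+2)\teps}\le \alpha 2^{-m}$, and then translate the assumption (\ref{eqnp:a1}) into bounds on probability mass of various ranges of $\Pd^C$-values. The basic arithmetic fact driving everything is that $\lceil 25/\sqrt{\teps}\rceil\,\teps\le 25\sqrt{\teps}+\teps\le 26\sqrt{\teps}$ and $2\teps\le 2\sqrt\teps$, so once I shift $j^*$ by $\pm\lceil 25/\sqrt{\teps}\rceil$ I move the corresponding boundary by a multiplicative factor of at most $2^{28\sqrt{\teps}}$, which in turn sits inside the window $(1\pm 100\sqrt{\teps})$. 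Part (i) is then just the above calculation.

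For part (ii), I would observe that $\bigcup_{j\in\{j^*\pm\lceil 25/\sqrt{\teps}\rceil\}}\cB_j=\{y:\Pd^C(y)\in I\}$ for the interval $I$ whose endpoints are bounded in (i). By (i) and the above, $I\subseteq(1\pm 100\sqrt{\teps})\alpha 2^{-m}$, so $\sum h^C_j=\Pr_{y\leftarrow\Pd^C}[\Pd^C(y)\in I]\le \teps^{1/4}$ by hypothesis. For part (iii) I would note that $\sum_{j\le j^*}h^C_j=\Pr[\Pd^C(y)>2^{-(j^*+1)\teps}]$, so the gap to $\gH=\Pr[\Pd^C(y)\ge \alpha 2^{-m}]$ equals $\Pr[\Pd^C(y)\in[\alpha 2^{-m},2^{-(j^*+1)\teps}]]$; by the choice of $j^*$ this interval is contained in $(1\pm 100\sqrt{\teps})\alpha 2^{-m}$, so the hypothesis gives the required $\teps^{1/4}$ bound.

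Part (iv) is the one that takes more care, and it is the step I expect to be the main obstacle, since here one needs to combine a count-based approximation with a mass-based approximation. Let $\cH:=\{y:\Pd^C(y)\ge\alpha 2^{-m}\}$ so $\gUH=|\cH|/2^m$. First I would apply Claim~\ref{claim:sizeBi} to convert $h^C_j\cdot 2^{j\teps}$ into a two-sided estimate of $|\cB_j|$, losing a multiplicative factor of $2^{\teps}\le 1+2\teps$; summing over $j\le j^*$ gives $\bigl|\sum_{j\le j^*}h^C_j\cdot 2^{j\teps}/2^m-\sum_{j\le j^*}|\cB_j|/2^m\bigr|\le 2\teps$. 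Next, the symmetric difference between $\cH$ and $\bigcup_{j\le j^*}\cB_j$ equals $\{y:\alpha 2^{-m}\le\Pd^C(y)\le 2^{-(j^*+1)\teps}\}\subseteq\{y:\Pd^C(y)\in(1\pm 100\sqrt{\teps})\alpha 2^{-m}\}$, and since every element in this set contributes at least $\alpha 2^{-m}$ to the total probability, its cardinality is at most $\teps^{1/4}\cdot 2^m/\alpha$, so $\bigl|\gUH-\sum_{j\le j^*}|\cB_j|/2^m\bigr|\le \teps^{1/4}/\alpha\le\teps^{1/4}$.

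Combining the two estimates via the triangle inequality yields an overall error of $\teps^{1/4}+2\teps\le 4\teps^{1/4}$ for all sufficiently small $\teps$, giving (iv). The only subtlety is keeping the constants straight: both the $(1\pm100\sqrt{\teps})$-window (which must swallow the $2^{\pm 28\sqrt\teps}$ factors) and the final $4\teps^{1/4}$ slack (which must absorb $2\teps$) rely on $\teps=(4/100)^2\eps^2$ being small enough, so I would check these inequalities explicitly at the end rather than leaving them implicit.
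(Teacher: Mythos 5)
Your proposal is correct and follows essentially the same route as the paper: part (i) by the maximality of $j^*$, part (ii) by showing the relevant buckets lie inside the $(1\pm 100\sqrt{\teps})\alpha 2^{-m}$ window, and parts (iii)--(iv) by bounding the discrepancy coming from the set $\{y:\alpha 2^{-m}\le \Pd^C(y)\le 2^{-(j^*+1)\teps}\}$ together with the $2^{\teps}$ count-versus-mass factor from Claim~\ref{claim:sizeBi}. The only (harmless) difference is bookkeeping: the paper routes the error in (iii) and (iv) through the extra bucket $h^C_{j^*+1}$ and part (ii), whereas you bound the same interval directly from the hypothesis, and both yield the stated constants.
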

With this, it is straightforward to prove completeness:
\begin{proof}[Proof of completeness]
	With high probability there is indeed little mass around
	the threshold: by 
	Claim~\ref{claim:smallMassAroundThresholdTeps}, 
	(\ref{eqnp:a1}) holds with probability at least 
	$1-100\teps^{1/4} = 1-20\sqrt{\eps}$ 
	over the choice of $\alpha$.
	Furthermore, by the completeness of VerifyHist, that 
	protocol accepts the true histogram $h^C$ with probability
	at least $1-2^{-n}$. Finally, the above lemma gives that 
	(\ref{eqnp:a1}) implies	(a)-(c).
\end{proof}

It remains to prove Lemma~\ref{lem:propertiesOfHC}. We focus on
the interesting parts of the proof, and defer the technical details to 
Section~\ref{sec:HeavySamplesComplDetails}.

\begin{proof}[Proof of Lemma~\ref{lem:propertiesOfHC}]
	We defer the proofs of (i) and (ii) to Section~\ref{sec:HeavySamplesComplDetails}.  
	Part (i) is a straightforward calculation, which follows
	by the definition of $j^*$. Part (ii) then follows from 
	part (i), since the probabilities we sum over are close to
	the threshold $\alpha 2^{-m}$ and we can apply 
	(\ref{eqnp:a1}). 
	
	Now part (iii) is easy to prove: by definition of $j^*$ and
	$h_j^C$ we have
	\begin{align*}
		\gH \in \bigl[\sum_{j\leq j^*}h_j^C, \sum_{j\leq j^*+1} 
			h_j^C
		\bigr].
	\end{align*}
	Now (ii) gives $h^C_{j^*+1}\leq \teps^{1/4}$, which gives the
	claim.
	
	The proof of (iv) again follows using (ii) (i.e.~as there is only little mass close to the threshold), 
	and we give the proof in Section~\ref{sec:HeavySamplesComplDetails}. 
\end{proof}

\subsection{Proof of Soundness: Overview}
We use the following lemma, which states that if there is 
only little mass around the threshold, and the guarantee 
on the Wasserstein distance (which holds with high probability
by the soundness of VerifyHist) indeed holds, then the 
values the verifier computes are close to the true values
$\gH$ and $\gUH$. 
\begin{lemma}
	\label{lem:soundnessHeavySamples}
	Suppose that the verifier's check (a) holds,
	\begin{align}
	  \Pr_{y \leftarrow \Pd^C}[\Pd^C(y) \in 
		(1 \pm 100\sqrt{\teps}) \alpha 2^{-m}] \leq \teps^{1/4},
		\label{eqnp:a2}
	\end{align}
	and $\Wd(h^C, h) \leq \frac{20}{t}$. 
	Then we have
	\begin{align*}
		\text{(i)} \qquad &
			\sum_{j\leq j^*} h_j
			\in \bigl[
				\sum_{j\leq j^*} h^C_j \pm 
				2 \teps^{1/4}
			\bigr],\\
		\text{(ii)} \qquad &
			\sum_{j\leq j^*} h_j
			\in \bigl[
				\gH \pm 
				3 \teps^{1/4}
			\bigr],\\
		\text{(iii)} \qquad &
			\sum_{j\leq j^*} h_j 2^{j\teps} \
				\in \bigl[
					\gUH \pm 46 \teps^{1/4}
				\bigr].
	\end{align*}
\end{lemma}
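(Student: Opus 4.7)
The plan is to compare $h$ and $h^C$ via their cumulative distributions $a_i := \sum_{j\leq i} h^C_j$ and $b_i := \sum_{j\leq i} h_j$. The Wasserstein hypothesis $\Wd(h^C,h) \leq 20/t$ is equivalent to $\sum_{i\in (t)} |a_i - b_i| \leq 20$, which is only an average bound. The key step is to upgrade this to a pointwise bound at $i = j^*$, exploiting that both histograms have little mass in a wide window around $j^*$: check~(a) controls $h$, while (\ref{eqnp:a2}) together with Lemma~\ref{lem:propertiesOfHC}(i)--(ii) controls $h^C$. Once this pointwise bound is established, parts (ii) and (iii) follow by triangle-inequality arguments against Lemma~\ref{lem:propertiesOfHC}(iii), (iv) respectively.

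For (i), I would argue by contradiction: suppose $b_{j^*} - a_{j^*} > 2\teps^{1/4}$ (the other sign is symmetric). Let $K := \lceil 25/\sqrt{\teps}\rceil$. For every $i\in [j^*, j^*+K]$, check~(a) and monotonicity of $b$ give $b_i - b_{j^*} \leq \teps^{1/4}$, while Lemma~\ref{lem:propertiesOfHC}(i)--(ii) give $a_i - a_{j^*} \leq \teps^{1/4}$. Combined with $b_i \geq b_{j^*}$, this forces
\begin{align*}
b_i - a_i \;\geq\; (b_{j^*} - a_{j^*}) - (a_i - a_{j^*}) \;\geq\; \teps^{1/4}
\end{align*}
on all $K+1$ indices, so $\sum_i |a_i - b_i| \geq K\teps^{1/4} \geq 25 \teps^{-1/4} > 20$, contradicting the Wasserstein hypothesis since $\teps < 1$. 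Part (ii) then follows immediately from (i) and Lemma~\ref{lem:propertiesOfHC}(iii) by the triangle inequality.

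For (iii), I would apply Abel summation to rewrite
\begin{align*}
\sum_{j\leq j^*}(h_j - h^C_j)\, 2^{j\teps} \;=\; (b_{j^*}-a_{j^*})\, 2^{j^*\teps} \;-\; \sum_{j=0}^{j^*-1}(b_j - a_j)(2^{\teps}-1)\, 2^{j\teps}.
\end{align*}
By the definition of $j^*$, $2^{j^*\teps} \leq 2^m/\alpha$, so part (i) bounds the first term by $2\teps^{1/4}\cdot 2^m/\alpha$. For the tail sum, pulling out $2^\teps - 1 \leq 2\teps$ and $2^{j\teps} \leq 2^m/\alpha$ leaves $\sum_j |a_j - b_j| \leq 20$, yielding $40\teps\cdot 2^m/\alpha$. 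Dividing by $2^m$ and using $\alpha \geq 1$ together with $\teps < \teps^{1/4}$ gives an error of at most $42\teps^{1/4}$ between $\frac{1}{2^m}\sum_{j\leq j^*} h_j\, 2^{j\teps}$ and $\frac{1}{2^m}\sum_{j\leq j^*} h^C_j\, 2^{j\teps}$; combining with Lemma~\ref{lem:propertiesOfHC}(iv) produces the desired $46\teps^{1/4}$ bound.

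The main obstacle is part (i): the Wasserstein hypothesis is only an average bound and a priori is consistent with an arbitrarily large pointwise deviation at $j^*$. The leverage comes from the fact that a large deviation at $j^*$ propagates across a window of width $2K+1 \approx 50/\sqrt{\teps}$, so its total contribution to $\sum_i |a_i - b_i|$ swamps the budget of $20$. Part (iii) is then essentially mechanical, but one has to check carefully that the geometric weights $2^{j\teps}$ do not blow up past the normalizing factor $2^m/\alpha$.
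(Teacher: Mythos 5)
Your proof is correct, and parts (i) and (ii) follow the paper's argument almost exactly: the paper also derives (i) by showing that a pointwise gap of $2\teps^{1/4}$ at $j^*$ would persist across a window of $\lceil 25/\sqrt{\teps}\rceil$ indices (it uses the window $\{j^*-\lceil 25/\sqrt{\teps}\rceil,\ldots,j^*\}$ \emph{below} $j^*$ rather than your window above, but the mechanism --- check (a) controlling $h$ and Lemma~\ref{lem:propertiesOfHC}(ii) controlling $h^C$ near the threshold --- is identical, as is the resulting contribution of at least $25/t$ to the Wasserstein distance). Where you genuinely depart from the paper is part (iii). The paper isolates this step as Lemma~\ref{lem:pUHdifferslittle} and proves it by a mass-transport argument: the difference vector $d=h-h^C$ is successively normalized to vectors $d'$ and $d''$ whose partial sums are nonpositive, $d''$ is decomposed into elementary two-point transport moves $v^{(a)}$ (Claims~\ref{claim:dprime}--\ref{claim:ddoubleprimeSumOfVectors}), and Bernoulli's inequality converts the exponentially weighted cost $\sum_a w_a(2^{i'(a)\teps}-2^{i(a)\teps})$ into the linear cost $\sum_a w_a(i'(a)-i(a))$ controlled by the Wasserstein distance. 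Your summation by parts,
\begin{align*}
\sum_{j\leq j^*}(h_j-h^C_j)\,2^{j\teps}=(b_{j^*}-a_{j^*})\,2^{j^*\teps}-(2^{\teps}-1)\sum_{j<j^*}(b_j-a_j)\,2^{j\teps},
\end{align*}
reaches the same bound $\delta+40\teps$ (using $\delta=2\teps^{1/4}$ from part (i), $2^{j\teps}\leq 2^m/\alpha$ from the definition of $j^*$, $2^{\teps}-1\leq 2\teps$, and $\sum_j|a_j-b_j|=t\cdot\Wd(h^C,h)\leq 20$) in a few lines, replacing three auxiliary claims; it is essentially the ``integration by parts'' dual of the paper's transport decomposition and yields the identical constant $42\teps^{1/4}$, hence the same final $46\teps^{1/4}$ after combining with Lemma~\ref{lem:propertiesOfHC}(iv). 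The only caveat, which you share with the paper, is that the window argument in (i) implicitly assumes all $\lceil 25/\sqrt{\teps}\rceil+1$ indices lie inside $(t)$.
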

With this lemma, it is straightforward to prove soundness:
\begin{proof}[Proof of soundness]
	By Claim~\ref{claim:smallMassAroundThresholdTeps}, 
	(\ref{eqnp:a2}) holds with probability at least 
	$1-100\teps^{1/4}= 1-20\sqrt{\eps}$ over the choice of 
	$\alpha$.
	Now, by the soundness of VerifyHist, we get that
	$\Wd(h^C, h) \leq \frac{20}{t}$ with probability
	at least $1-2^{-n}$ (or the verifier rejects).
	Clearly if (a) does not hold, the verifier rejects. 
	If (a) holds, then by the above lemma we  have (ii)
	and (iii), which as we are considering a no-instance 
	of $\PiHeavy$	gives that one of the following holds:
	\begin{align*}
		\sum_{j\leq j^*}h_j \notin [\pH \pm \teps^{1/4}],
		\qquad
		\sum_{j\leq j^*}h_j2^{j\teps} \notin
		[\pUH\pm 4\teps^{1/4}].
	\end{align*}
	Thus the verifier rejects in (b) or (c).
\end{proof}

It remains to prove Lemma~\ref{lem:soundnessHeavySamples}. 
We focus on the interesting parts of the proof, and defer
the details to Section~\ref{sec:HeavySamplesSoundnessDetails}.

\begin{proof}[Proof of Lemma~\ref{lem:soundnessHeavySamples}]
	We defer the proof of (i). It is not hard to see that if
	$\sum_{j\leq j^*}h^j$ is not in the desired interval, then
	$\Wd(h^C, h)$ is big: by (a) and 
	Lemma~\ref{lem:propertiesOfHC} (ii), only little mass can
	be around the threshold for both $h$ and $h^C$, and thus 
	a lot of mass must be moved from below to above the threshold,
	or vice versa.
	
	Part (ii) can then be proved easily: 
	by Lemma~\ref{lem:propertiesOfHC} (iii), the interval in
	(i) is contained in $[\gH \pm 3\teps^{1/4}]$.
	
	It remains to prove (iii). For $j\in (t)$ we consider the
	differences $d_j:=h_j-h^C_j$. 
	Then our assumption $\Wd(h^C,h) \leq 
	\frac{20}{t}$ gives
	\begin{align}
		\frac{20}{t} \geq \Wd(h^C,h)
		= \frac{1}{t}\sum_{i\in (t)} 
		\Bigl|
			\sum_{j\leq i}h_j - \sum_{j\leq i}h_j^C
		\Bigr|
		= \frac{1}{t}\sum_{i\in (t)} 
		\Bigl|
			\sum_{j\leq i}d_j
		\Bigr|
		= \Exp_{i \leftarrow (t)} \left[
			\Bigl|
			\sum_{j\leq i}d_j
		\Bigr|
		\right].
		\label{eqnp:45}
	\end{align}
	Furthermore, part (i) gives
	\begin{align}
		\sum_{j \leq j^*}d_j =
		\sum_{j \leq j^*}h_j - \sum_{j \leq j^*}h^C_j
		\stackrel{\text{(i)}}{\in} [\pm 2\teps^{1/4}].
		\label{eqnp:46}
	\end{align}
	At this point, we will use Lemma~\ref{lem:pUHdifferslittle}
	as stated below which contains the core of the argument.
	As (\ref{eqnp:45}) and (\ref{eqnp:46})
	hold, we may apply this lemma	to the $d_j$ as defined
	above, and obtain 
	\begin{align*}
		\frac{1}{2^m} \sum_{j\leq j^*}d_j 2^{j\teps}
		 \in \Bigl[\pm 42\teps^{1/4}\Bigr].
	\end{align*}
	Plugging in the definition of the $d_j$, we get
	\begin{align*}
		\frac{1}{2^m}\sum_{j\leq j^*}h_j 2^{j\teps}
		\in \Bigl[\frac{1}{2^m}\sum_{j\leq j^*}h^C_j 2^{j\teps} \pm 42\teps^{1/4}\Bigr]
		\subseteq \bigl[\gUH \pm 46\teps^{1/4}\bigr], 
	\end{align*}	
	where we used Lemma~\ref{lem:propertiesOfHC} (iv) for the above 
	set inclusion.
\end{proof}

\begin{lemma}
		\label{lem:pUHdifferslittle}
		Let $t$, $j^*$ and $\teps$ be as above, and fix any
		$d = (d_0, \ldots, d_{t}) \in \mathbb{R}^{t+1}$. 
		Suppose that 
		\begin{align*}
			\text{(i)} \qquad &	
				\sum_{j \leq j^*} d_j \in [-\delta, \delta], \qquad \qquad
			\text{(ii)} \qquad 
				\Exp_{i \leftarrow (t)}\bigl[ \bigl| \sum_{j\leq i} d_j		
				\bigr|\bigr] \leq \frac{20}{t}.
		\end{align*}
		Then we have
		$$
			\frac{1}{2^m} \sum_{j\leq j^*}d_j 2^{j\teps}
				\in \Bigl[\pm (\delta + 40\teps) \Bigr].
		$$
	\end{lemma}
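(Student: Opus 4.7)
The plan is to use summation by parts (Abel summation) to convert the weighted sum $\sum_{j\leq j^*} d_j 2^{j\teps}$ into something I can bound using the two hypotheses directly, since (i) controls one specific partial sum and (ii) controls the average size of all partial sums.

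Set $S_i := \sum_{j \leq i} d_j$ (with $S_{-1} := 0$), so $d_j = S_j - S_{j-1}$. A standard Abel summation with $A_j := 2^{j\teps}$ gives
\begin{equation*}
\sum_{j=0}^{j^*} d_j \, 2^{j\teps} \;=\; S_{j^*}\, 2^{j^*\teps} \;-\; \sum_{j=0}^{j^*-1} S_j \bigl(2^{(j+1)\teps} - 2^{j\teps}\bigr).
\end{equation*}
Dividing by $2^m$, I would bound the two pieces separately.

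For the boundary term, hypothesis (i) gives $|S_{j^*}| \leq \delta$, and the definition $j^* = \max\{j : 2^{-(j+1)\teps} > \alpha 2^{-m}\}$ yields $2^{j^*\teps} \leq 2^m/\alpha \leq 2^m$ (using $\alpha \geq 1$). Hence $\frac{1}{2^m}|S_{j^*}\, 2^{j^*\teps}| \leq \delta$. For the telescoping term, each increment satisfies $2^{(j+1)\teps} - 2^{j\teps} = 2^{j\teps}(2^{\teps}-1)$. Since $x \mapsto 2^x - 1$ is convex on $[0,1]$ and vanishes at $0$, the chord bound gives $2^{\teps}-1 \leq \teps$ for $\teps \in [0,1]$; combined with $2^{j\teps} \leq 2^{j^*\teps} \leq 2^m$, each increment is at most $\teps \cdot 2^m$. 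Therefore
\begin{equation*}
\frac{1}{2^m}\Bigl|\sum_{j=0}^{j^*-1} S_j\bigl(2^{(j+1)\teps} - 2^{j\teps}\bigr)\Bigr| \;\leq\; \teps \sum_{j=0}^{t} |S_j|.
\end{equation*}
Hypothesis (ii) rewrites as $\sum_{j \in (t)} |S_j| \leq 20(t+1)/t$, which for $t \geq 1$ is at most $40$ (and in the regime of interest $t$ is large, so essentially $20$). Adding the two contributions gives the claimed bound $\delta + 40\teps$.

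The only subtlety, which I would flag as the main thing to get right, is the cancellation between the exponential growth of $2^{j\teps}$ in the telescoping increments and the normalization $1/2^m$: without the cap $2^{j^*\teps} \leq 2^m$ from the definition of $j^*$, the bound would blow up. Everything else is routine once the Abel summation is written down.
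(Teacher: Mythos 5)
Your proof is correct, and it takes a genuinely different and noticeably shorter route than the paper's. The paper reshapes $d$ in two stages into a vector $d''$ with zero total and nonpositive prefix sums supported on $\{0,\dots,j^*\}$, decomposes $d''$ into elementary ``transport'' vectors each moving mass $w_a$ from a lower index to a higher one, and bounds each transport's contribution to the weighted sum via Bernoulli's inequality; the whole argument is run once for the upper bound and declared analogous for the lower bound. Your Abel summation replaces all of that machinery: the boundary term $S_{j^*}2^{j^*\teps}/2^m$ is at most $\delta$ by hypothesis (i) together with $2^{j^*\teps}\leq 2^m$ (which the paper also uses, and which needs $\alpha\geq 1$ --- guaranteed since the paper fixes $\alpha>1$), and the telescoping term is at most $\teps\sum_{i\in(t)}|S_i|$ because each increment $2^{(j+1)\teps}-2^{j\teps}=2^{j\teps}(2^{\teps}-1)\leq \teps\,2^m$ for $j<j^*$, using the chord bound $2^{\teps}-1\leq\teps$ on $[0,1]$. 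Your version also bounds the absolute value directly, so both directions of the inclusion come at once. One remark on constants: the paper's own convention (visible in its displayed chain bounding $\Wd(h^C,h)$) is that $\Exp_{i\leftarrow(t)}$ denotes $\frac{1}{t}\sum_{i\in(t)}$, under which hypothesis (ii) gives $\sum_{i\in(t)}|S_i|\leq 20$ and your bound improves to $\delta+20\teps$; under the literal $\frac{1}{t+1}$ reading you correctly note $\sum_{i\in(t)}|S_i|\leq 20(t+1)/t\leq 40$ for $t\geq 1$, which still yields exactly the stated $\delta+40\teps$. Either way the lemma follows, and you correctly identified that the cap $2^{j^*\teps}\leq 2^m$ from the definition of $j^*$ is the one non-routine ingredient.
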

	
	Again, we defer a few details of the proof to Section~\ref{sec:HeavySamplesSoundnessDetails}. 
	\begin{proof}
		We only prove the the inequality 
	$ \sum_{j\leq j^*}d_j 2^{j\teps}
				\leq (\delta + 40\teps)2^m$. The proof of
	$ \sum_{j\leq j^*}d_j 2^{j\teps}
				\geq -(\delta + 40\teps)2^m$ is analogous.	
	
	We first define a vector $d'$ such that
	for all $i < j^*$ it holds that 
	$\sum_{j\leq i} d_i' = \min\{\sum_{j\leq i}d_i, 0\}$, and 
	$\sum_{j\leq j^*} d_i'  = \sum_{j\leq j^*} d_i$. Note that this
	defines $d'$ uniquely. 	
	\begin{claim}
		\label{claim:dprime}
		We have
	\begin{align}
		&\Exp_{i\leftarrow (t)}\bigl[\bigl| 
				\sum_{j\leq i} d'_j
			\bigr|\bigr]
		\leq 
			\Exp_{i\leftarrow (t)}\bigl[\bigl| 
					\sum_{j\leq i} d_j
				\bigr|\bigr], \label{eqnp:37}\\
		&\sum_{j\leq j^*} d_j' 2^{j\teps} 
			\geq \sum_{j\leq i}d_j 2^{j\teps}.\label{eqnp:38}
	\end{align}
	\end{claim}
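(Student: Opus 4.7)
The plan is to prove (\ref{eqnp:37}) and (\ref{eqnp:38}) separately, working with the partial sums $S_i := \sum_{j \leq i} d_j$ and $S_i' := \sum_{j \leq i} d_j'$. By the defining property, $S_i' = \min(S_i, 0)$ for every $i < j^*$ and $S_{j^*}' = S_{j^*}$. To handle indices $i > j^*$ in (\ref{eqnp:37}), I would extend $d'$ by setting $d_j' := d_j$ for $j > j^*$, so that $S_i' = S_{j^*} + (S_i - S_{j^*}) = S_i$ for every $i \geq j^*$; this extension is irrelevant for (\ref{eqnp:38}), which only sees indices $j \leq j^*$.

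For (\ref{eqnp:37}), the comparison is pointwise. For $i < j^*$, either $S_i \geq 0$, in which case $S_i' = 0$ and $|S_i'| = 0 \leq |S_i|$, or $S_i < 0$, in which case $S_i' = S_i$ and $|S_i'| = |S_i|$. For $i \geq j^*$ the extension yields $|S_i'| = |S_i|$ directly. Averaging the inequality $|S_i'| \leq |S_i|$ over $i \leftarrow (t)$ gives (\ref{eqnp:37}).

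For (\ref{eqnp:38}), the natural tool is summation by parts. Writing $d_j = S_j - S_{j-1}$ with the convention $S_{-1} = 0$ (and analogously for $d_j'$), Abel summation gives
\[
  \sum_{j \leq j^*} d_j\, 2^{j\teps} \;=\; S_{j^*}\, 2^{j^*\teps} - \sum_{j=0}^{j^*-1} S_j\bigl(2^{(j+1)\teps} - 2^{j\teps}\bigr),
\]
and the analogous identity holds for $d_j'$. Subtracting and using $S_{j^*}' = S_{j^*}$ to kill the boundary term yields
\[
  \sum_{j \leq j^*} (d_j - d_j')\, 2^{j\teps} \;=\; -\sum_{j=0}^{j^*-1} (S_j - S_j')\bigl(2^{(j+1)\teps} - 2^{j\teps}\bigr).
\]
Now $S_j - S_j' = S_j - \min(S_j, 0) = \max(S_j, 0) \geq 0$, and the weight $2^{(j+1)\teps} - 2^{j\teps}$ is strictly positive, so the right-hand side is nonpositive. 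Rearranging yields (\ref{eqnp:38}).

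There is no real obstacle here: the key observation is that the modification $S_j \mapsto \min(S_j, 0)$ simultaneously shrinks each $|S_j|$ (which drives (\ref{eqnp:37})) and decreases each $S_j$ (which, combined with the positive Abel weights $2^{(j+1)\teps} - 2^{j\teps}$, drives (\ref{eqnp:38})). The only minor care needed is the boundary convention $S_{-1} = 0$ in the Abel summation and the harmless extension of $d'$ past $j^*$ used to interpret (\ref{eqnp:37}) over the full index range $(t)$.
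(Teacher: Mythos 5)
Your proof is correct, and for inequality (\ref{eqnp:38}) it reaches the paper's conclusion by a recognizably different packaging of the same underlying computation. The paper writes $d'=d+\sum_{k<j^*}e^{(k)}$, where each $e^{(k)}$ moves the mass $\max\{\sum_{j\leq k}d_j,0\}$ from coordinate $k$ to coordinate $k+1$, and then observes that each such move contributes $\max\{S_k,0\}\,(2^{(k+1)\teps}-2^{k\teps})\geq 0$ to the weighted sum; your Abel summation produces exactly the aggregate $\sum_{j<j^*}\max\{S_j,0\}\,(2^{(j+1)\teps}-2^{j\teps})$ in one identity, so the two arguments compute the same nonnegative correction term. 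What your route buys is brevity and the elimination of the telescoping verification $d+\sum_k e^{(k)}=d'$; what the paper's route buys is an explicit "mass-moving" picture that it reuses in spirit for the subsequent Claim~\ref{claim:ddoubleprimeSumOfVectors}. Your treatment of (\ref{eqnp:37}) by pointwise comparison of $|S_i'|=|\min\{S_i,0\}|\leq|S_i|$ is exactly the paper's, and your extension $d_j':=d_j$ for $j>j^*$ is the right reading of the paper's (slightly underspecified) definition of $d'$, since the $e^{(k)}$ leave those coordinates untouched. One cosmetic point: the right-hand side of (\ref{eqnp:38}) as printed reads $\sum_{j\leq i}d_j2^{j\teps}$, which is a typo for $\sum_{j\leq j^*}d_j2^{j\teps}$; you correctly proved the intended statement.
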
 
	As the proof is not difficult, we defer it to 
	Section~\ref{sec:HeavySamplesSoundnessDetails}
	and just give some intuition here. The first part follows 
	by definition. To prove the second part, we show that
	$d'$ can be obtained from $d$ by moving mass from coordinate
	$i$ to coordinate $i+1$ for each $i$ individually. This then
	implies the claim, as moving mass to larger coordinates 
	only increases the sum.	
	
	Now define $d''$ as follows: $d''_j := d'_j$ for $j< j^*$,
$d''_{j^*} := d'_{j^*} - \sum_{j \leq j^*}d'_j$, and
$d''_j := 0$ for $j> j^*$. 
\begin{claim}
	\label{claim:ddoubleprime}
	We have
	\begin{align}
		&\sum_{j\leq j^*} d''_j = 0,
		\label{eqnp:42}
		\\	
		&\forall i \in (t): \sum_{j\leq i} d''_j \leq 0,
	\label{eqnp:41} \\
		&\Exp_{i\leftarrow (t)}\bigl[\bigl| 
				\sum_{j\leq i} d''_j
			\bigr|\bigr]
		\leq 
			\Exp_{i\leftarrow (t)}\bigl[\bigl| 
					\sum_{j\leq i} d'_j
				\bigr|\bigr], \label{eqnp:39}\\
		&\sum_{j\leq j^*} d''_j 2^{j\teps} 
			\geq \sum_{j\leq j^*}d'_j 2^{j\teps}
						- \delta 2^{j^*\teps}
			.\label{eqnp:40}
	\end{align}
\end{claim}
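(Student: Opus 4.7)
The plan is to verify the four assertions in the order listed, noting that three of them fall out almost immediately from the definition of $d''$, while only (\ref{eqnp:40}) requires a genuine (albeit short) algebraic manipulation. The unifying observation is that the partial sums of $d''$ agree with those of $d'$ on $\{0,\ldots,j^*-1\}$, are forced to $0$ at $j^*$ by the recipe for $d''_{j^*}$, and then stay at $0$ for $i>j^*$ because the tail of $d''$ vanishes.

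For (\ref{eqnp:42}), I would just unfold the definition at $j^*$, writing $\sum_{j\leq j^*} d''_j = \sum_{j<j^*} d'_j + \bigl(d'_{j^*}-\sum_{j\leq j^*} d'_j\bigr) = 0$, which is the telescoping I designed $d''_{j^*}$ to produce. For (\ref{eqnp:41}) I would split by range: when $i<j^*$ the partial sum equals $\sum_{j\leq i} d'_j = \min\{\sum_{j\leq i}d_j,0\}\leq 0$ by construction of $d'$; when $i=j^*$ it is $0$ by (\ref{eqnp:42}); when $i>j^*$ the new coordinates are $0$, so the partial sum stays at $0$. For (\ref{eqnp:39}), the partial sums coincide for $i<j^*$ and for $i\geq j^*$ one has $|\sum_{j\leq i}d''_j|=0\leq |\sum_{j\leq i}d'_j|$, so pointwise domination transfers to the expectation.

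The only step with real content is (\ref{eqnp:40}). Using $d''_j=d'_j$ for $j<j^*$ and the definition of $d''_{j^*}$, I would compute
$$\sum_{j\leq j^*} d''_j\,2^{j\teps} \;=\; \sum_{j\leq j^*} d'_j\,2^{j\teps} \;-\; \Bigl(\sum_{j\leq j^*} d'_j\Bigr)2^{j^*\teps}.$$
By the construction of $d'$ we have $\sum_{j\leq j^*} d'_j = \sum_{j\leq j^*} d_j$, which lies in $[-\delta,\delta]$ by hypothesis~(i) of Lemma~\ref{lem:pUHdifferslittle}. Hence the correction term is bounded in absolute value by $\delta\,2^{j^*\teps}$, giving (\ref{eqnp:40}).

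The main obstacle, modest as it is, is bookkeeping: one has to remember that although $d'$ is defined by truncating partial sums at $0$ for $i<j^*$, its total at $j^*$ was deliberately left equal to $\sum_{j\leq j^*} d_j$ so that hypothesis~(i) of the outer lemma is applicable. Without this, the correction term in (\ref{eqnp:40}) would not be controllable by $\delta$.
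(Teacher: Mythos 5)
Your proof is correct and follows essentially the same route as the paper's: the same telescoping computation for (\ref{eqnp:42}) and (\ref{eqnp:40}), the same case split on $i<j^*$ versus $i\geq j^*$ for (\ref{eqnp:41}) and (\ref{eqnp:39}), and the same use of $\sum_{j\leq j^*}d'_j=\sum_{j\leq j^*}d_j$ together with hypothesis~(i) to bound the correction term by $\delta\,2^{j^*\teps}$. If anything, you spell out a detail the paper leaves implicit, namely why hypothesis~(i), stated for $d$, applies to the sum of $d'$.
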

The proof is straightforward, and we defer it to 
Section~\ref{sec:HeavySamplesSoundnessDetails}.

\begin{claim}
	\label{claim:ddoubleprimeSumOfVectors}
	There exists $t \in \mathbb{N}$ 
	and vectors $v^{(1)}, \ldots, v^{(t)} \in 
	\mathbb{R}^{t+1}$ such that the following holds:
	\begin{enumerate}[(i)]
	\item $d'' = \sum_{a\in [t]} v^{(a)}$,
	\item For every $a \in [t]$, $v^{(a)}$ has 
				exactly two nonzero entries, whose index positions 
				we denote by $i(a)$ and $i'(a)$ where 
				$i(a) < i'(a) \leq j^*$. Furthermore, 
				$v^{(a)}_{i(a)} = -w_a$ and 
				$v^{(a)}_{i'(a)} = w_a$ for some 
				$w_a \in \mathbb{R}$, $w_a > 0$.
	\end{enumerate}
\end{claim}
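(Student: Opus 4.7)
The plan is to prove the claim by induction on the number of nonzero coordinates of $d''$, exhibiting the elementary flow vectors $v^{(a)}$ greedily. Throughout, let $T$ (rather than the overloaded letter $t$) denote the number of vectors produced.

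The base case is $d'' = 0$, for which the empty decomposition ($T=0$) works. For the inductive step, I would argue as follows. Let $i$ be the smallest index with $d''_i \neq 0$. By property \eqref{eqnp:41}, the partial sum $\sum_{j \leq i} d''_j = d''_i$ is $\leq 0$, hence $d''_i < 0$; moreover $i \leq j^*$ since all coordinates above $j^*$ vanish. Because the total sum on $(j^*)$ is zero by \eqref{eqnp:42} and $d''_i < 0$, there must exist some index $>i$ at which $d''$ is positive; let $i'$ be the smallest such index. I would then set
\[ w := \min\{-d''_i,\, d''_{i'}\} > 0, \qquad v^{(1)}_{i} := -w, \qquad v^{(1)}_{i'} := w, \]
with all other entries zero, and apply the inductive hypothesis to $\tilde d := d'' - v^{(1)}$. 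This choice of $w$ guarantees that at least one of $d''_i,\, d''_{i'}$ is exactly cancelled, so $\tilde d$ has strictly fewer nonzero entries than $d''$, giving termination.

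The main verification step is checking that $\tilde d$ still satisfies the three invariants \eqref{eqnp:42}, \eqref{eqnp:41}, and support contained in $(j^*)$, so that the induction hypothesis applies. The support condition is immediate since $i, i' \leq j^*$. The total sum is preserved because $v^{(1)}$ sums to zero. For the partial sums, I would split into three ranges: for $k < i$, no change; for $i \leq k < i'$, the partial sum increases by $w$, and I would use the fact that $i'$ is the leftmost positive coordinate of $d''$ strictly to the right of $i$, so $\sum_{j \leq k} d''_j \leq d''_i = -(-d''_i) \leq -w$, giving $\sum_{j \leq k} \tilde d_j \leq 0$; for $k \geq i'$, $v^{(1)}$ contributes zero to the partial sum and the bound carries over from \eqref{eqnp:41}.

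I expect this verification of the partial-sum invariant on the interval $[i, i')$ to be the only nontrivial point, and it hinges crucially on the \emph{leftmost} choice of both $i$ and $i'$, which ensures there are no positive contributions between them that could cause a partial sum to drop below $-w$ before being nudged back up. Once the invariants are preserved, applying the inductive hypothesis to $\tilde d$ yields vectors $v^{(2)}, \ldots, v^{(T)}$ satisfying (ii), and concatenating with $v^{(1)}$ gives the desired decomposition. Positivity of all weights $w_a$ and the index condition $i(a) < i'(a) \leq j^*$ are built into the construction at each step.
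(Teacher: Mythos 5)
Your proposal is correct and follows essentially the same route as the paper: a greedy choice of the leftmost nonzero (necessarily negative) index $i$, the leftmost positive index $i' > i$, weight $w = \min\{-d''_i, d''_{i'}\}$, and preservation of the invariants $\sum_{j \le j^*} f_j = 0$ and $\forall k: \sum_{j \le k} f_j \le 0$, with termination because each step zeroes out a coordinate. The only cosmetic difference is that you phrase the construction as induction on the number of nonzero entries, whereas the paper writes it as a while loop with the same invariants.
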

We prove the claim in Section~\ref{sec:HeavySamplesSoundnessDetails}. There we show that 
the vectors $v^{(a)}$ can be defined iteratively 
by greedily picking the smallest nonzero index position $i$
(which must have negative $d_i$), and matching it with
the smallest index position $i'$ with $d_{i'} >0$. 

Now note that
\begin{align}
	\Exp_{i \leftarrow (t)}\bigl[
		\sum_{j\leq i} d''_j
	\bigr]
	&=
	\Exp_{i \leftarrow (t)}\bigl[
		\sum_{j\leq i} \sum_a v^{(a)}_j
	\bigr]
	= 
	\sum_a \Exp_{i \leftarrow (t)}\bigl[
		\sum_{j\leq i} v^{(a)}_j
	\bigr] \nonumber \\
	&= 
	-\sum_a \frac{1}{t} w_a (i'(a)- i(a)).
	\label{eqnp:43}
\end{align}
Then we find
\begin{align}
  \sum_{j\leq j^*}d''_j 2^{j\teps}
	&=
	\sum_{j\leq j^*}\sum_a v^{(a)}_j 2^{j\teps}
	=
	\sum_a \sum_{j\leq j^*} v^{(a)}_j 2^{j\teps}
	=
	\sum_a w_a (2^{i'(a) \teps} - 2^{i(a) \teps})\nonumber\\
	&=
	\sum_a w_a 2^{i'(a) \teps} (1 - 2^{(i(a)- i'(a)) \teps})\nonumber\\
	&\leq
	\sum_a w_a 2^{i'(a) \teps} (i'(a)-i(a))(1-2^{-\teps}) \nonumber\\
	&=
	(1-2^{-\teps}) \sum_a w_a 2^{i'(a) \teps} (i'(a)-i(a))\nonumber\\
	&\leq
	(1-2^{-\teps}) \sum_a w_a 2^{j^* \teps} (i'(a)-i(a))
	\nonumber\\
	&\stackrel{\text{(\ref{eqnp:43})}}{=}
	(1-2^{-\teps}) 2^{j^* \teps} \left(-t \cdot \Exp_{i \leftarrow (t)}\bigl[
		\sum_{j\leq i} d''_j
	\bigr]\right) \nonumber\\
	&\stackrel{\text{(\ref{eqnp:41})}}{=} 
	(1-2^{-\teps}) 2^{j^* \teps}\cdot t \cdot \Exp_{i \leftarrow (t)}\bigl[
		\bigl|\sum_{j\leq i} d''_j\bigr|
	\bigr] \nonumber\\
	&\stackrel{\text{(ii)}}{\leq} 2\teps \cdot 2^{j^* \teps} \cdot t \cdot \frac{20}{t}
	= 40 \teps 2^{j^* \teps}.
	\label{eqnp:44}
\end{align}
The first inequality above follows by Bernoulli's 
inequality\footnote{Bernoulli's inequality states that for any $n \in \mathbb{N}$, $n \geq 0$ and any $x \in \mathbb{R}$, $x \geq -1$ we have $(1+x)^n \geq 1+nx$.}
when setting $n=i'(a)-i(a)$ and $x=2^{-\teps}-1$.   
We now conclude the argument by calculating
\begin{align*}
		\sum_{j\leq j^*}d_j 2^{j\teps}
	&\stackrel{\text{(\ref{eqnp:38})}}{\leq}
		\sum_{j\leq j^*}d'_j 2^{j\teps}
	\stackrel{\text{(\ref{eqnp:40})}}{\leq}
		\sum_{j\leq j^*}d''_j 2^{j\teps} + \delta 2^{j^*\teps}
	\stackrel{\text{(\ref{eqnp:44})}}{\leq}
	  (40\teps + \delta)2^{j^*\teps} \\
		&\leq (40\teps + \delta)2^m.
\end{align*}	
	
\end{proof}

\subsection{Proof of Completeness: the Details}
\label{sec:HeavySamplesComplDetails}
In the following, we give the parts of the proof of 
Lemma~\ref{lem:propertiesOfHC} that we omitted in Section~\ref{sec:HeavySamplesCompletenessOverview}.
\begin{proof}[Proof of (i)]
	From the definition of $j^*$ we get that
	$2^{-(j^*+1)\teps} \in (\alpha 2^{-m}, 
	2^{\teps}\alpha 2^{-m}]$ (otherwise, $j^*$ would not
	be maximal).
	Thus, we find
	\begin{align*}
		2^{-(j^*-\lceil 25/\sqrt{\teps} \rceil)\teps}
		&\leq 2^{-(j^*-26/\sqrt{\teps})\teps}
		= 2^{-(j^*+1)\teps}
		  2^{(26/\sqrt{\teps}+1)\teps}
	  \leq 
		  2^{\teps}\alpha 2^{-m} 
		  2^{27\sqrt{\teps}} \\
		&\leq 
			\alpha 2^{-m} 2^{28\sqrt{\teps}}, \\
		2^{-(j^*+\lceil 25/\sqrt{\teps} \rceil)\teps}
		&\geq
		2^{-(j^*+26/\sqrt{\teps})\teps}
		= 2^{-(j^*+1)\teps}
		  2^{-(26/\sqrt{\teps}-1)\teps}
		\geq 
		  \alpha 2^{-m}
			2^{-27\sqrt{\teps}}.\qedhere
	\end{align*}
\end{proof}

\begin{proof}[Proof of (ii)]
	Note that 
	$h_j^C= \sum_{y: \Pd^C(y) \in (2^{-(j+1)\teps}, 2^{-j\teps}]} \Pd^C(y)$. Since $j\leq j^*+\lceil 25/\sqrt{\teps}\rceil$,
	we only sum over $y$ such that
	$$ 
		\Pd^C(y) \geq 2^{-(j^*+\lceil 25/\sqrt{\teps}\rceil+1)\teps} 
		= 2^{-\teps} 2^{-(j^*+\lceil 25/\sqrt{\teps}\rceil)\teps} \stackrel{\text{(i)}}{\geq} \alpha 2^{-m} 2^{-29\sqrt{\teps}}.
	$$
	On the other hand, because $j\geq 
	j^*-\lceil 25/\sqrt{\teps}\rceil$, for all $y$ we sum over,
	we have
	$$
	  \Pd^C(y) \leq 2^{-(j^*-\lceil 25/\sqrt{\teps}\rceil)\teps} \stackrel{\text{(i)}}{\leq} \alpha 2^{-m} 2^{28\sqrt{\teps}}.
	$$
	Thus, we conclude that
	\begin{align*}
		\sum_{j \in \{j^*\pm \lceil 25/\sqrt{\teps}\rceil\}}
										h^C_j
		\leq
		\sum_{y: \alpha 2^{-m} 2^{-29\sqrt{\teps}}\leq \Pd^C(y) \leq \alpha 2^{-m} 2^{28\sqrt{\teps}}} \Pd^C(y) \leq \teps^{1/4},
	\end{align*}
	where the last inequality holds because $[2^{-29\sqrt{\teps}}, 2^{28\sqrt{\teps}}] \subseteq (1\pm 100 \sqrt{\teps})$,
	and thus (\ref{eqnp:a1}) can be applied.
\end{proof}

\begin{proof}[Proof of (iv)]
	By definition of $h^C_j$, we have that for each $j$
	$$
		\Pr_{y\leftarrow \{0,1\}^m}\bigl[\Pd^C(y) \in (2^{-(j+1)\teps}, 
			2^{-j\teps}]\bigr] \in \bigl[\frac{1}{2^m} h^C_j 2^{j\teps}, \frac{1}{2^m} h^C_j 2^{(j+1)\teps}\bigr].
	$$
	Thus the definition of $j^*$ gives
	\begin{align}
		\gUH \in \bigl[
			\frac{1}{2^m}\sum_{j\leq j^*}h^C_j2^{j\teps}, 
			\frac{1}{2^m}\sum_{j\leq j^*+1}h^C_j2^{(j+1)\teps}
		\bigr]. 
		\label{eqnp:36}
	\end{align}
	Now we find
	\begin{align*}
		\sum_{j\leq j^*+1}h^C_j2^{(j+1)\teps}
		&= \sum_{j\leq j^*}h^C_j2^{(j+1)\teps} + 
			 h^C_{j^*+1}2^{(j^*+2)\teps} \\
		&\stackrel{\text{(ii)}}{\leq}
			 2^{\teps}\sum_{j\leq j^*}h^C_j2^{j\teps} + 
			 \teps^{1/4} 2^{\teps} 2^{(j^*+1)\teps}\\
		&<
		 (1+2\teps)\sum_{j\leq j^*}h^C_j2^{j\teps} + 
			 \frac{\teps^{1/4} 2^{\teps}}{\alpha}2^m \\
		&\leq
		\sum_{j\leq j^*}h^C_j2^{j\teps} + 
			 \bigl(2\teps + \frac{\teps^{1/4} 2^{\teps}}{\alpha}\bigr)2^m\\
		&\leq 
			\sum_{j\leq j^*}h^C_j2^{j\teps} + 4\teps^{1/4} 2^m,
	\end{align*}
	where the second inequality
	follows by definition of $j^*$, and the third inequality holds
	since $2^{j \teps} \leq 2^m$ for any $j \leq j^*$ and $\sum_{j\leq j^*} h_j^C \leq 1$. Plugging 
	this into (\ref{eqnp:36}) gives the claim.
\end{proof}

\subsection{Proof of Soundness: the Details}
\label{sec:HeavySamplesSoundnessDetails}

\begin{proof}[Proof of Lemma~\ref{lem:soundnessHeavySamples} (i)]
	First suppose
	\begin{align}
		\sum_{j\leq j^*} h_j
		< \sum_{j\leq j^*} h^C_j - 2\teps^{1/4}.
		\label{eqnp:a3}
	\end{align}
	We show that this implies
	$\Wd(h^C, h) > \frac{20}{t}$, contradicting our
	assumption.
	For any $i\in \{j^*-\lceil 25/\sqrt{\teps}\rceil, 
	\ldots, j^*\}$ we have
	\begin{align}
		\sum_{j \leq i} h_j 
		&\leq
		\sum_{j \leq j^*} h_j 
		\stackrel{\text{(\ref{eqnp:a3})}}{<} 
		\sum_{j \leq j^*} h^C_j - 2\teps^{1/4}
		\leq 
		\sum_{j \leq i} h^C_j - \teps^{1/4},
	\end{align}
	where the last inequality holds by 
	Lemma~\ref{lem:propertiesOfHC} (ii).
	This gives
	$\stackrel{\longleftarrow}{\Wd}(h^C,h) 
	 \geq \frac{1}{t} \cdot \frac{25}{\sqrt{\teps}} 
	      \cdot \teps^{1/4} \geq \frac{25}{t}$.
				
	Now assume that
	\begin{align}
		\sum_{j\leq j^*} h_j
		> \sum_{j\leq j^*} h^C_j + 2\teps^{1/4}.
		\label{eqnp:a4}
	\end{align}
	Again, we show that this implies
	$\Wd(h^C, h) > \frac{20}{t}$. Similar to above, 
	for any $i\in \{j^*-\lceil 25/\sqrt{\teps}\rceil, 
	\ldots, j^*\}$ we have
	\begin{align}
		\sum_{j \leq i} h_j 
		&\geq
		\sum_{j \leq j^*} h_j - \teps^{1/4}
		\stackrel{\text{(\ref{eqnp:a4})}}{>} 
		\sum_{j \leq j^*} h^C_j + \teps^{1/4}
		\geq
		\sum_{j \leq i} h^C_j + \teps^{1/4},
	\end{align}
	where the first inequality holds by the verifier's
	check (a). Thus
	$\stackrel{\longrightarrow}{\Wd}(h^C,h) 
	 \geq \frac{1}{t} \cdot \frac{25}{\sqrt{\teps}} 
	      \cdot \teps^{1/4} \geq \frac{25}{t}$.
\end{proof}

\begin{proof}[Proof of Claim~\ref{claim:dprime}]
	Inequality (\ref{eqnp:37}) holds because for each $i$, 
	$\bigl| \sum_{j\leq i} d'_i	\bigr| \leq \bigl| 
	\sum_{j\leq i} d_i \bigr|$ by definition. 
	
	To see (\ref{eqnp:38}), for each $k < j^*$ we define
	$e^{(k)} = (e_0^{(k)}, \ldots, e_{t}^{(k)})$ as follows.
	If $\sum_{j\leq k}d_j > 0$, we let 
	\begin{align*}
		e_i^{(k)} :=
		\begin{cases}
		 -\sum_{j\leq k}d_j & \text{if }  i = k, \\
		 \sum_{j\leq k}d_j  & \text{if }  i = k+1, \\
		 0       & \text{otherwise.} 
		\end{cases}
	\end{align*}
	and thus
	\begin{align*}
		\sum_{j\leq i} e_i^{(k)} =
		\begin{cases}
		 -\sum_{j\leq k}d_j & \text{if }  i = k, \\
		 0 & \text{otherwise.} 
		\end{cases}
	\end{align*}
		
	If $\sum_{j\leq k}d_j \leq 0$, we let $e_i^{(k)} = 0$ for all
	$i$. Now we find for any $k$ and $i$ that
	\begin{align}	
		\sum_{j\leq i} (d_j+e_j^{(k)}) = 
		\sum_{j\leq i} d_j + \sum_{j\leq i} e_j^{(k)} =
		\begin{cases}
		 \sum_{j\leq i}d_j & \text{if }  i \neq k, \\
		 \min\{\sum_{j\leq i}d_j,0\}     & \text{if }  i = k.  
		\end{cases}
	\end{align}
	This implies that	$d+ \sum_{k< j^*}e^{(k)} = d'$. Since 
	by definition it holds that
	$\sum_{j< j^*}e_j^{(k)} 2^{j\teps}\geq 0$ for any $k$, 
	we find
	\begin{align*}
		\sum_{j\leq j^*} d_j'2^{j\teps}
		&= \sum_{j\leq j^*} (d_j + \sum_{k<j^*}e_j^{(k)})2^{j\teps}
		= \sum_{j\leq j^*} d_j 2^{j\teps} 
			+\sum_{j\leq j^*} \sum_{k<j^*}e_j^{(k)} 2^{j\teps} \\
		&=\sum_{j\leq j^*} d_j 2^{j\teps} 
		 + \sum_{k<j^*} \sum_{j\leq j^*} e_j^{(k)} 2^{j\teps} 
	  \geq \sum_{j\leq j^*} d_j 2^{j\teps}. \qedhere
	\end{align*}	
\end{proof}

\begin{proof}[Proof of Claim~\ref{claim:ddoubleprime}]
	Equality (\ref{eqnp:42}) holds because 
	$\sum_{j\leq j^*} d''_j = \sum_{j< j^*} d'_j + d'_{j^*} - \sum_{j \leq j^*}d'_j = 0$,
	and (\ref{eqnp:41}) follows by definition.
	Inequality (\ref{eqnp:39}) follows because
	for $i < j^*$ we have 
	$\bigl| \sum_{j\leq i} d''_j \bigr| =
	 \bigl| \sum_{j\leq i} d'_j \bigr|$, and
	for $i\geq j^*$ we have
	$0=\bigl| \sum_{j\leq i} d''_j \bigr| \leq 
	 \bigl| \sum_{j\leq i} d'_j \bigr|$. 
	
	To see (\ref{eqnp:40}), we note that
	\begin{align*}
		\sum_{j\leq j^*}d_j''2^{j\teps} 
		&=
		\sum_{j<j^*}d_j'2^{j\teps} + 
		d'_{j^*}2^{j^*\teps} - (\sum_{j \leq j^*}d'_j) 2^{j^*\teps}
		= \sum_{j\leq j^*}d_j'2^{j\teps}
		  - (\sum_{j \leq j^*}d'_j) 2^{j^*\teps}\\
		&\stackrel{\text{(i)}}{\geq}
			\sum_{j\leq j^*}d_j'2^{j\teps} - \delta 2^{j^*\teps}.	\qedhere
	\end{align*}
\end{proof}

\begin{proof}[Proof of 
		Claim~\ref{claim:ddoubleprimeSumOfVectors}]
We define the vectors $v^{(1)}, \ldots, v^{(t)}$ using
the following procedure.
\lstdefinelanguage{pseudocode}{numbers=left,numberstyle=\tiny,mathescape,escapechar=*,morekeywords={System,abort,procedure,if,else,do,done,for,to,step,endif,forall,new,assert,end,return,while,then,private,public, Simulator, Variables,true,false},flexiblecolumns}
\begin{lstlisting}[language=pseudocode,name=findvectors]
  $a:=0$
  $f := d''$
  while $(\exists j: f_j \neq 0)$ do
    $a:=a+1$
    $i := \min\{j: f_j \neq 0\}$ 
    $i':= \min\{j: j>i \land f_{j}>0\}$
    $w := \min\{|f_i|, |f_{i'}|\}$
    for $j=0$ to $t$ do
      if $j=i$ then $v_j^{(a)}:= -w$
      else if $j=i'$ then $v_j^{(a)}:= w$
      else $v_j^{(a)}:= 0$
    $f:=f-v^{(a)}$
  $t:=a$
  return $(v^{(1)}, \ldots, v^{(t)})$
\end{lstlisting}

We claim that the following invariants always hold for $f$:
\begin{align*}
	\text{Invariant 1:} \quad \sum_{j\leq j^*}f_j = 0,
	\qquad \qquad
	\text{Invariant 2:} \quad \forall k\in (t): 
			\sum_{j\leq k}f_j \leq 0.
\end{align*}
By (\ref{eqnp:42}) and (\ref{eqnp:41}), the invariants
hold in the beginning where $f=d''$. Now suppose the invariants
hold for $f$ in some loop iteration, and we show they hold for $f'=f-v^{(a)}$ as defined in the next iteration, given $f$
still has a nonzero component. 
As invariant 2 holds for $f$, we have that $f_i < 0$, 
invariant 1 for $f$ implies that there exists $i'$ with
$f_{i'}>0$. The definition of $v_j^{(a)}$ directly implies that
invariant 1 holds for $f'$. Invariant 2 clearly holds
for $f'$ for any $k<i$, as the sum does not change. 
For $k=i$ it holds because $w < |f_i|$ and $f_i$ is the first 
non-zero component. For $i<k<i'$ it holds
because 
$$\sum_{j\leq k} f'_j \leq \sum_{j \leq i} f'_j \leq 0,$$
where the first inequality holds by the minimality of $i'$, 
and the second inequality is invariant 2 for $k=i$. 
Finally, the second invariant also holds for $k\geq i'$, as
then
$$ \sum_{j\leq k} f'_j 
  = \sum_{j\leq k} (f_j - v^{(a)}_j)
	= \sum_{j\leq k} f_j - 
		\underbrace{\sum_{j\leq k} v^{(a)}_j}_{= -w + w = 0} 
	\leq 0, $$
where we applied invariant 2 for $f$ to obtain the
inequality.	

Finally, in every iteration some nonzero component of $f$ (either $f'_i$ or $f'_{i'}$) is set to $0$. 
Thus the procedure terminates, and in the end we have
$\sum_i f_i = 0$ and $\sum_{a=1}^t v^{(a)} = d''$. Clearly,
the vectors $v^{(a)}$ satisfy (ii).
\end{proof}

\section{Analysis of the New Hiding Protocol}
\label{sec:AnalysisHiding}
Throughout this section, we let $u', \cY', \cL', \cH'$ be the
values as defined by the honest prover's strategy.
\subsection{Proof of Completeness: Overview}
We define the labeling $u'$ for all $y \in \{0,1\}^m$ as follows:
$$
	u'(y) = 
	\begin{cases}
		j     & \text{if } \exists j: 
												y \in \cB_j, \\
		\infty  & \text{otherwise.} 
	\end{cases}
$$
Note that the honest prover sends a labeling $u=u'$ such that
$u'(y_i)=u'(i)$. By definition, we have
\begin{claim}
	\label{claim:cDandUIntervals}
	$\Pd^C(y) \in (2^{-(u'(y)+1)\varepsilon},2^{-u'(y)\varepsilon}].$ 
\end{claim}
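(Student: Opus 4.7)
The claim is essentially a definitional unwinding, so the plan is just a careful case analysis on the value of $u'(y)$.

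First, suppose $u'(y) = j$ for some finite $j$, i.e.\ $y \in \cB_j$ for some $j \in (t)$. By Definition~\ref{def:histogram}, $\cB_j = \{x : \Pd^C(x) \in \cA_j\}$ with $\cA_j = (2^{-(j+1)\varepsilon}, 2^{-j\varepsilon}]$. Hence $\Pd^C(y) \in (2^{-(u'(y)+1)\varepsilon}, 2^{-u'(y)\varepsilon}]$, exactly as claimed.

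Second, suppose $u'(y) = \infty$, i.e.\ $y$ does not belong to any bucket $\cB_j$. I would argue this forces $\Pd^C(y) = 0$: since $\Pd^C(y) = \Pr_{r \leftarrow \{0,1\}^n}[C(r) = y]$, any nonzero value of $\Pd^C(y)$ is at least $2^{-n}$. The choice $t = \lceil n/\varepsilon \rceil$ gives $2^{-(t+1)\varepsilon} < 2^{-n}$, so $\bigcup_{j \in (t)} \cA_j = (2^{-(t+1)\varepsilon}, 1]$ covers every realizable nonzero probability, meaning any $y$ with $\Pd^C(y) > 0$ lies in some bucket. So $u'(y) = \infty$ implies $\Pd^C(y) = 0$. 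Using the conventions from the footnote, $2^{-(u'(y)+1)\varepsilon} = 2^{-u'(y)\varepsilon} = 0$, and the claim holds in the (degenerate) limiting sense consistent with those conventions.

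There is no real obstacle here; the only point needing care is the $\infty$ case, which boils down to observing that $t$ is large enough for the buckets to cover the entire support of $\Pd^C$. The claim will subsequently be used only to translate the labels $u(i)$ produced by the prover into concrete probability bounds for $\Pd^C(y_i)$, which is immediate once the case analysis above is in place.
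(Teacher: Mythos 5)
Your proposal is correct and matches the paper, which simply asserts the claim ``by definition'' of $u'$ and the buckets $\cB_j$; your finite case is exactly that unwinding, and your observation that $t=\lceil n/\varepsilon\rceil$ makes the buckets cover the whole support of $\Pd^C$ (so $u'(y)=\infty$ forces $\Pd^C(y)=0$) is the same point the paper records right after Definition~\ref{def:histogram}. The only cosmetic caveat is that for $u'(y)=\infty$ the interval $(2^{-(u'(y)+1)\varepsilon},2^{-u'(y)\varepsilon}]=(0,0]$ is empty, so there the claim holds only in the conventional, degenerate sense you describe --- which is how the paper uses it, since every later application either assumes $u'$ finite or only needs the resulting inequalities under the stated conventions for $\infty$.
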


The following lemma states that if the prover is honest, then
the values calculated by the verifier in (a), (c)-(e) are close
to the true values as defined by $\Pd^C$. 
\begin{lemma}
	\label{lem:HighProbEstimates}
	Let $\cS:=\{y_1, \ldots, y_k\}$, 
	$\cM:=\{y: \Pd^C(y) \in (1\pm 4\varepsilon)\alpha 2^{-m}\}$
	, and assume
	\begin{align}
		\label{eqnp:5}
		\Pr_{y \leftarrow \Pd^C}[\Pd^C(y) \in 
							(1 \pm 4\varepsilon) \alpha 2^{-m}] 
		\leq \sqrt{\varepsilon},
	\end{align} 
	then
	\begin{align*}
		\text{(i)} \qquad 
			&\Pr_{y_1, \ldots, y_k}
			\left[\frac{|\cY'|}{k} \notin [\gUY \pm
						\varepsilon]\right]
			\leq \varepsilon \\
		\text{(ii)} \qquad 
			&\Pr_{y_1, \ldots, y_k}
			\left[\frac{|\cH'|}{k} \notin [\gUH \pm 
						3\sqrt{\varepsilon}]\right]
			\leq \varepsilon \\
		\text{(iii)} \qquad 
			&\Pr_{y_1, \ldots, y_k}
			\left[
				\frac{1}{k} \sum_{i\in \cL'}
									2^m\cdot 2^{-u'(i)\varepsilon}\notin [
									1-\gH \pm 5\sqrt{\varepsilon}]
			\right] \leq \varepsilon\\
		\text{(iv)} \qquad
		  &\Pr_{y_1, \ldots, y_k}
			\left[
			\frac{1}{k} \sum_{i \in \cL' \cap \cY'}
									2^m\cdot 2^{-u'(i)\varepsilon}\notin [
									\gYL \pm 5\sqrt{\varepsilon}]
		  \right] \leq \varepsilon\\
		\text{(v)} \qquad
		  &\Pr_{y_1, \ldots, y_k}
				\left[
					\frac{|\cS \cap \cM|}{k} 
					\geq \frac{3\sqrt{\varepsilon}}{\alpha}
				\right]
				\leq \varepsilon
	\end{align*}
\end{lemma}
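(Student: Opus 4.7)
The plan is to treat each of the five statements as a concentration inequality about the i.i.d.\ random variables indexed by $i\in[k]$ obtained from the uniform samples $y_1,\ldots,y_k$. For each part I will (a) identify the random variable $X_i$ whose sample mean the verifier computes, (b) bound its range so that Hoeffding (Lemma~\ref{lem:HoeffdingBound}) or Chernoff (Lemma~\ref{lem:ChernoffBound}) applies, and (c) bound the gap between $\Exp[X_i]$ and the target quantity ($\gUY$, $\gUH$, $1-\gH$, $\gYL$, or $0$). The choice $k=\ln(2/\varepsilon)\alpha^2\cdot 9/(2\varepsilon^2)$ is exactly tuned so that for any $X_i$ with range contained in $[0,2\alpha]$, Hoeffding gives deviation at most $2\varepsilon/3$ with failure probability $\leq \varepsilon$; this will be the source of all $\poly(\varepsilon)$ error terms.

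Before doing the individual parts I establish the main technical estimate: the set $\cM$ has small mass under the uniform distribution. Indeed, every $y\in\cM$ satisfies $\Pd^C(y)\geq (1-4\varepsilon)\alpha 2^{-m}$, so $|\cM|(1-4\varepsilon)\alpha 2^{-m}\leq \sum_{y\in\cM}\Pd^C(y)\leq \sqrt{\varepsilon}$ by (\ref{eqnp:5}), giving $|\cM|/2^m\leq 2\sqrt{\varepsilon}/\alpha$. Part (v) is then immediate: $X_i:=1[y_i\in\cM]$ is a Bernoulli variable with $\Exp[X_i]\leq 2\sqrt{\varepsilon}/\alpha$, and the Chernoff bound with deviation $\sqrt{\varepsilon}/\alpha$ gives failure probability $\leq\exp(-\Omega(\ln(1/\varepsilon)))\leq\varepsilon$. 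Part (i) is likewise immediate because $X_i:=1[y_i\in V]$ has $\Exp[X_i]=\gUY$ exactly, and Hoeffding with range $[0,1]$ gives deviation $\leq\varepsilon$ with the required failure probability. Part (ii) uses the key estimate: the event ``$i\in\cH'$'' is ``$y_i$ falls in a bucket $\cB_j$ with $2^{-(j+1)\teps}\geq\alpha 2^{-m}$'', which by a short case analysis differs from ``$\Pd^C(y_i)\geq \alpha 2^{-m}$'' only on elements of $\cM$ (in both directions the offending $y$ has $\Pd^C(y)$ squeezed into $[2^{-\teps}\alpha 2^{-m},2^{\teps}\alpha 2^{-m}]\subseteq(1\pm 4\varepsilon)\alpha 2^{-m}$). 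Hence $|\Exp[X_i]-\gUH|\leq|\cM|/2^m\leq 2\sqrt{\varepsilon}/\alpha$, and Hoeffding gives another $\varepsilon$ on top, leaving a total well inside $3\sqrt{\varepsilon}$.

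For parts (iii) and (iv) the random variables are $X_i:=1[i\in\cL']\cdot 2^m\cdot 2^{-u'(i)\varepsilon}$ and the same multiplied by $1[i\in\cY']$. Because $i\in\cL'$ forces $2^{-(u'(i)+1)\varepsilon}<\alpha 2^{-m}$, we get $X_i\in[0,2^{\varepsilon}\alpha]\subseteq[0,2\alpha]$, so Hoeffding applies with the stated $k$ and produces a $2\varepsilon/3$ deviation. It remains to bound the gap between $\Exp[X_i]$ and the target. Writing $L$ for the set of uniform $y$ falling in a light bucket, one has $\Exp[X_i]=\sum_{y\in L}2^{-u'(y)\varepsilon}$ (with the yes-indicator in (iv)). The bucket definition yields $\Pd^C(y)\leq 2^{-u'(y)\varepsilon}<2^{\varepsilon}\Pd^C(y)$, so
\begin{align*}
\sum_{y\in L}\Pd^C(y)\ \leq\ \Exp[X_i]\ \leq\ 2^{\varepsilon}\sum_{y\in L}\Pd^C(y),
\end{align*}
contributing at most a multiplicative $2^{\varepsilon}-1\leq 2\varepsilon$ error. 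On the other hand $L\triangle\{y:\Pd^C(y)<\alpha 2^{-m}\}\subseteq\cM$ by the same case analysis as in (ii), so $\sum_{y\in L}\Pd^C(y)$ differs from $1-\gH$ (resp.\ from $\gYL$ in case (iv), after intersecting with $V$) by at most $\sum_{y\in\cM}\Pd^C(y)\leq\sqrt{\varepsilon}$. Summing the three error sources gives $|\Exp[X_i]-(1-\gH)|\leq\sqrt{\varepsilon}+2\varepsilon$ and the analogous bound for (iv), so the total deviation is at most $\sqrt{\varepsilon}+2\varepsilon+2\varepsilon/3\leq 5\sqrt{\varepsilon}$ for small $\varepsilon$.

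The main obstacle is the bookkeeping in (iii) and (iv): the boundary between ``heavy'' and ``light'' defined by the bucket $u'$ is not exactly the boundary defined by $\alpha 2^{-m}$, the weight $2^{-u'(y)\varepsilon}$ is not exactly $\Pd^C(y)$, and the target probabilities $1-\gH$ and $\gYL$ are defined with respect to $\Pd^C$ rather than the uniform distribution on $\{0,1\}^m$. Each discrepancy is $O(\sqrt{\varepsilon})$ or $O(\varepsilon)$ — the former always traceable back to the hypothesis (\ref{eqnp:5}) via the $|\cM|/2^m\leq 2\sqrt{\varepsilon}/\alpha$ estimate, the latter to the multiplicative gap $2^{\varepsilon}$ inherent in the bucketing — but one has to verify that the constants assemble into the stated $\pm 5\sqrt{\varepsilon}$ (and $\pm 3\sqrt{\varepsilon}$, $\pm 3\sqrt{\varepsilon}/\alpha$) slack for the verifier's checks; the choice of $k$ is calibrated precisely so this is comfortably true.
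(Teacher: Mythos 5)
Your proposal is correct and follows essentially the same route as the paper: the bound $|\cM|\leq 2\sqrt{\varepsilon}\,2^m/\alpha$ derived from (\ref{eqnp:5}) (the paper's Claim~\ref{claim:FewUniformInThreshold}), Chernoff for (i), (ii), (v), and Hoeffding with range $O(\alpha)$ for (iii), (iv), with the $O(\sqrt{\varepsilon})$ systematic error traced to the threshold region and the $O(\varepsilon)$ error to the multiplicative $2^{\varepsilon}$ bucketing gap. The only blemish is notational: the buckets in the hiding protocol are defined with parameter $\varepsilon$, not $\teps$.
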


With this Lemma, it is not hard to prove completeness:
\begin{proof}[Proof of completeness]
	Suppose $(C, V, \pH, \pUH, \pYL, \varepsilon) \in 
	\PiPubHide_Y$. 
	Fix $\alpha$ such that we have
	$\Pr_{y \leftarrow \Pd^C}[\Pd^C(y) \in (1 \pm 4\varepsilon) 
				\alpha 2^{-m}] \leq \sqrt{\varepsilon}$.
	By Claim~\ref{claim:smallMassAroundThreshold}, this holds 
	with probability at least $1-20\sqrt{\varepsilon
	}$ over the choice of $\alpha$.
	
	Since the prover is honest, it sends $u'$ and	$\cY'$ with
	correct witnesses. Then Lemma~\ref{lem:HighProbEstimates}
	implies that with probability at least $1-20\varepsilon$ 
	(a), (c),	(d), and (e) hold. Note that (b) always holds
	since the prover is honest.
	Finally, The lower bound protocol rejects with probability
	at most $\varepsilon/2$. 
\end{proof}

It remains to prove Lemma~\ref{lem:HighProbEstimates}. We 
defer the formal proof to Section~\ref{sec:PublicCoinHidingCompletenessDetails}, 
as the proof simply applies Chernoff and Hoeffding bounds.
Still, we give a short proof sketch.

\begin{proof}[Proof of Lemma~\ref{lem:HighProbEstimates} (Sketch)]
	Part (i) is a straightforward application of the Chernoff 
	bound. 
	
	Part (ii) also follows by the Chernoff bound, but
	here $|\cH'|/k$ may deviate from $\gUH$ by $O(\sqrt{\eps})$
	since this much mass may be close to the threshold, and be 
	cut off due to the rounding we introduce with the use of
	the labeling $u'$. 
	
	The proofs of (iii) and (iv) are applications of the
	Hoeffding bound, and again the $O(\sqrt{\eps})$ deviation
	comes in due to the rounding issues as described. 
	
	Finally, (v) is a straightforward application of the 
	Chernoff bound on (\ref{eqnp:5}).
\end{proof}

\subsection{Proof of Soundness: Overview}

\newcommand{\Loss}{\mathsf{Loss}}
\newcommand{\Gain}{\mathsf{Gain}}

Suppose 
$(C, V, \pH, \pUH, \pYL, \varepsilon) \in \PiPubHide_N$. 
Then the following lemma states that if there is not too much
probability mass around the threshold,
the verifier's checks (a)-(d) are true, 
the guarantees of the lower bound protocol hold,
and the high probability estimates for $u', \cY'$ and $\cL'$
hold, then the sum in the verifier's check (e) is close to 
$\gYL$. 
\begin{lemma}
	\label{lem:SoundnessGivenHighProbEst}
	Suppose 
	$(C, V, \pH, \pUH, \pYL, \varepsilon) \in \PiPubHide_N$. 
	Define the sets $\cS:=\{y_1, \ldots, y_k\}$, 
	$\cM:=\{y: \Pd^C(y) 
		\in (1\pm 4\varepsilon)\alpha 2^{-m}\}$, assume
	that the verifier's conditions (a)-(d) hold, and
	\begin{align}
			&\Pr_{y \leftarrow \Pd^C}[\Pd^C(y) \in (1 \pm 
																4\varepsilon) 
				\alpha 2^{-m}] \leq \sqrt{\varepsilon}
				\label{eqnp:8}\\
			&\forall i \in [k]: u'(i) = \infty \implies u(i) = \infty
				\label{eqnp:9p}\\
			&\forall i \in [k]: u(i)=\infty \lor |C^{-1}(y_i)| 
				> (1-\varepsilon/2)\cdot2^m\cdot 2^{-(u(i)+1)
						\varepsilon}
				\label{eqnp:9}\\	
			&\frac{|\cS \cap \cM|}{k} 
					< \frac{3\sqrt{\varepsilon}}{\alpha}
				\label{eqnp:10}\\
			&\frac{|\cY'|}{k} \in [\gUY \pm \varepsilon], 
			  \label{eqnp:11}\\
			&\frac{|\cH'|}{k} \in [\gUH \pm 
						3\sqrt{\varepsilon}], 
			  \label{eqnp:13}\\ 
			&\frac{1}{k} \sum_{i\in \cL'}
									2^m\cdot 2^{-u'(i)\varepsilon}\in [
									1-\gH \pm 5\sqrt{\varepsilon}],
				\label{eqnp:14}\\
		  &\frac{1}{k} \sum_{i \in \cL' \cap \cY'}
									2^m\cdot 2^{-u'(i)\varepsilon}\in [
									\gYL \pm 5\sqrt{\varepsilon}].
				\label{eqnp:15}
	\end{align}
	Then we have
	\begin{align*}
		\frac{1}{k}
					\sum_{i\in \cL\cap \cY}
					2^m 2^{-u(i)\varepsilon}
					\in [\gYL \pm 112\sqrt{\varepsilon}\alpha]		
	\end{align*}
\end{lemma}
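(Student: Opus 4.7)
The plan is to bound the cheating gap $|T_p - T'| \leq O(\sqrt{\varepsilon}\alpha)$, where $T_p := \frac{1}{k}\sum_{i \in \cL \cap \cY} 2^m 2^{-u(i)\varepsilon}$ is the verifier's computed sum from check (e) and $T' := \frac{1}{k}\sum_{i \in \cL' \cap \cY'} 2^m 2^{-u'(i)\varepsilon}$ is its honest-prover counterpart. Hypothesis (\ref{eqnp:15}) places $T'$ within $5\sqrt{\varepsilon}$ of $\gYL$, so the triangle inequality together with careful constant tracking will then produce the claimed $[\gYL \pm 112\sqrt{\varepsilon}\alpha]$ bound. Throughout I write $q(i) := 2^m 2^{-u(i)\varepsilon}$ and $q'(i) := 2^m 2^{-u'(i)\varepsilon}$.

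First I would set up two preliminaries. The lower-bound guarantee (\ref{eqnp:9}) together with Claim~\ref{claim:cDandUIntervals} forces $u(i)$ to be at most a constant number of buckets below $u'(i)$ whenever $u(i) < \infty$, yielding the crucial one-sided multiplicative bound $q(i) \leq (1 + O(\varepsilon)) q'(i)$; also every light index satisfies $q(i), q'(i) < 2\alpha$ directly from the definitions of $\cL, \cL'$. Then I would partition $[k]$ into $A_1 := \cL \cap \cL'$, $A_2 := \cL \cap \cH'$, $A_3 := \cH \cap \cL'$ and bound the mislabeled classes: for $i \in A_3$ the lower bound gives $\Pd^C(y_i) \geq (1-\varepsilon/2)\alpha 2^{-m}$ while $i \in \cL'$ gives $\Pd^C(y_i) < 2^\varepsilon \alpha 2^{-m}$, placing $y_i \in \cM$, so (\ref{eqnp:10}) yields $|A_3|/k < 3\sqrt{\varepsilon}/\alpha$; the identity $|\cH| - |\cH'| = |A_3| - |A_2|$ combined with check (c), hypothesis (\ref{eqnp:13}), and the no-instance condition $\pUH \in [\gUH \pm 10\sqrt{\varepsilon}]$ then yields $|A_2|/k = O(\sqrt{\varepsilon})$. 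Finally, check (b) forces $\cY \subseteq \cY'$, and check (a) with (\ref{eqnp:11}) gives $|\cY' \setminus \cY|/k \leq 2\varepsilon$.

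With these in hand I would decompose
\begin{align*}
T_p - T' &= \underbrace{\tfrac{1}{k}\sum_{i \in A_1 \cap \cY}(q(i) - q'(i))}_{=:\Delta} + \underbrace{\tfrac{1}{k}\sum_{i \in A_2 \cap \cY} q(i)}_{=:\Gamma_2} \\
&\quad - \underbrace{\tfrac{1}{k}\sum_{i \in A_1 \cap (\cY' \setminus \cY)} q'(i)}_{=:\Gamma_Y} - \underbrace{\tfrac{1}{k}\sum_{i \in A_3 \cap \cY'} q'(i)}_{=:\Gamma_3},
\end{align*}
using $\cL = A_1 \cup A_2$, $\cL' = A_1 \cup A_3$, and $\cY \subseteq \cY'$. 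The three $\Gamma$-terms are handled directly with the $2\alpha$-ceiling and the cardinality bounds above, giving $\Gamma_2 = O(\sqrt{\varepsilon}\alpha)$, $\Gamma_Y = O(\varepsilon\alpha)$, and $\Gamma_3 = O(\sqrt{\varepsilon})$.

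The delicate piece is $\Delta$, and this is the main obstacle: a naive lower bound gives $\Delta \geq -Q' \approx -1$, much too weak. The trick is to bootstrap from the global light-sum gap. Setting $Q := \frac{1}{k}\sum_{i \in \cL} q(i)$ and $Q' := \frac{1}{k}\sum_{i \in \cL'} q'(i)$, check (d), hypothesis (\ref{eqnp:14}), and the no-instance condition $\pH \in [\gH \pm \frac{4}{5}\sqrt{\varepsilon}]$ force $|Q - Q'| = O(\sqrt{\varepsilon})$; the identity $Q - Q' = D_1 + \frac{1}{k}\sum_{A_2} q(i) - \frac{1}{k}\sum_{A_3} q'(i)$, with $D_1 := \frac{1}{k}\sum_{i \in A_1}(q(i) - q'(i))$, upgrades this to $|D_1| \leq O(\sqrt{\varepsilon}\alpha)$. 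Writing $\Delta = D_1 - \frac{1}{k}\sum_{i \in A_1 \setminus \cY}(q(i) - q'(i))$ and applying the step-one one-sided bound $q(i) - q'(i) \leq O(\varepsilon) q'(i)$ to the residual (upper bound $O(\varepsilon)$) yields $\Delta \in [-O(\sqrt{\varepsilon}\alpha), O(\varepsilon)]$. Summing the four pieces gives $|T_p - T'| \leq O(\sqrt{\varepsilon}\alpha)$, and careful constant tracking (using $\alpha \geq 1$) matches the stated $112\sqrt{\varepsilon}\alpha$.
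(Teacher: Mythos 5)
Your proposal is correct and follows essentially the same route as the paper's proof: the paper's Claim~\ref{claim:SoundnessClaimsGivenHighProbEst} establishes exactly your preliminaries (the one-sided bound $u(i)\geq u'(i)-1$, the bounds on $\cL'\setminus\cL$ via $\cM$ and on $\cL\setminus\cL'$ via check (c), the $2\alpha$ ceiling on light terms), and your $D_1$/residual argument for $\Delta$ is the paper's $\Gain$/$\Loss$ bookkeeping over $\cL\cap\cL'$ in different notation. The only nit is that the upper end of $\Delta\in[-O(\sqrt{\varepsilon}\alpha),O(\varepsilon)]$ needs the one-sided bound $q(i)-q'(i)\leq O(\varepsilon)q'(i)$ applied to $A_1\cap\cY$ itself rather than to the residual $A_1\setminus\cY$, which is immediate with the tools you already have.
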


This lemma allows us to prove soundness as follows.
\begin{proof}[Proof of soundness]
	Let $(C, V, \pH, \pUH, \pYL, \varepsilon) \in \PiPubHide_N$, 
	and fix $\alpha$ such that 
	$\Pr_{y \leftarrow \Pd}[\Pd(y) \in (1 \pm 4\varepsilon) 
				\alpha 2^{-m}] \leq \sqrt{\varepsilon}$.
	By Claim~\ref{claim:smallMassAroundThreshold}, this holds 
	with probability at least $1-20\sqrt{\varepsilon
	}$ over the choice of $\alpha$.
	
	We proceed to show that with probability at least
	$1-6\varepsilon$, all the assumptions of 
	Lemma~\ref{lem:SoundnessGivenHighProbEst} hold, or the 
	verifier rejects (with high probability).		
	We have that (\ref{eqnp:8}) holds by the above assumption. 
	Moreover, (\ref{eqnp:9p}) holds because $u'(i) = \infty$ 
	implies $|C^{-1}(y_i)|=0$ and thus the lower bound protocol
	rejects with probability $1$.	
	Furthermore, (\ref{eqnp:9}) holds with probability at least
	$1-\varepsilon/2$, by the soundness of the parallel lower 
	bound protocol. Finally, by 
	Lemma~\ref{lem:HighProbEstimates} and the union bound,
	we have that with probability at least $1-5\varepsilon$,
	all of	
	(\ref{eqnp:10}),
	(\ref{eqnp:11}),
	(\ref{eqnp:13}),
	(\ref{eqnp:14}),
	and (\ref{eqnp:15}) hold. Finally, either (a)-(d) hold, or
	the verifier rejects.
	
	If its assumptions hold, Lemma~\ref{lem:SoundnessGivenHighProbEst} gives that 
	$
		\frac{1}{k}
					\sum_{i\in \cL\cap \cY}
					2^m 2^{-u(i)\varepsilon}
					\in [\gYL \pm 112\sqrt{\varepsilon}\alpha]
	$.
	Together with the soundness assumption
	$\pYL \notin [\gYL \pm 117\sqrt{\varepsilon}\alpha]$,
	this implies 
	$
		\frac{1}{k}
					\sum_{i\in \cL\cap \cY}
					2^m 2^{-u(i)\varepsilon}
					\notin [\pYL \pm 5\sqrt{\varepsilon}\alpha]
	$,
	which gives that the verifier rejects in (e). 
	This shows that the verifier rejects with probability
	at least $1-6\varepsilon$. 
\end{proof}
	
It remains to prove Lemma~\ref{lem:SoundnessGivenHighProbEst}.
For this, we will use the notion of $\Loss$ and $\Gain$, which
is defined as follows:
\begin{definition}
	Given two mappings $u',u$ 
	from $[k]$ to $(m) \cup \{\infty\}$ and a set $\cA \subseteq
	[k]$,
	we define
	\begin{align*}
	\Loss_{\cA}(u',u) &:= \frac{1}{k} \sum_{i \in \cA: u'(i)<u(i)} 2^m (2^{-u'(i)\varepsilon}-2^{-u(i)\varepsilon}), \\
	\Gain_{\cA}(u',u) &:= \frac{1}{k} \sum_{i \in \cA: u'(i)>u(i)} 2^m (2^{-u(i)\varepsilon}-2^{-u'(i)\varepsilon}).
	\end{align*}
\end{definition}

Note that $\Gain$ and $\Loss$ are always positive. 
This notion is supposed to capture the change of probability
mass when using the labeling $u$ instead of the labeling 
$u'$, as described by the following claim. Its proof is not
hard, and we defer it to 
Section~\ref{sec:PublicCoinHidingSoundnessDetails}. 
\begin{claim}
	\label{claim:GainAndLoss}
	For any two mappings $u', u$ from $[k]$ to $(m) \cup 
	\{\infty\}$ and any $\cA \subseteq [k]$ we have
	$$
		\frac{1}{k}\sum_{i\in \cA}2^m 2^{-u(i)\varepsilon}
		=
		\frac{1}{k}\sum_{i\in \cA}2^m 2^{-u'(i)\varepsilon}
		+ \Gain_{\cA}(u',u) - \Loss_{\cA}(u',u).
	$$
\end{claim}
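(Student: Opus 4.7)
The plan is to prove this as a direct algebraic identity by rearranging the difference between the two averages. I would start by subtracting $\frac{1}{k}\sum_{i\in \cA}2^m 2^{-u'(i)\varepsilon}$ from both sides and bringing everything under a single sum, so that the claim reduces to showing
$$
\frac{1}{k}\sum_{i\in \cA}2^m\bigl(2^{-u(i)\varepsilon}-2^{-u'(i)\varepsilon}\bigr)
= \Gain_{\cA}(u',u) - \Loss_{\cA}(u',u).
$$

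Next I would partition the index set $\cA$ into the three disjoint subsets $\cA_= := \{i\in \cA : u'(i)=u(i)\}$, $\cA_< := \{i\in \cA : u'(i)<u(i)\}$, and $\cA_> := \{i\in \cA : u'(i)>u(i)\}$. On $\cA_=$ each summand $2^{-u(i)\varepsilon}-2^{-u'(i)\varepsilon}$ vanishes, so those terms contribute nothing. On $\cA_<$ we have $2^{-u(i)\varepsilon}-2^{-u'(i)\varepsilon} = -\bigl(2^{-u'(i)\varepsilon}-2^{-u(i)\varepsilon}\bigr)$, so this part of the sum is exactly $-\Loss_{\cA}(u',u)$ by the definition of $\Loss$. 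On $\cA_>$ we have $2^{-u(i)\varepsilon}-2^{-u'(i)\varepsilon}$ matching the summand in the definition of $\Gain$, so this part contributes $+\Gain_{\cA}(u',u)$. Summing the three contributions yields the desired identity.

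The only mild points to watch are the convention $2^{-\infty\varepsilon}=0$ (so the identity still makes sense when $u$ or $u'$ takes the value $\infty$) and that the inequalities $u'(i)<u(i)$ versus $u'(i)>u(i)$ are compatible with this convention via the stipulated ordering $i<\infty$ for $i\in\mathbb{N}$. Both sides of the identity are linear in the individual terms, so no concentration or bound arguments are needed; the argument is purely bookkeeping. Accordingly, I do not expect any real obstacle — the proof is a one-line rearrangement followed by matching the two sides of the partition against the definitions of $\Loss$ and $\Gain$.
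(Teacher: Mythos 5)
Your proposal is correct and is essentially the same argument as the paper's: a purely algebraic rearrangement that splits the index set according to whether $u'(i)$ is equal to, less than, or greater than $u(i)$ and matches the resulting pieces against the definitions of $\Gain$ and $\Loss$ (the paper just works from the right-hand side and merges the two strict-inequality sums into one over $u'(i)\neq u(i)$ rather than starting from the difference). Your remark about the $\infty$ convention is a fine, if unnecessary, bit of caution.
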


We establish the following sequence of intermediate claims which will then allow us to prove 
Lemma~\ref{lem:SoundnessGivenHighProbEst}.
\begin{claim}
	\label{claim:SoundnessClaimsGivenHighProbEst}
	Under the conditions of 
	Lemma~\ref{lem:SoundnessGivenHighProbEst}, we have:
	\begin{align*}
	\text{(i)} \qquad 
			&\forall i\in [k]: u(i) \geq u'(i) - 1 \\
		\text{(ii)} \qquad 
			&|\cL' \setminus \cL| 
					\leq 3k\sqrt{\varepsilon}\\
		\text{(iii)} \qquad &|\cL \setminus \cL'| 
					\leq 19k\sqrt{\varepsilon} \\
		\text{(iv)} \qquad 
						&\frac{1}{k}\sum_{i\in \cL' \setminus \cL}
															2^m2^{-u'(i)\varepsilon}
						\leq 6\sqrt{\varepsilon}\alpha
						\\
						&\frac{1}{k}\sum_{i\in \cL \setminus \cL'}
															2^m2^{-u(i)\varepsilon}
						\leq 38\sqrt{\varepsilon}\alpha \\
		\text{(v)} 
					\qquad &
					\frac{1}{k}\sum_{i\in \cL \cap \cL'}
					 2^m2^{-u(i)\varepsilon}
					\in [1-\gH \pm 44\sqrt{\varepsilon}\alpha] \\
					&
					\frac{1}{k}\sum_{i\in \cL \cap \cL'}
					 2^m2^{-u'(i)\varepsilon}
					\in [1-\gH \pm 11\sqrt{\varepsilon}\alpha] \\
		\text{(vi)} 
					\qquad &
					\Gain_{\cL \cap \cL'}(u',u) 
						\leq 4\varepsilon
					\\
					&\Loss_{\cL \cap \cL'}(u',u) 
						\leq 59\sqrt{\varepsilon}\alpha 
	\end{align*}
\end{claim}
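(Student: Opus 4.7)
The plan is to establish the six parts in sequence, each building on the previous. Part (i) is the workhorse: combining the upper bound $|C^{-1}(y_i)| \leq 2^n \cdot 2^{-u'(i)\varepsilon}$ from $y_i \in \cB_{u'(i)}$ with the prover's lower bound guarantee (\ref{eqnp:9}), a short logarithmic calculation forces $u(i) \geq u'(i) - 1$ by integrality; the case $u'(i) = \infty$ is handled by (\ref{eqnp:9p}). For (ii), introduce the threshold index $j^*$ characterised by $i \in \cL \iff u(i) \geq j^*+1$. Any $i \in \cL' \setminus \cL$ then has $u'(i) \geq j^*+1$ and $u(i) \leq j^*$, so combined with (i) this pins $u'(i) = j^*+1$ exactly. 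Hence $y_i \in \cB_{j^*+1}$, which by the choice of $j^*$ lies within a factor $2^\varepsilon$ of $\alpha 2^{-m}$ and therefore inside $\cM$; the bound then falls out of (\ref{eqnp:10}). Part (iii) follows from the elementary identity $|\cL \setminus \cL'| - |\cL' \setminus \cL| = |\cH'| - |\cH|$ via (c), (\ref{eqnp:13}), the soundness-case inequality $|\pUH - \gUH| \leq 10\sqrt{\varepsilon}$, and (ii).

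Parts (iv) and (v) are routine bookkeeping. In both symmetric differences the relevant label sits at or just above $j^*+1$, so the corresponding $2^m \cdot 2^{-(\cdot)\varepsilon}$ factor is at most $2^\varepsilon\alpha$; multiplying by the cardinality bounds from (ii) and (iii) gives (iv). For (v), split $\sum_{i\in\cL}$ and $\sum_{i\in\cL'}$ into their intersection piece and their symmetric-difference piece, and substitute (d), (\ref{eqnp:14}), $|\pH - \gH| \leq \tfrac{4}{5}\sqrt{\varepsilon}$, and (iv).

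Part (vi) is the conceptually interesting step. To control $\Gain_{\cL \cap \cL'}(u',u)$, note that by (i) every contributing index satisfies $u(i) = u'(i) - 1$, so its contribution equals $2^m \cdot 2^{-u(i)\varepsilon}(1 - 2^{-\varepsilon})$; pulling out $(1 - 2^{-\varepsilon}) = O(\varepsilon)$ and bounding the cumulative sum $\tfrac{1}{k}\sum_{i\in\cL} 2^m 2^{-u(i)\varepsilon}$ by check (d) produces the stated $\Gain \leq 4\varepsilon$. For $\Loss$, apply Claim~\ref{claim:GainAndLoss} with $\cA = \cL \cap \cL'$ together with part (v): the difference $\Gain - \Loss$ equals the discrepancy between the two intersection sums, both centred at $1-\gH$ with total width $55\sqrt{\varepsilon}\alpha$, so $\Loss \leq \Gain + 55\sqrt{\varepsilon}\alpha \leq 59\sqrt{\varepsilon}\alpha$. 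The main obstacle will be the $\alpha$-free bound on $\Gain$: the naive per-index inequality $2^m 2^{-u(i)\varepsilon} \leq 2\alpha$ for $i\in\cL$ only yields $\Gain = O(\alpha\varepsilon)$, and one must instead sum over $\cL$ first and exploit the verifier's total-mass check (d), which keeps the aggregate bounded by $O(1)$ rather than $O(\alpha)$.
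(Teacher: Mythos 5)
Your proposal is correct and follows essentially the same route as the paper's proof: (i) from the lower-bound guarantees (\ref{eqnp:9p}) and (\ref{eqnp:9}), (ii) by showing every index in $\cL'\setminus\cL$ has $\Pd^C(y_i)$ within a $2^{\pm\varepsilon}$ factor of $\alpha 2^{-m}$ and hence lies in $\cM$ so that (\ref{eqnp:10}) applies, (iii)--(v) by the same cardinality and mass bookkeeping, and (vi) by bounding $\mathsf{Gain}$ through (i) together with an aggregate mass bound and then extracting $\mathsf{Loss}$ from Claim~\ref{claim:GainAndLoss} and part (v). The only noteworthy (and purely cosmetic) deviation is in (ii), where you pin $u'(i)=j^*+1$ by integrality of the labels, whereas the paper reaches the same membership in $\cM$ via a chain of probability inequalities that contradicts (i); both yield the identical constants.
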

We defer the proof to 
Section~\ref{sec:PublicCoinHidingSoundnessDetails}, and only
sketch the proof here.
\begin{proof}[Proof (Sketch)]
	Part (i) holds because the claimed probabilities as given
	by $u'$ cannot be too large, as the lower bound guarantees
	(\ref{eqnp:9p}) and (\ref{eqnp:9}) hold. 
	
	Part (ii) holds because in order to claim that some light 
	$y_i, i \in \cL'$ is heavy (i.e.~$i \notin \cL$), its
	probability must be close to the threshold, which holds only
	for an $\Theta(\sqrt{\eps})$-fraction of the $y_i$'s. This
	holds because the lower bounds are accurate up to a factor
	of even $(1-\eps/2)$. 
	
	Then (iii) follows from (ii), as by condition (c) and 
	$\pUH \in [\gUH \pm \Theta(\sqrt(\eps))]$ we have that
	$|\cL|$ and $|\cL'|$ can differ by at most 
	$k\cdot \Theta(\sqrt{\eps})$. 
	
	Part (iv) is then a direct consequence of (ii) and (iii),
	using the fact that $i \in \cL'$ and $i\in \cL$,
	respectively. 
	
	To see (v), we note that the sum over $\cL'$ using $u'$ is
	close to $1-\gH$ by (\ref{eqnp:14}), and the sum over $\cL$ 
	using $u$ is close to $1-\gH$ by the guarantee $\pH \in 
	[\gH \pm 4/5\sqrt{\eps}]$ and (d). Applying (ii) and (iii) gives
	the result.
	
	Finally, to prove (vi) we note that the two sums in 
	(v) are close, which implies they have small difference. 
	By definition, this difference is exactly 
	$\Gain_{\cL\cap\cL'}(u',u) -
	\Loss_{\cL\cap\cL'}(u',u)$. Since (i) allows to upper bound
	the $\Gain$, we get the claim.	
\end{proof}

Finally, this allows us to prove our goal as follows. We only sketch 
the proof and defer the details to Section~\ref{sec:PublicCoinHidingSoundnessDetails}. 
\begin{proof}[Proof of 
	Lemma~\ref{lem:SoundnessGivenHighProbEst} (Sketch)]
	We will directly refer to (i)-(vi) as given by 
	Claim~\ref{claim:SoundnessClaimsGivenHighProbEst}. 
	We make the following observations:
	\begin{enumerate}[(1)]
		\item	We may as well consider the sum over 
			$\cY'$ instead of $\cY$: this only induces an error 
			of order $O(\eps \alpha)$, because
			the prover must provide witnesses (see (a) and (b)), and
			the set $|\cY|$ must still be big (\ref{eqnp:11}). 
	
	  \item We may as well consider the sum over
			$\cL \cap \cL'$ instead of $\cL$ or $\cL'$: this is a
			direct consequence of (iv), and induces an error of at
			most $O(\sqrt{\eps}\alpha)$. 
	
		\item By definition, 
			$\Gain_{\cL\cap \cL' \cap \cY'}(u',u) 
			\leq \Gain_{\cL\cap \cL'}(u',u),
			\Loss_{\cL\cap \cL' \cap \cY'}(u',u) 
			\leq \Loss_{\cL\cap \cL'}(u',u)$, 
			and thus both are bounded by
			$\Theta(\sqrt{\eps}\alpha)$ by (vi). 
	\end{enumerate}
	
	This allows us to conclude (we put the actual constants 
	to be explicit)
	\begin{align*}
		&\frac{1}{k}\sum_{i\in \cL \cap \cY}2^m2^{-u(i)\varepsilon}
		\stackrel{\text{(2)}}{\in} 
		[\frac{1}{k}\sum_{i\in \cL \cap\cL'\cap \cY}
					2^m2^{-u(i)\varepsilon} \pm 38\sqrt{\varepsilon}
					\alpha] \\
		&\stackrel{\text{(1)}}{\subseteq} 
		[\frac{1}{k}\sum_{i\in \cL \cap\cL'\cap \cY'}
					2^m2^{-u(i)\varepsilon} \pm 42\sqrt{\varepsilon}
					\alpha]
		\subseteq
		[\frac{1}{k}\sum_{i\in \cL \cap\cL'\cap \cY'}
					2^m2^{-u'(i)\varepsilon} \pm 101\sqrt{\varepsilon}
					\alpha]\\
		&\stackrel{\text{(2)}}{\subseteq} 
		[\frac{1}{k}\sum_{i\in \cL'\cap \cY'}
					2^m2^{-u'(i)\varepsilon} \pm 107\sqrt{\varepsilon}
					\alpha]
		\stackrel{\text{(\ref{eqnp:15})}}{\subseteq} 
		[\gYL \pm 112\sqrt{\varepsilon}
					\alpha],
	\end{align*}
	where the third step follows by (3) and the 
	definition of $\Gain$ and $\Loss$. 
\end{proof}

\subsection{Proof of Completeness: the Details}
\label{sec:PublicCoinHidingCompletenessDetails}

It is not hard to see that only few $y$'s in the support
have $\Pd^C(y)$ close to the threshold:
\begin{claim}
	\label{claim:FewUniformInThreshold}
	Suppose 
	\begin{align}
	  \Pr_{y \leftarrow \Pd^C}[\Pd^C(y) \in (1 \pm 4\varepsilon) 
				\alpha 2^{-m}] \leq \sqrt{\varepsilon}.
				\label{eqnp:3}
	\end{align}
	Then
	\begin{align*}
		\text{(i)}\qquad &
				|\{y: \Pd^C(y) \in (1\pm 4\varepsilon)\alpha 2^{-m}\}|
				\leq \frac{2\sqrt{\varepsilon} \cdot 2^m}{\alpha} \\
		\text{(ii)}\qquad &
				|\{y: 2^{-u'(y)\varepsilon} \geq \alpha 2^{-m}
								> 2^{-(u'(y)+1)\varepsilon}\}| \leq 
						 \frac{2\sqrt{\varepsilon}\cdot 2^m}{\alpha}
	\end{align*}
\end{claim}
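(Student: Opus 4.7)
The plan is to prove (i) directly from the hypothesis (\ref{eqnp:3}) by a simple counting / averaging argument, and then to derive (ii) by showing that the set in (ii) is contained in the set in (i).

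For (i), denote $\mathcal{T}:=\{y: \Pd^C(y) \in (1\pm 4\varepsilon)\alpha 2^{-m}\}$. Every $y \in \mathcal{T}$ has $\Pd^C(y) \geq (1-4\varepsilon)\alpha 2^{-m}$, so
\begin{align*}
  \sqrt{\varepsilon}
  \;\stackrel{(\ref{eqnp:3})}{\geq}\;
  \Pr_{y\leftarrow \Pd^C}[y \in \mathcal{T}]
  \;=\; \sum_{y \in \mathcal{T}} \Pd^C(y)
  \;\geq\; |\mathcal{T}|\cdot(1-4\varepsilon)\alpha 2^{-m}.
\end{align*}
Solving for $|\mathcal{T}|$ yields $|\mathcal{T}| \leq \tfrac{\sqrt{\varepsilon}\cdot 2^m}{(1-4\varepsilon)\alpha}$, which is at most $\tfrac{2\sqrt{\varepsilon}\cdot 2^m}{\alpha}$ provided $\varepsilon \leq 1/8$ (we may assume the claim is only interesting in this regime, as otherwise the bound is trivially vacuous or the hypothesis forces the set to be empty).

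For (ii), the key observation is that if $y$ satisfies $2^{-u'(y)\varepsilon} \geq \alpha 2^{-m} > 2^{-(u'(y)+1)\varepsilon}$, then $\alpha 2^{-m}$ lies in the interval $\cA_{u'(y)} = (2^{-(u'(y)+1)\varepsilon}, 2^{-u'(y)\varepsilon}]$. But by Claim~\ref{claim:cDandUIntervals}, $\Pd^C(y)$ lies in the very same interval $\cA_{u'(y)}$, which has multiplicative width $2^{\varepsilon}$. Hence $\Pd^C(y)/(\alpha 2^{-m}) \in (2^{-\varepsilon}, 2^{\varepsilon}]$, and using the elementary inequality $2^{\varepsilon}\leq 1+\varepsilon$ (valid for $\varepsilon \in [0,1]$) together with $2^{-\varepsilon}\geq 1-\varepsilon$, we conclude $\Pd^C(y) \in (1\pm \varepsilon)\alpha 2^{-m} \subseteq (1\pm 4\varepsilon)\alpha 2^{-m}$. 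Thus the set in (ii) is contained in $\mathcal{T}$, and the bound from (i) carries over.

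I do not expect any real obstacle here: (i) is a one-line averaging argument, and (ii) reduces to (i) via a direct containment that follows from the $(\varepsilon,t)$-bucket structure. The only minor care point is ensuring the constant $2$ in the bound, which is handled by assuming $\varepsilon$ is small enough (the claim is only applied for small $\varepsilon$ in the surrounding completeness proof anyway).
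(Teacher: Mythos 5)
Your proposal is correct and follows essentially the same route as the paper: part (i) by the same averaging bound $|\mathcal{T}|\le \sqrt{\varepsilon}\cdot 2^m/((1-4\varepsilon)\alpha)\le 2\sqrt{\varepsilon}\cdot 2^m/\alpha$ for small $\varepsilon$, and part (ii) by observing via Claim~\ref{claim:cDandUIntervals} that both $\Pd^C(y)$ and $\alpha 2^{-m}$ land in the same bucket interval, so the set in (ii) is contained in the set in (i). The only cosmetic difference is that you bound the ratio by $(1\pm\varepsilon)$ where the paper uses $(1\pm 2\varepsilon)$; both suffice for the containment in $(1\pm 4\varepsilon)\alpha 2^{-m}$.
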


\begin{proof}[Proof of Claim~\ref{claim:FewUniformInThreshold}]
	We first prove (i). Let $\cM_1$ be the set in (i). 
	Now (\ref{eqnp:3}) gives that 
	$|\cM_1| \leq \frac{\sqrt{\varepsilon}}
									 {(1-4\varepsilon)\alpha 2^{-m}} 
	 \leq \frac{2\sqrt{\varepsilon}\cdot 2^m}{\alpha}$ 
	(for $\varepsilon \leq 1/8$).	
	
	To see (ii), let $\cM_2$ be the set in (ii). 
	Using Claim~\ref{claim:cDandUIntervals}, we find that
	for any $y \in \cM_2$ we have
	$$ \Pd^C(y) \in [2^{-\varepsilon}\alpha 2^{-m},
									 2^{\varepsilon} \alpha 2^{-m}]
							\subseteq (1\pm 2\varepsilon) \alpha 2^{-m},$$
	and thus $\cM_2 \subseteq \cM_1$, which proves the claim.
\end{proof}

The following claim describes how we apply the Hoeffding bound,
we will use it to prove parts (iii) and (iv) of 
Lemma~\ref{lem:HighProbEstimates}.
\begin{claim}
	\label{claim:AppliedHoeffding}
	Consider the set $\cAl := \{y:  2^{-(u'(y)+1)\varepsilon} <
	\alpha 2^{-m} \}$, let 
	$\cA$ be any subset of $\{0,1\}^m$, and define
	\begin{align*}
		M:=\sum_{y\in \cAl \cap \cA}
					2^{-u'(y)\varepsilon}, \qquad \qquad
		X_i :=
			\begin{cases}
				2^m\cdot 2^{-u'(y_i)\varepsilon}
						& \text{if } y_i \in \cAl \cap \cA,\\
				0 & \text{otherwise.} 
			\end{cases}
	\end{align*}
	Then we have for any $\delta > 0$ that
	$$ 
		\Pr_{y_1, \ldots, y_k \in \{0,1\}^{km}}\left[
			\frac{1}{k}\sum_{i\in [k]}X_i
			\notin [M \pm \delta]
		\right]
		\leq
		2\cdot \exp\left(
			-\frac{2k\delta^2}{(1+2\varepsilon)^2\alpha^2}
			\right).
	$$ 
\end{claim}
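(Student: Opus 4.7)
The plan is to apply Hoeffding's inequality (Lemma~\ref{lem:HoeffdingBound}) directly to the random variables $X_1, \ldots, X_k$. Since the $y_i$ are drawn independently and uniformly at random from $\{0,1\}^m$, the $X_i$ are i.i.d., so it suffices to compute $\Exp[X_i]$ and to bound the range of the $X_i$.

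First, I would compute the expectation. By definition of $X_i$,
\begin{align*}
\Exp[X_i]
= \sum_{y \in \cAl \cap \cA} 2^{-m} \cdot 2^m \cdot 2^{-u'(y)\varepsilon}
= \sum_{y \in \cAl \cap \cA} 2^{-u'(y)\varepsilon} = M,
\end{align*}
so $\Exp[\tfrac{1}{k}\sum_i X_i] = M$, matching the center of the claimed interval.

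Next, I would bound the range. Clearly $X_i \geq 0$. When $y_i \in \cAl$, the definition of $\cAl$ gives $2^{-(u'(y_i)+1)\varepsilon} < \alpha 2^{-m}$, hence $2^{-u'(y_i)\varepsilon} < 2^{\varepsilon} \alpha 2^{-m}$, so $X_i < 2^{\varepsilon}\alpha$. Using the convexity bound $2^{\varepsilon} \leq 1 + \varepsilon \leq 1 + 2\varepsilon$ valid for $\varepsilon \in [0,1]$ (the functions $2^{\varepsilon}$ and $1+\varepsilon$ agree at $\varepsilon=0$ and $\varepsilon=1$, and $2^{\varepsilon}$ is convex), we conclude $X_i \in [0,(1+2\varepsilon)\alpha]$ for every $i$.

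Finally, I would invoke Hoeffding's inequality with $a=0$, $b=(1+2\varepsilon)\alpha$, and $p=M$, once for each tail direction, and take a union bound. This yields
\begin{align*}
\Pr\!\left[\Big|\tfrac{1}{k}\textstyle\sum_i X_i - M\Big| > \delta\right]
\leq 2 \exp\!\left(-\frac{2k\delta^2}{(1+2\varepsilon)^2 \alpha^2}\right),
\end{align*}
which is exactly the claimed bound. There is no real obstacle: the only subtlety is the range bound, which is handled by using the elementary inequality $2^{\varepsilon} \leq 1 + 2\varepsilon$ on $[0,1]$.
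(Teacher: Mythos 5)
Your proposal is correct and follows essentially the same route as the paper: compute $\Exp[X_i]=M$, bound the range by $X_i\in[0,(1+2\varepsilon)\alpha]$ via $2^{\varepsilon}\leq 1+2\varepsilon$, and apply Hoeffding's inequality (Lemma~\ref{lem:HoeffdingBound}) to both tails. The only difference is that you spell out the justification of $2^{\varepsilon}\leq 1+\varepsilon$ on $[0,1]$, which the paper leaves implicit.
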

\begin{proof}
	For any $i \in [k]$ we find 
	$\Exp_{y_i \in \{0,1\}^m}[X_i] = M$, and since
	$2^{-(u'(y_i)+1)\varepsilon} 
												 < \alpha 2^{-m}$
	implies that 
	$
		2^m\cdot 2^{-u'(y_i)\varepsilon} 
		= 2^m \cdot 2^{\varepsilon}\cdot 2^{-(u'(y_i)+1)\varepsilon}
		\leq (1+2\varepsilon)\alpha
	$,
	we have $X_i \in [0,(1+2\varepsilon)\alpha]$. 
	As 
	$$
		\Exp_{y_1, \ldots, y_k}[\frac{1}{k}\sum_{i\in [k]}X_i] 
		= \Exp_{y_1}[X_1] = M,
	$$
	the Hoeffding bound (Lemma~\ref{lem:HoeffdingBound}) gives the claim.
\end{proof}	

\begin{proof}[Proof of Lemma~\ref{lem:HighProbEstimates} (i)]
	Let $Y_i$ be the indicator variable for the
	event $y_i \in V$. Then $\Pr[Y_i =1] = \gUY$, 
	we have $|\cY'| = \sum_{i\in [k]}Y_i$, and so the Chernoff 
	bound (Lemma~\ref{lem:ChernoffBound}) gives	
	\begin{align*}
		\Pr_{y_1, \ldots, y_k}
			\left[\frac{|\cY'|}{k} \notin [\gUY \pm \varepsilon]
						\right]
		 =
		 \Pr\left
		[\sum_{i\in [k]} Y_i \notin [\gUY\pm\varepsilon]k\right]
			\leq 2\exp(-\frac{\varepsilon^2k}{2}) \leq \varepsilon.
	\end{align*}
\end{proof}
				
\begin{proof}[Proof of Proof of Lemma~\ref{lem:HighProbEstimates} (ii)]
	Let $X_i$ be the indicator variable for the
	event $y_i \in \{y: 2^{-(u(y)+1)\varepsilon} \geq 
	\alpha 2^{-m}\}$. We first show that $p:=\Pr[X_i = 1]$ is
	close to $\gUH$. 
	Using Claim~\ref{claim:cDandUIntervals}, we get
	\begin{align*}
		p &= 
		\frac{|\{y: 2^{-(u(y)+1)\varepsilon} \geq \alpha 2^{-m
			 }\}| }{2^m}
		\leq		
		\frac{|\{y: \Pd^C(y) \geq \alpha 2^{-m}\}|}{2^m} = \gUH,\\
		\gUH &\leq 
		\frac{|\{y: 2^{-u(y)\varepsilon} \geq \alpha 2^{-m
			 }\}| }{2^m}
		  \\
		& = \frac{|\{y: 2^{-(u(y)+1)\varepsilon} \geq \alpha 2^{-m
			 }\}| }{2^m} +
			\frac{|\{y: 2^{-u(y)\varepsilon} \geq \alpha 2^{-m
			 } > 2^{-(u(y)+1)\varepsilon}\}|}{2^m} \\
		&\leq
		 p + 2\sqrt{\varepsilon},
	\end{align*}
	where we applied Claim~\ref{claim:FewUniformInThreshold} to 
	obtain the last inequality.
	This shows that 
	\begin{align}
		\label{eqnp:4}
		p \in [\gUH-2\sqrt{\varepsilon}, \gUH].
	\end{align}	
	
	Now
	we have $|\cH'| = \sum_{i\in [k]}X_i$, and so the 
	Chernoff bound (Lemma~\ref{lem:ChernoffBound}) gives
	\begin{align}
		\Pr_{y_1, \ldots, y_k}
			\left[\frac{|\cH'|}{k} \notin [p \pm \varepsilon]\right]
		 =
		 \Pr\left[\sum_{i\in [k]} X_i \notin [p \pm \varepsilon] 
		k\right]
			\leq 2\cdot \exp(-\frac{\varepsilon^2k}{2}) 
			\leq \varepsilon.
	\end{align}
	Plugging (\ref{eqnp:4}) into the above gives the claim.
\end{proof}
	
\begin{proof}[Proof of Proof of Lemma~\ref{lem:HighProbEstimates} (iii)]
	Let $y$ denote a bitstring in $\{0,1\}^m$.
	First note that
	\begin{align}
		1-\gH 
		&= 1- \sum_{y: \Pd^C(y) \geq \alpha 2^{-m}}
					\Pd^C(y) 
		= \sum_{y: \Pd^C(y) < \alpha 2^{-m}}
					\Pd^C(y) \\
		&= \sum_{y: \Pd^C(y) < 2^{\varepsilon} \alpha 2^{-m}}
					\Pd^C(y)
			-
			\underbrace{\sum_{y: \Pd^C(y) \in [\alpha 2^{-m},
														 2^{\varepsilon} \alpha 2^{-m})} \Pd^C(y)}_{\in [0, \sqrt{\varepsilon}]} \label{eqnp:6}
\end{align}
The above sum is indeed in $[0,\sqrt{\varepsilon}]$, as
	$[\alpha 2^{-m}, 2^{\varepsilon} \alpha2^{-m}) \subseteq
	 (1\pm 4\varepsilon) \alpha 2^{-m}$, 
and thus we can use assumption (\ref{eqnp:5}).
Using Claim~\ref{claim:cDandUIntervals}, we find
\begin{align*}
		&\sum_{y: 2^{-(u'(y)+1)\varepsilon} < \alpha 2^{-m}}
					\Pd^C(y) 
		\geq
		\sum_{y: \Pd^C(y) < \alpha 2^{-m}}
					\Pd^C(y) 
		=1-\gH \\
		&\qquad \stackrel{\text{(\ref{eqnp:6})}}{\geq}
		\sum_{y: \Pd^C(y) < 2^{\varepsilon} \alpha 2^{-m}}
					\Pd^C(y)
		- \sqrt{\varepsilon}
		\geq 
		\sum_{y: 2^{-u'(y)\varepsilon} 
						 < 2^{\varepsilon} \alpha 2^{-m}}
					\Pd^C(y)
		- \sqrt{\varepsilon}\\
		&\qquad =
		\sum_{y: 2^{-(u'(y)+1)\varepsilon} 
						 < \alpha 2^{-m}}
					\Pd^C(y)
		- \sqrt{\varepsilon}.	
\end{align*}
This implies that $1-\gH$ is in the interval
\begin{align*}
 \left[
		\sum_{y: 2^{-(u'(y)+1)\varepsilon} < \alpha 2^{-m}}
		\Pd^C(y) \pm \sqrt{\varepsilon} \right]
 \subseteq \Bigl[
		2^{\pm \varepsilon}
		\underbrace{\sum_{y: 2^{-(u'(y)+1)\varepsilon} < \alpha 2^{-m}}
		2^{-u'(y)\varepsilon}}_{=:M}\pm \sqrt{\varepsilon}\Bigr], 
\end{align*}
where the last step above follows by Claim~\ref{claim:cDandUIntervals}. From this we get
\begin{align*}
	M &\leq 2^{\varepsilon}(1-\gH)
				 +2^{\varepsilon}\sqrt{\varepsilon}
		\leq (1+2\varepsilon)(1-\gH)+ 2\sqrt{\varepsilon}\\
		&\leq (1-\gH) + 2\varepsilon + 2\sqrt{\varepsilon} 
		\leq (1-\gH) + 4\sqrt{\varepsilon}, \\
	M &\geq 2^{-\varepsilon}(1-\gH) 
				 -2^{-\varepsilon}\sqrt{\varepsilon}
		\geq (1-2\varepsilon)(1-\gH) - \sqrt{\varepsilon}\\
		&\geq (1-\gH)- 2\varepsilon - \sqrt{\varepsilon}
	  \geq (1-\gH)- 3\sqrt{\varepsilon},
\end{align*}
and thus 
\begin{align}
	M\in [(1-\gH)\pm 4\sqrt{\varepsilon}].
	\label{eqnp:49}
\end{align}

	Applying Claim~\ref{claim:AppliedHoeffding}
	to $M$ as defined above for $\cA = \{0,1\}^m$, we get that
	$$
		\Pr_{y_1, \ldots, y_k}
		\left[
			\frac{1}{k} \sum_{i \in \cL'}2^m 2^{-u'(i)\varepsilon}
			\notin
			\left[
				M	\pm \varepsilon
			\right]
		\right] \leq 2\exp\left(-\frac{2k\varepsilon^2}{(1+2\varepsilon)^2\alpha^2}\right) \leq \varepsilon.
	$$
By (\ref{eqnp:49}), we have 
$[M\pm \varepsilon] \subseteq [1-\gH \pm 5\sqrt{\varepsilon}]$,
which gives the claim.
\end{proof}

\begin{proof}[Proof of Proof of Lemma~\ref{lem:HighProbEstimates} (iv)]
The proof is analogous to the proof of (iii): we find
	\begin{align}
		\gYL &= \sum_{y: \Pd^C(y) < \alpha 2^{-m} \land y \in V} 
				\Pd^C(y) 
			\in \Bigl[
	 2^{\pm \varepsilon}
			\sum_{y: 2^{-(u(y)+1)\varepsilon} 
					  < \alpha 2^{-m} \land y \in V}
					2^{-u(y)\varepsilon}\pm \sqrt{\varepsilon}\Bigr],
			\label{eqnp:2}
	\end{align}	
	where the $\sqrt{\varepsilon}$ deviation can be seen 
	as in (iii), since the sum here has only less summands.	
	Thus, applying Claim~\ref{claim:AppliedHoeffding}
	to the sum in (\ref{eqnp:2}) for 
	$\cA = \{y: y\in V\}$ gives the claim.
\end{proof}
\begin{proof}[Proof of Lemma~\ref{lem:HighProbEstimates} (v)]
	Claim~\ref{claim:FewUniformInThreshold} gives that
	$|\cM| \leq \frac{2\sqrt{\varepsilon}\cdot 2^m}{\alpha}$.
	Let $X_i$ be the indicator random variable for the event
	$y_i \in \cM$. Then $|\cS \cap \cM| 
	= \sum_{i\in [k]}X_i$, and $p:=\Exp_{y_i}[X_i] \leq 
	\frac{2\sqrt{\varepsilon}}{\alpha}$. The Chernoff	bound (Lemma~\ref{lem:ChernoffBound}) gives
	\begin{align*}
		\Pr_{\cS}
				\left[
					\frac{|\cS \cap \cM|}{k} \geq \frac{3\sqrt{\varepsilon}}{\alpha}
				\right]
		&\leq 
			\Pr_{\cS}
				\left[
					\frac{|\cS \cap \cM|}{k} \geq p + \frac{
					\sqrt{\varepsilon}}{\alpha}
				\right]\\
		&= \Pr_{\cS}
				\left[
					\sum_{i\in [k]}X_i \geq (p + \frac{
					\sqrt{\varepsilon}}{\alpha})k
				\right]
		\leq \exp\left(-\frac{\varepsilon k}{2 \alpha^2}\right)
		\leq \varepsilon.\qedhere
	\end{align*}
\end{proof}

\subsection{Proof of Soundness: the Details}
\label{sec:PublicCoinHidingSoundnessDetails}

\begin{proof}[Proof of Claim~\ref{claim:GainAndLoss}]
	By definition of $\Gain$ and $\Loss$, the right hand side
	equals
	\begin{align*}
		&\frac{1}{k}\sum_{i\in \cA}2^m 2^{-u'(i)\varepsilon}
		+\frac{1}{k} \sum_{i \in \cA: u'(i)>u(i)} 2^m (2^{-u(i)\varepsilon}-2^{-u'(i)\varepsilon})\\
		&\qquad\qquad\qquad-\frac{1}{k} \sum_{i \in \cA: u'(i)<u(i)} 2^m (2^{-u'(i)\varepsilon}-2^{-u(i)\varepsilon}) \\
		&\quad 
			= \frac{2^m}{k}\Bigl(
				\sum_{i\in \cA} 2^{-u'(i)\varepsilon}
				+ \sum_{i \in \cA: u'(i)>u(i)} 2^{-u(i)\varepsilon}
				- \sum_{i \in \cA: u'(i)>u(i)} 2^{-u'(i)\varepsilon}\\
				&\qquad \qquad \qquad \qquad \quad \,\,
				- \sum_{i \in \cA: u'(i)<u(i)} 2^{-u'(i)\varepsilon}
				+ \sum_{i \in \cA: u'(i)<u(i)} 2^{-u(i)\varepsilon} 
			\Bigr) \\
		&\quad 
			= \frac{2^m}{k}\Bigl(
			\sum_{i\in \cA} 2^{-u'(i)\varepsilon}
				+ \sum_{i \in \cA: u'(i) \neq u(i)} 
					2^{-u(i)\varepsilon}
				- \sum_{i \in \cA: u'(i) \neq u(i)} 
					2^{-u'(i)\varepsilon}
			\Bigr) \\
		&\quad
		  =	\frac{1}{k}\sum_{i\in \cA}2^m 2^{-u(i)\varepsilon}.
			\qedhere
	\end{align*}
\end{proof}

\begin{proof}[Proof of Claim~\ref{claim:SoundnessClaimsGivenHighProbEst} (i)]
	%If $u'(i)=\infty$, then $|C^{-1}(y_i)| = 0$ and thus
	%the verifier rejects (with probability $1$) 
	%in the lower bound protocol if $u(i) \neq \infty$. 
	In case $u'(i) = \infty$, the claim follows by 
	(\ref{eqnp:9p}). So suppose $u'(i) < \infty$. 
	Towards a contradiction assume $u(i) \leq u'(i)-2$.	
	Then
	\begin{align*}
		|C^{-1}(y_i)| &= 2^m \Pd^C(y_i) 
			\leq 2^m\cdot 2^{-u'(i)\varepsilon}
			\leq 2^m\cdot 2^{-(u(i)+2)\varepsilon}\\
			&=2^{-\varepsilon}\cdot 2^m\cdot 2^{-(u(i)+1)\varepsilon
			}
			\leq (1-\varepsilon/2)\cdot 2^m\cdot 
					2^{-(u(i)+1)\varepsilon},
	\end{align*}
	where the first inequality follows by 
	Claim~\ref{claim:cDandUIntervals}. As $u(i) < \infty$ by
	assumption, this contradicts (\ref{eqnp:9}).
\end{proof}

\begin{proof}[Proof of Claim~\ref{claim:SoundnessClaimsGivenHighProbEst} (ii)]
	By (\ref{eqnp:10}), we have 
	$|(\cL' \setminus \cL) \cap \cM| \leq \frac{3k\sqrt{\varepsilon}}{\alpha}$. 
	We show that $(\cL' \setminus \cL) \cap \overline{\cM} =
	\emptyset$. Towards a contradiction assume there is some
	 $y_i \in (\cL' \setminus \cL) \cap \overline{\cM}$. We have
	\begin{align}
		&y_i \notin \cM \implies \Pd^C(y) \notin 
				(1\pm 4\varepsilon)\alpha 2^{-m},
				\label{eqnp:47}\\
		&y_i \in \cL' \implies 
				2^{-(u'(i)+1)\varepsilon} < \alpha 2^{-m},
				\label{eqnp:48} \\
		&y_i \notin \cL \implies 2^{-(u(i)+1)\varepsilon} 
				\geq \alpha 2^{-m},
				\label{eqnp:18} \\
		&\Pd^C(y_i) 
			\stackrel{\text{Claim~\ref{claim:cDandUIntervals}}}
				{\leq} 2^{-u'(i)\varepsilon} 
			= 2^{\varepsilon}2^{-(u'(i)+1)\varepsilon}
			\stackrel{\text{(\ref{eqnp:48})}}{\leq} 
				(1+2\varepsilon)\alpha 2^{-m}.
			\label{eqnp:16}
	\end{align}
	Now (\ref{eqnp:47}) and (\ref{eqnp:16}) give that
	\begin{align}
		\Pd^C(y_i) \leq (1-4\varepsilon)\alpha 2^{-m}.
		\label{eqnp:17}
	\end{align}
	Then we find
	\begin{align*}
		2^{-(u'(i) +1)\varepsilon} 
		&\stackrel{\text{Claim~\ref{claim:cDandUIntervals}}}{\leq} 
		\Pd^C(y_i) 
		\stackrel{\text{(\ref{eqnp:17})}}{\leq} 
		(1-4\varepsilon)\alpha 2^{-m}
		\stackrel{\text{(\ref{eqnp:18})}}{\leq} 
		(1-4\varepsilon)2^{-(u(i)+1)\varepsilon} \\
		&\leq 2^{-2\varepsilon} 2^{-(u(i)+1)\varepsilon} 
		= 2^{-(u(i)+3)\varepsilon},
	\end{align*}	
	which implies $u(i) < u'(i)-1$, and thus contradicts
	(i). 
\end{proof}

\begin{proof}[Proof of Claim~\ref{claim:SoundnessClaimsGivenHighProbEst} (iii)]	
	(\ref{eqnp:13}), condition (c), and 
	$\pUH \in [\gUH\pm 10\sqrt{\varepsilon}]$ give that
	\begin{align}
		|\cL'|/k &\in [1-\gUH \pm 3\sqrt{\varepsilon}], 
			\label{eqnp:19}\\
		|\cL|/k  &\in [1-\pUH  \pm 3\sqrt{\varepsilon}]
						\subseteq [1-\gUH  \pm 13\sqrt{\varepsilon}].
		  \label{eqnp:20}
	\end{align}	
	Now we find
	\begin{align}
		|\cL'\cap \cL| 
		= |\cL'| - |\cL'\setminus \cL|
		\stackrel{\text{(ii)}}{\geq} 
		|\cL'| - k\cdot 3\sqrt{\varepsilon}
		\stackrel{\text{(\ref{eqnp:19})}}{\geq} 
		k\cdot(1-\gUH-6 \sqrt{\varepsilon}),
		\label{eqnp:21}
	\end{align}
	and thus
	\begin{align*}
		|\cL \setminus \cL'| 
		&= |\cL| - |\cL \cap \cL'|
		\stackrel{\text{(\ref{eqnp:21})}}{\leq} 
		  |\cL| - k\cdot(1-\gUH-6 \sqrt{\varepsilon}) \\
		&\stackrel{\text{(\ref{eqnp:20})}}{\leq} 
		  k\cdot(1-\gUH + 13\sqrt{\varepsilon}) 
			 - k\cdot(1-\gUH-6 \sqrt{\varepsilon})
		= k\cdot 19\sqrt{\varepsilon}. \qedhere
	\end{align*}
\end{proof}

\begin{proof}[Proof of Claim~\ref{claim:SoundnessClaimsGivenHighProbEst} (iv)]
	By definition of $\cL$ and $\cL'$ we find
	\begin{align}
		\frac{1}{k}\sum_{i\in \cL' \setminus \cL}
										  2^m2^{-u'(i)\varepsilon}
		&\leq 
		\frac{1}{k}|\cL'\setminus\cL| 2^m 2^{\varepsilon}
							\alpha 2^{-m}
		\stackrel{\text{(ii)}}{\leq}
		\frac{1}{k}k\cdot 2^{\varepsilon} 
				\cdot 3\sqrt{\varepsilon}\alpha		\nonumber \\
		&=2^{\varepsilon}\cdot 3\sqrt{\varepsilon}\alpha
		\leq 6\sqrt{\varepsilon}\alpha,
		\label{eqnp:22}
		\\
		\frac{1}{k}\sum_{i\in \cL \setminus \cL'}
										  2^m2^{-u(i)\varepsilon}
		&\leq 
		\frac{1}{k}|\cL\setminus\cL'|2^m 
				2^{\varepsilon} \alpha 2^{-m}
		\stackrel{\text{(iii)}}{\leq}
		\frac{1}{k}k\cdot 2^{\varepsilon} 
				\cdot 19\sqrt{\varepsilon}\alpha	\nonumber	\\
		&=2^{\varepsilon}\cdot 19\sqrt{\varepsilon}\alpha
		\leq 38 \sqrt{\varepsilon}\alpha.
		\label{eqnp:23}
	\end{align}
\end{proof}

\begin{proof}[Proof of Claim~\ref{claim:SoundnessClaimsGivenHighProbEst} (v)]
	We find
	\begin{align*}
		\frac{1}{k}\sum_{i\in \cL}2^m2^{-u(i)\varepsilon}
		&= \frac{1}{k}\sum_{i\in \cL \cap \cL'}
										  2^m2^{-u(i)\varepsilon}
		  +
			\frac{1}{k}\sum_{i\in \cL \setminus \cL'}
										  2^m2^{-u(i)\varepsilon}\\
	  &\stackrel{\text{(\ref{eqnp:23})}}{\in} 
				\frac{1}{k}\sum_{i\in \cL \cap \cL'}
												2^m2^{-u(i)\varepsilon}
				+ [0, 38\sqrt{\varepsilon}\alpha],
		\\
		\frac{1}{k}\sum_{i\in \cL'}2^m2^{-u'(i)\varepsilon}
		&= \frac{1}{k}\sum_{i\in \cL \cap \cL'}
										  2^m2^{-u'(i)\varepsilon}
		  +
			\frac{1}{k}\sum_{i\in \cL' \setminus \cL}
										  2^m2^{-u'(i)\varepsilon}\\
	  &\stackrel{\text{(\ref{eqnp:22})}}{\in} 
				\frac{1}{k}\sum_{i\in \cL \cap \cL'}
												2^m2^{-u'(i)\varepsilon}
				+ [0, 6\sqrt{\varepsilon}\alpha].
	\end{align*}
	Applying assumptions (d) and $\pH \in [\gH \pm \frac{4}{5}
	\sqrt{\varepsilon}]$
	to the first inclusion 
	and assumption (\ref{eqnp:14}) to the second
	inclusion	gives the claim.
\end{proof}

\begin{proof}[Proof of Claim~\ref{claim:SoundnessClaimsGivenHighProbEst} (vi)]
	Claim~\ref{claim:GainAndLoss} gives 
	\begin{align}
		\frac{1}{k}\sum_{i\in \cL\cap\cL'}2^m 2^{-u(i)\varepsilon}
		=
		\frac{1}{k}\sum_{i\in \cL\cap\cL'}2^m 2^{-u'(i)\varepsilon}
		+ \Gain_{\cL\cap\cL'}(u',u) - \Loss_{\cL\cap\cL'}(u',u).
		\label{eqnp:24}
	\end{align}
	Furthermore, (v) gives
	\begin{align}
		\frac{1}{k}\sum_{i\in \cL \cap \cL'}
					 2^m2^{-u(i)\varepsilon}
					\in 
					\left[\frac{1}{k}\sum_{i\in \cL \cap \cL'}
					 2^m2^{-u'(i)\varepsilon}\pm 55\sqrt{\varepsilon}
					\alpha\right]
					\label{eqnp:25}
	\end{align}
	Combining (\ref{eqnp:24}) and (\ref{eqnp:25}) then gives
	\begin{align}
		\Gain_{\cL\cap\cL'}(u',u) - \Loss_{\cL\cap\cL'}(u',u)
		\in [0\pm 55\sqrt{\varepsilon}\alpha]
		\label{eqnp:26}
	\end{align}
	Now we find
	\begin{align}
		\Gain_{\cL\cap\cL'}(u',u)
		&= \frac{1}{k} \sum_{i \in \cL\cap\cL': u'(i)>u(i)} 
				2^m (2^{-u(i)\varepsilon}-2^{-u'(i)\varepsilon}) 
				\nonumber \\
		&\stackrel{\text{(i)}}{\leq}
			\frac{1}{k} \sum_{i \in \cL\cap\cL': u'(i)>u(i)} 
				2^m (2^{-(u'(i)-1)\varepsilon}-2^{-u'(i)\varepsilon})
				\nonumber \\
		&= \frac{1}{k} \sum_{i \in \cL\cap\cL': u'(i)>u(i)} 
				2^m (2^{\varepsilon}\cdot 2^{-u'(i)\varepsilon}-
														2^{-u'(i)\varepsilon})
														\nonumber \\
		&= \frac{1}{k} \sum_{i \in \cL\cap\cL': u'(i)>u(i)} 
				2^m (2^{\varepsilon}-1) 2^{-u'(i)\varepsilon}
				\nonumber \\
		&\leq
		  2\varepsilon 
			\frac{1}{k} \sum_{i \in \cL\cap\cL': u'(i)>u(i)} 
				2^m 2^{-u'(i)\varepsilon}\nonumber \\
	  &\stackrel{\text{(\ref{eqnp:14})}}{\leq} 
			2\varepsilon(1-\gH +5\sqrt{\varepsilon})
		\leq 4 \varepsilon.
		\label{eqnp:27}
	\end{align}
	Now we get
	\begin{align*}
		\Loss_{\cL\cap\cL'}(u',u) 
		\stackrel{\text{(\ref{eqnp:26})}}{\leq }
		\Gain_{\cL\cap\cL'}(u',u) + 55\sqrt{\varepsilon}\alpha
		\stackrel{\text{(\ref{eqnp:27})}}{\leq }
		59\sqrt{\varepsilon}\alpha. 
	\end{align*}
\end{proof}

\begin{proof}[Proof of 
	Lemma~\ref{lem:SoundnessGivenHighProbEst}]
	Throughout the proof we will write (i)-(vi) to refer
	to the corresponding items of Claim~\ref{claim:SoundnessClaimsGivenHighProbEst}.
	Assumptions (a), (b) and (\ref{eqnp:11}), 
	imply that
	\begin{align}
		&\cY \subseteq \cY' \label{eqnp:28}\\
		&|\cY' \setminus \cY| \leq 2\varepsilon k,\label{eqnp:29}
	\end{align}
	and so we find
	\begin{align*}
		\frac{1}{k}\sum_{i\in \cL \cap \cL'\cap \cY'}
					 2^m2^{-u(i)\varepsilon}
		&\stackrel{\text{(\ref{eqnp:28})}}{\geq}	
		\frac{1}{k}\sum_{i\in \cL \cap \cL'\cap \cY}
					 2^m2^{-u(i)\varepsilon} \\
		&= \frac{1}{k}\sum_{i\in \cL \cap \cL'\cap \cY'}
					 2^m2^{-u(i)\varepsilon}
		  -
			\frac{1}{k}\sum_{i\in \cL \cap \cL'\cap (\cY'
			\setminus \cY)}
					 2^m2^{-u(i)\varepsilon} \\
		&\stackrel{(i\in \cL)}{\geq}	
			\frac{1}{k}\sum_{i\in \cL \cap \cL'\cap \cY'}
					 2^m2^{-u(i)\varepsilon}
		  -\frac{1}{k} |\cY' \setminus \cY| 2^m 
				2^{\varepsilon}\alpha 2^{-m}\\
		&\stackrel{\text{(\ref{eqnp:29})}}{\geq}	
			\frac{1}{k}\sum_{i\in \cL \cap \cL'\cap \cY'}
					 2^m 2^{-u(i)\varepsilon}
		  - 4\varepsilon \alpha.
	\end{align*}
	This gives
	\begin{align}
		\frac{1}{k}\sum_{i\in \cL \cap \cL'\cap \cY}
					 2^m2^{-u(i)\varepsilon}
		\in [\frac{1}{k}\sum_{i\in \cL \cap \cL'\cap \cY'}
					 2^m 2^{-u(i)\varepsilon} \pm 4\varepsilon \alpha].
		\label{eqnp:30}
	\end{align}
	We find
	\begin{align}
		\frac{1}{k}\sum_{i\in \cL \cap \cY}2^m2^{-u(i)\varepsilon}
		&= \frac{1}{k}\sum_{i\in \cL \cap \cL' \cap \cY}
										  2^m2^{-u(i)\varepsilon}
		  +
			\frac{1}{k}\sum_{i\in (\cL \setminus \cL') \cap \cY}
										  2^m2^{-u(i)\varepsilon}\nonumber\\
	  &\stackrel{\text{(iv)}}{\in} 
				\frac{1}{k}\sum_{i\in \cL \cap \cL' \cap \cY}
												2^m2^{-u(i)\varepsilon}
				+ [0, 38\sqrt{\varepsilon}\alpha],
		\label{eqnp:31}
		\\
		\frac{1}{k}\sum_{i\in \cL'\cap \cY'}2^m2^{-u'(i)\varepsilon}
		&= \frac{1}{k}\sum_{i\in \cL \cap \cL'\cap \cY'}
										  2^m2^{-u'(i)\varepsilon}
		  +
			\frac{1}{k}\sum_{i\in (\cL' \setminus \cL) \cap \cY'}
										  2^m2^{-u'(i)\varepsilon}\nonumber \\
	  &\stackrel{\text{(iv)}}{\in} 
				\frac{1}{k}\sum_{i\in \cL \cap \cL'\cap \cY'}
												2^m2^{-u'(i)\varepsilon}
				+ [0, 6\sqrt{\varepsilon}\alpha]
				\label{eqnp:32},
	\end{align}
	where we could apply (iv) because 
	$(\cL \setminus \cL') \cap \cY \subseteq \cL \setminus \cL'
	$ and 
	$(\cL' \setminus \cL) \cap \cY' \subseteq \cL' \setminus 
	\cL$.
	Claim~\ref{claim:GainAndLoss} gives 
	\begin{align}
		\frac{1}{k}\sum_{i\in \cL\cap\cL'\cap \cY'}
					2^m 2^{-u(i)\varepsilon}\nonumber
		=
		\frac{1}{k}\sum_{i\in \cL\cap\cL'\cap \cY'}
					2^m 2^{-u'(i)\varepsilon}
		&+ \Gain_{\cL\cap\cL'\cap \cY'}(u',u) \\
		&- \Loss_{\cL\cap\cL'\cap \cY'}(u',u),\label{eqnp:33}
	\end{align}
	and we get
	\begin{align}
		\Gain_{\cL\cap\cL'\cap \cY'}(u',u)
		&\leq
		\Gain_{\cL\cap\cL'}(u',u)
		\stackrel{\text{(vi)}}{\leq} 4\varepsilon, 
		\label{eqnp:34}\\
		\Loss_{\cL\cap\cL'\cap \cY'}(u',u)
		&\leq
		\Loss_{\cL\cap\cL'}(u',u)
		\stackrel{\text{(vi)}}{\leq} 59\sqrt{\varepsilon}\alpha.
		\label{eqnp:35}
	\end{align}
	Thus we find
	\begin{align*}
		\frac{1}{k}\sum_{i\in \cL \cap \cY}2^m2^{-u(i)\varepsilon}
		&\stackrel{\text{(\ref{eqnp:31})}}{\in} 
		[\frac{1}{k}\sum_{i\in \cL \cap\cL'\cap \cY}
					2^m2^{-u(i)\varepsilon} \pm 38\sqrt{\varepsilon}
					\alpha] \\
		&\stackrel{\text{(\ref{eqnp:30})}}{\subseteq} 
		[\frac{1}{k}\sum_{i\in \cL \cap\cL'\cap \cY'}
					2^m2^{-u(i)\varepsilon} \pm 42\sqrt{\varepsilon}
					\alpha]\\
		&\stackrel{\text{(\ref{eqnp:33}), (\ref{eqnp:34}),
							       (\ref{eqnp:35})}}{\subseteq} 
		[\frac{1}{k}\sum_{i\in \cL \cap\cL'\cap \cY'}
					2^m2^{-u'(i)\varepsilon} \pm 101\sqrt{\varepsilon}
					\alpha]\\
		&\stackrel{\text{(\ref{eqnp:32})}}{\subseteq} 
		[\frac{1}{k}\sum_{i\in \cL'\cap \cY'}
					2^m2^{-u'(i)\varepsilon} \pm 107\sqrt{\varepsilon}
					\alpha]\\
		&\stackrel{\text{(\ref{eqnp:15})}}{\subseteq} 
		[\gYL \pm 112\sqrt{\varepsilon}
					\alpha].
	\end{align*}
	This concludes the proof.	
\end{proof}

\bibliographystyle{alpha}
\bibliography{bibliography}

\end{document}